\definecolor{orangee}{RGB}{233, 116, 81}
\definecolor{green}{rgb}{0,0.5,0}
\definecolor{slateblue}{HTML}{8a804d}
\definecolor{iris}{HTML}{8a804d}
\newtcbox{\feedback}{nobeforeafter,colframe=black,colback=white,boxrule=0.5pt,arc=2pt,
  boxsep=0pt,left=2pt,right=2pt,top=2pt,bottom=2pt,tcbox raise base}
\newtheorem{thm}{Theorem}[section]
\newtheorem{asm}{Assumption}
\newtheorem*{prop*}{Proposition}
\newtheorem{lem}{Lemma}[section]
\theoremstyle{definition}
\newtheorem*{rem*}{Remark}
\newtheorem{defn}{Definition}
\newtheorem{algorithm}{Algorithm}
\newtheorem{lemma}{Lemma}
\newtheorem*{not*}{Notations}
\newtheorem*{notdef*}{Notations and Definitions}
\newcommand{\re}{\mathbb{R}}
\newcommand{\calf}{\mathcal{F}}
\newcommand{\xh}{\boldsymbol{x}_h}
\newcommand{\yh}{\boldsymbol{y}_h}
\newcommand{\uh}{\boldsymbol{u}_h}
\newcommand{\vh}{\boldsymbol{v}_h}
\newcommand{\fg}{\mathfrak{g}}
\newcommand{\fm}{\mathfrak{m}}
\newcommand{\Wh}{W_h}
\newcommand{\norm}[1]{\left\lVert#1\right\rVert}
\newcommand{\sumt}{\sum_{t=1}^T}
\newcommand{\wv}{w_{t,j}v_{t,h}}
\newcommand{\ww}{w_{t,j}w_{t,k}}
\newcommand{\vu}{v_{t,h} u_t}
\newcommand{\ex}[1]{E\left[#1\right]}
\newcommand{\maxjk}{\max_{1\le j,k\le p,j\neq k}}
\newcommand{\abs}[1]{\left|#1\right|}
\newcommand{\paren}[1]{\left(#1\right)}
\DeclareRobustCommand{\frac}[3][0pt]{%
  {\begingroup\hspace{#1}#2\hspace{#1}\endgroup\over\hspace{#1}#3\hspace{#1}}}
\newcommand{\cov}[1]{\text{cov}\left(#1\right)}
\newcommand{\wt}{\boldsymbol{w}_t}
\newcommand{\betanh}{\beta_{-h}}
\newcommand{\bw}{\boldsymbol{w}}
\newcommand{\tildex}{\tilde{x}_t}
\numberwithin{equation}{section}
\title{Local Projections Inference with High-dimensional Covariates without Sparsity}
\author{
Jooyoung Cha\thanks{Jooyoung Cha: \texttt{jooyoung.cha@vanderbilt.edu}.
Department of Economics,
Vanderbilt University}
\thanks{I am grateful to Yuya Sasaki, Atsushi Inoue, Harold D. Chiang, Ke-Li Xu, and Vadim Marmer for their advice and guidance.
I also thank Chuanping Sun and participants of the New York Camp Econometrics XVIII, SETA 2024 meeting, IAAE 2024 China, IAAE 2024 Greece, and Econometric Study Group Bristol for their helpful comments and suggestions.}
} 
\date{\vspace*{-.2cm}\today
}
\begin{document}
\maketitle
\vspace*{-.6cm}
\begin{abstract}
    This paper presents a comprehensive local projections (LP) framework for estimating future responses to current shocks, robust to high-dimensional controls without relying on sparsity assumptions.
    The approach is applicable to various settings, including impulse response analysis and difference-in-differences (DiD) estimation. 
    While methods like LASSO exist, they often rely on sparsity assumptions---most parameters are exactly zero, limiting their effectiveness in dense data generation processes.
    I propose a novel technique incorporating high-dimensional covariates in local projections using the Orthogonal Greedy Algorithm with a high-dimensional AIC (OGA+HDAIC) model selection method.
    This approach offers robustness in both sparse and dense scenarios, improved interpretability, and more reliable causal inference in local projections.
    Simulation studies show superior performance in dense and persistent scenarios compared to conventional LP and LASSO-based approaches. 
    In the empirical application, applying the proposed method to \cite{acemoglu2019democracy}, I show efficiency gains and robustness to a large set of controls. 
    Additionally, I examine the effect of subjective beliefs on economic aggregates, demonstrating robustness to various model specifications. A novel state-dependent analysis reveals that inflation behaves more in line with rational expectations in good states, but exhibits more subjective, pessimistic dynamics in bad states.
\end{abstract}



\clearpage
\section{Introduction}

Impulse responses analyze how a shock, such as a change in interest rates, affects various aspects of the economy over time.
Knowing how different parts of the economy respond to shocks, such as policy changes or natural shocks, helps policymakers make informed decisions and businesses prepare for potential impacts.
In recent years, local projections (LPs), introduced by \cite{jorda2005estimation}, have gained considerable attention as a flexible alternative to traditional Vector Autoregression (VAR) models. LPs directly estimate the relationship between shocks and future outcomes, offering simplicity and intuitive interpretability.

As the use of LPs has grown, so has the recognition of the need for high-dimensional controls in these models. 
While methods like LASSO have been proposed to handle high-dimensional covariates, they heavily rely on sparsity assumptions—--the idea that most parameters are exactly zero. Recent work by \cite{adamek2022lasso,adamek2023_hdlp} proposed using a debiased or desparsified LASSO in time series and local projections with weaker form of sparsity, but their approach still faces challenges in scenarios with dense data generating processes (DGPs).
Recent studies by \cite{giannone2021illusion} and \cite{kolesar2023fragility} have raised significant concerns about the validity and reliability of these assumptions in economic models, challenging their applicability in many empirical settings.

This paper presents a novel approach to local projections that incorporates high-dimensional covariates without relying on sparsity constraints. The proposed method adapts the Orthogonal Greedy Algorithm with a High-Dimensional Akaike Information Criterion (OGA+HDAIC), originally developed by \cite{ing2020}, to the context of local projections. 
In contrast to LASSO-based methods, the proposed approach allows parameters to be nonzero but assumes they decay toward zero as dimensionality grows. This assumption is more flexible and is automatically satisfied by autoregressive processes, making it especially suitable for time series settings.
This approach offers a more versatile framework that can handle both sparse and dense scenarios, which is ideal for economic time series analysis.

The Orthogonal Greedy Algorithm proposed in this paper offers several advantages over existing methods, particularly in terms of interpretability. Unlike Principal Component methods (\citealp*{stock2002forecasting}), the proposed method prioritizes variables based on their cross-sectional explanatory power. This feature not only enhances the model's interpretability but also provides valuable insights into the most influential covariates in impulse response dynamics. Such information is crucial for economists and policymakers seeking to understand shock propagation mechanisms in the economy. 
Recent work by \cite{dinh2024random} has applied random subspace methods in this domain, sharing the motivation to address high-dimensional controls in local projections. While their approach offers empirical insights, the proposed method provides a statistical inference theory in addition to practical applicability.

From a theoretical perspective, this paper contributes to the literature by extending the OGA+HDAIC method of \cite{ing2020} to the context of local projections. My coauthors and I previously adapted this method to a cross-sectional double machine learning framework (\citealp{cha2023inference}). In the current paper, I further develop this approach for the time series context by incorporating explicit assumptions on the dependence structure, using the near epoch dependence (NED) assumption as employed by \cite{adamek2022lasso, adamek2023_hdlp}.
A key advantage of using the NED assumption is that it inherits the mixingale properties that I employ in deriving the error bounds.\footnote{I have benefited enormously from the comprehensive foundation of dependence concepts and properties in the exquisite textbook \cite{davidson1994stochastic}.}
I leverage the triplex inequality from \cite{jiang2009uniform} in conjunction with the results in \cite{ing2020} to derive the inference of the proposed method. In deriving the inference, I incorporate a double selection method (\citealp{bch2013res}), which effectively addresses potential overfitting issues in high-dimensional dynamics involving lagged covariates and outcomes.
This framework enables the derivation of error bounds and asymptotic Gaussianity, providing both theoretical guarantees and practical implementation strategies.

Building on the solid statistical background, the proposed method's ability to handle high-dimensional controls has important implications for identification, addressing several key challenges. 
From a causal identification perspective, \cite{angrist2018_lpstringtheory} proposed interpreting local projections as causal parameters, with the key identifying assumption being unconfoundedness. The inclusion of large sets of covariates, which the proposed method efficiently manages, enhances the plausibility of this assumption (\citealp{damour2021overlap}; \citealp{rosenbaum2002overt}). 
By incorporating a comprehensive set of potential confounders, we can more confidently assume unconfoundedness, strengthening the causal interpretation of the estimates.

In terms of identifying structural impulse responses, the proposed method addresses the invertibility condition by essentially solving an omitted variable bias problem (\citealp{STOCKwatsonchap8}, p. 450). 
When relying on the LP-VAR equivalence results from \cite{plagborg2021local}, we must consider the bias term arising from finite lag approximations. This bias decreases as the number of included lags increases, further justifying the use of high-dimensional controls. 

In a Difference-in-Differences (DiD) framework, \cite{dube2023local} extend local projections to estimate DiD parameters, with the parallel trends assumption the key assumption for identification. Building on this identification scheme, my framework introduces high-dimensional controls, which greatly enhance the credibility of the conditional parallel trends assumption. As highlighted by \cite{heckman1997matching,heckman1998characterizing}, factors like demographic differences or regional economic conditions can lead to non-parallel trends, introducing bias into traditional DiD estimates. By incorporating a rich set of covariates, high-dimensional methods address these potential biases, making the conditional parallel trends assumption more plausible.

By addressing these identification challenges through the inclusion of high-dimensional controls, the proposed method enhances their interpretability and potential for causal inference. 
In the following Section \ref{sec:overview},
I will demonstrate how these high-dimensional methods are applied to various identification schemes, including structural impulse response estimation and DiD approaches, providing a comprehensive framework for modern empirical analysis.

To demonstrate the effectiveness of this approach, I conduct illustrative simulation studies comparing the proposed method with conventional LP and LASSO-based approaches. For the simulation design, I adapt from \cite{MontielMikkel2021_ecta_lpinference}, where they proposed lag augmentation---adding one more lag as controls than the true autoregressive model suggests, to achieve robust inference with persistent data a longer horizons. I adopt lag augmentation for all the methods to see if such robust inference holds in finite samples. 
Simulation results suggest that the proposed method performs consistently well, especially in dense and more persistent scenarios. This is particularly relevant for macroeconomic applications where data often exhibit high persistence and complex interdependencies.

In empirical applications, I apply the proposed method to the followning two studies. First, I revisit the analysis of \cite{acemoglu2019democracy} on the causal effects of democracy on economic growth. The proposed method demonstrates significant efficiency gains over the conventional local projections approach. It also shows robustness to high-dimensional controls. This application highlights the method's practical use in empirical analysis, where high-dimensional confounders play a critical role.

Next, I reexamine the study by \cite{bhandari2024survey}, which explores how subjective beliefs, particularly pessimism, influence macroeconomic aggregates like inflation and unemployment. The proposed method shows robustness even as model complexity increases, particularly when incorporating a moderate number of variables with extended lags. This robustness across varying model specifications is crucial in empirical analysis, where the true underlying model structure is often unknown. Furthermore, I introduce a novel state-dependent analysis that distinguishes between good and bad economic conditions. The results reveal that in good states, inflation follows a pattern consistent with the rational behavior of firms. In contrast, in bad states, inflation rises due to firms acting on subjective, pessimistic expectations of future costs. These state-dependent findings provide new insights into the role of belief shocks in shaping macroeconomic outcomes across different economic environments.

The remainder of this paper is organized as follows. Section \ref{sec:overview} presents the notations and implementation details of the proposed method with two main applications for the identification schemes. Section \ref{sec:illustration} illustrates its finite sample performance through simulation studies. Section \ref{sec:theory} outlines the theoretical framework, including definitions, assumptions, and main results. Section \ref{sec:applications} presents two empirical applications: one examining the effect of democratization on GDP growth, and the other investigating the impact of pessimism on macroeconomic aggregates. Finally, Section \ref{sec:conclusion} includes concluding remarks.

\section{Overview of the Method}\label{sec:overview}
I will first introduce a general estimation framework and then provide further details on the estimation procedures in the latter part of this section. Consider the following $h-$step ahead local projection regression model
\begin{align}\label{eq:lpequation_slowfast}
    y_{t+h} = \beta_h x_t + \delta_{h,0}' \boldsymbol{r}_{t} + \sum_{\ell=1}^L \delta_{h,\ell}' \boldsymbol{z}_{t-\ell} + u_{t,h},
\end{align}
where $y_{t+h}$ is the $h$-step ahead response variable, $x_t$ the innovation, $\boldsymbol{r}_{t}$ are contemporaneous controls, and $\boldsymbol{z}_{t}$ are lagged controls. 
The index $t=1,\dots,\bar{T}$ is an index for the observations, $h=1,\dots,H_{\max}$ is an index for horizons, and $\ell=1,\dots,L$ is the number of lags included in the model. 
This is a general representation of local projections in the literature, as introduced in \cite{plagborg2021local}.
Following the spirit of \cite{jorda2005estimation}, I fix $h$ and focus on each horizon of interest.

We are interested in the response of $y_{t+h}$ with respect to a shock in $x_t$, and our parameter of interest $\beta_h$, is defined as
\begin{align*}
    \beta_h = E\left[y_{t+h}|x_t = 1, \boldsymbol{r}_{t}, \{\boldsymbol{z}_{t-\ell}\}_{\ell=1}^L\right] - E\left[y_{t+h}|x_t = 0, \boldsymbol{r}_{t}, \{\boldsymbol{z}_{t-\ell}\}_{\ell=1}^L\right].
\end{align*}
The parameter can be interpreted as impulse responses or treatment effects according to different identification assumptions. Below I introduce two applications which include the identification schemes.

\subsection{Applications}
\subsubsection*{Impulse response analysis with local projections}
As was originally proposed by \cite{jorda2005estimation}, local projections are tools to estimate impulse responses with correctly specified controls.
\cite{plagborg2021local} present a detailed review on how the LPs can be used to estimate the structural impulse responses, using their equaivalence results between the VAR and LPs. I will briefly introduce this approach following Section 3.1. of \cite{plagborg2021local}.

Denote the data as $w_t = (\textbf{r}_t', x_t, y_t, \textbf{z}_{t}')'$ with the dimension of $\textbf{r}_t$ as $n_r$.
Consider the following Structural Vector Moving Average (SVMA) as the DGP
\begin{align*}
    w_t = \mu + \Theta(L)\epsilon_t, \quad \Theta(L) = \sum_{\ell=0}^{\infty}\Theta_{\ell} L^{\ell},
\end{align*}
and assume normality for the structural shocks: $\epsilon_t \overset{i.i.d.}{\sim} N(0,I_{n_\epsilon})$.
Under some regularity conditions, the $(i,j)$th element of $\Theta_{i,j,h}$ is the impulse response of variable $i$-th element of $w_t$ to the shock of $j$-th element of $\epsilon_t$ at horizon $h$. 
Consider the case where we are interested in the response of $y_t$ with respect to a shock in $\epsilon_{1,t}$. The parameter of interest is then $\theta_h := \Theta_{n_r+2,1,h}$ for $h=0,1,\dots$.

Assuming that the structural shock can be recovered as a function of both current and past data, denoted as $\epsilon_{1,t} \in \text{span}(\{w_\tau\}_{\tau\le t})$, the structural shock can be identified as a linear combination of the Wold forecast errors using an SVAR identification scheme. This can be expressed as $\tilde{\epsilon}_{1,t} = b'e_t$, where $b$ is obtained as a function of reduced-form VAR parameters depending on identification schemes.
The LP approach involves using reduced-form LP parameters instead of VAR counterparts to generate structural impulse responses. Using this method, the population estimand would be equivalent.
For a detailed explanation with different identification schemes, please refer to \cite{plagborg2021local}.

In cases where the model is non-invertible, the straightforward approach would be to use instruments. I cannot efficiently deal with such cases since this paper does not provide a theory for the instrumental variable LP. 
However, as mentioned in \cite{kanzig2021macroeconomic}, non-invertibility is essentially the omitted variable bias problem.
In that sense, I can argue that high-dimensional controls still help because the model spans more information.

The above results leverage the VAR-LP equivalence with infinite lags.
In practical applications, we rely on a finite lag approximation. 
The use of finite lags introduces approximation errors, which must be carefully considered in empirical applications.
A detailed discussion on the implications of using finite lags is provided in Appendix \ref{sec:Appendix:details}.

In the empirical analysis in Section \ref{sec:applications:subj_belief}, I apply the impulse response analysis scheme outlined above to study the effects of subjective beliefs on some key economic aggregates. I consider several model specifications with an increasing number of lags, revisiting the issues of invertibility and finite lag approximations. This allows me to assess the robustness of the results to potential non-invertibility and to explore the trade-offs associated with different lag lengths in capturing the dynamic effects of belief shocks.

\subsubsection*{Local projections approach to difference-in-differences with high-dimensional controls}
A recent paper by \cite{dube2023local} proposes a local projections approach to DiD event studies. 
Following their identification schemes, the parameter of interest $\beta_h$ can be interpreted as a DiD estimator.
In the following, I will briefly introduce their identification schemes with application to high-dimensional controls.

Consider a potential outcomes framework (\cite{rubin1974estimating}) in a panel setting with a binary treatment. Denote $y_{it}(0)$ and $y_{it}(p)$ as the potential outcomes with respect to the control and treatment at time $p \neq \infty$ status.
Units are divided into groups, \( g \in \{0, 1, \ldots, G \} \), where each group $g$ shares the same treatment timing, $p_g$. Denote the group \( g = 0 \) as the never treated group with $p_0 = \infty$.
Treatment is absorbing, meaning that once a unit receives treatment, it remains treated in all subsequent periods.

The group-specific average treatment effect on the treated (ATT) at horizon \( h \) for group \( g \), which starts treatment at time \( p \), is given by:
\begin{align*}
    \tau_g(h) = E[ y_{i,p+h}(p) - y_{i,p+h}(0) | p_i = p ].
\end{align*}
The DiD approach hinges on two main assumptions: parallel trends and no anticipation. The no anticipation assumption implies that for any time \( t \) before the treatment time \( p \), the expected difference in potential outcomes is  (\( E[y_{it}(p) - y_{it}(0)] = 0 \)). 
The parallel trend assumption states that $E[y_{i,t}(0) - y_{i,1}(0)|p_i = p] = E[y_{i,t}(0) - y_{i,1}(0)],$
for all $t\ge 2$ and for all $p\in\{1,\dots,T,\infty\}$.
Also assume a simple structure to the never treated outcome to be
\begin{align*}
    E[y_{i,t}(0)] = \alpha_i + \delta_t.
\end{align*}

Now consider the following estimating equation.
\begin{align*}
    y_{i,t+h} - y_{i,t-1} = \beta_h^{\text{LP-DiD}} \Delta D_{it} + \delta_t^h + e_{it}^h,
\end{align*}
where $\delta_t^h$ are time specific controls and $e_{it}^h$ the error terms.
The LP-DiD parameter then identifies
\begin{align*}
E[\beta_h^{\text{LP-DiD}}] =& E[\Delta_h y_{it}|t,\Delta D_{it}=1]-E[\Delta_h y_{it}|t,\Delta D_{it}=0]\\
=& E\Big[\sum_{g=1}^G\paren{\tau_g(h)\times \bold{1}\{t=p_g\}}\Big]\\
& -E\Big[\sum_{g=1}^G\Big[\sum_{j=1}^\infty(\tau_g(h+j) - \tau_g(j-1))\times \Delta D_{i,t-j}\times \bold{1}\{t=p_g + j\} \Big]\Big]\\
& -E\Big[\sum_{g=1}^G\Big[\sum_{j=1}^\infty \tau_g(h-j)\times \Delta D_{i,t+j}\times \bold{1}\{t=p_g - j\} \Big]\Big],
\end{align*}
From this expression, we can clearly see the two bias terms. The contribution of the referenced paper is to restrict the sample such that both bias terms become zero.

The two key terms contributing to the biases $\Delta D_{i,t+j}$ for $j\le h$ and $\Delta D_{i, t-j}$ for $1\ge j$. 
These terms go to zero if $\Delta D_{i,t-j} = 0$ for $-h<j< \infty$, which simplifies to $D_{t+h} = 0$ under the assumption of absorbing treatment. This is where the sample restriction plays a critical role.

By restricting to the sample that are either
\begin{align*}
\begin{cases}
    \text{newly treated}\quad &\Delta D_{it} = 1,\\
    \text{or clean control}\quad &D_{i,t+h} = 0,
\end{cases}
\end{align*}
the LP-DiD parameter identifies
\begin{align*}
E[\beta_h^{\text{LP-DiD}}] =& E[\Delta_h y_{it}|t,\Delta D_{it}=1]-E[\Delta_h y_{it}|t,\Delta D_{it}=0,D_{i,t+h = 0}]\\
=& E[\sum_{g=1}^G\paren{\tau_g(h)\times \bold{1}\{t=p_g\}}],
\end{align*}
where it provides a convex combination of all group-specific effects $\tau_g(h)$ by removing previously treated observations and observations treated between $t + 1$ and $t + h$ from the control group.

Now, consider the scenario where the researcher believes the parallel trend holds only after controling for a vector of covariates, $\bold{w}_{it}$. Including controls becomes necessary for the identification purpose, resulting in the following conditional parallel trend assumption:
\begin{align*}
    E[y_{i,t}(0) - y_{i,1}(0)|p_i = p,\bold{w}_{it}] = E[y_{i,t}(0) - y_{i,1}(0)| \bold{w}_{it}].
\end{align*}
The conditional parallel trends assumption becomes crucial when allowing for variations in outcome dynamics across different groups, provided that these differences can be fully accounted for by the covariates. This makes the assumption far more realistic than the conventional parallel trends assumption, especially in settings where untreated outcomes are not expected to evolve similarly across treatment groups. (\citealp{heckman1997matching}; \citealp{abadie2005semiparametric}; \citealp{callaway2021difference}). 

Adopting my estimation framework introduces a new layer to this argument: by accounting for high-dimensional controls, the conditional parallel trends assumption becomes even more realistic and robust.
By leveraging these controls, we ensure that any remaining variation in trends across groups is largely orthogonal to the treatment, further strengthening the identification strategy. This approach makes the conditional parallel trends assumption not only more plausible but also more applicable to modern datasets, where a wide range of observable factors can be accounted for in high-dimensional frameworks.

Then the estimating equation becomes
\begin{align*}
    y_{i,t+h} - y_{i,t-1} = \beta_h^{\text{LP-DiD}^{\bold{w}}} \Delta D_{i,t} + \bold{w}_{i,t}'\gamma^h + \delta_t^h + e_{i,t}^h,
\end{align*}
where $\bold{w}_{i,t}$ can include lagged terms of the dependent variable and other controls. 
By restricting the sample to be newly treated group and the clean control group as before, the $\text{LP-DiD}^{\bold{w}}$ parameter identifies
\begin{align*}
    E[\beta_h^{\text{LP-DiD}^{\bold{w}}}|\bold{w}_{it}] =& E[\Delta_h y_{it}|t,\Delta D_{it}=1,\bold{w}_{it}]-E[\Delta_h y_{it}|t,\Delta D_{it}=0,D_{i,t+h = 0},\bold{w}_{it}]\\
    =& E[\sum_{g=1}^G\paren{\tau_g(h)\times \bold{1}\{t=p_g\}}].
\end{align*}
This framework enables us to focus on the portion of outcome evolution that is not explained by the covariates, isolating the treatment effect more effectively.
I will revisit this identification framework and its implications with the empirical application in Section \ref{sec:applications:gdp}.

\subsection{Estimation Procedures}
For notation simplicity, stack the covariates except for $x_t$ into $\boldsymbol{w}_t = (\boldsymbol{r}_{t},\boldsymbol{z}_{t-1},\dots,\boldsymbol{z}_{t-L})$ and write
\begin{align}\label{eq:lp_baseline}
    y_{t+h} = \beta_h x_t + \beta_{-h}'\boldsymbol{w}_t + u_{t,h},
\end{align}
where $\beta_{-h} = (\delta_{h,0}',\delta_{h,1}',\dots,\delta_{h,L}')'$.
To address the bias that arises from excluding higher-order lag controls, the dimension of $\boldsymbol{w}_t$ expands as the lag order, $L$, increases.
Turning to the causal identification scheme of \cite{angrist2018_lpstringtheory}, it is intuitive to add a large set of covariates as controls to account for unobserved counfounding variables, which also contributes to increasing the dimension of $\boldsymbol{w}_t$.

A straightforward approach to \eqref{eq:lp_baseline} with high dimensionality would be to apply a regularization method such as the LASSO.
However, it is now well known that these methods come with an associated cost known as regularization bias. 
One way to address this challenge is a debiasing strategy using node-wise regression, regressing each covariate on the remaining covariates, as illustrated by \cite{van2014asymptotically} and \cite{zhang2014confidence}. Using the estimates from the node-wise regressions, they construct a remedy for the bias term and control for it. 
An alternative approach is a double selection method that establishes an orthogonality condition in a similar spirit to \cite{bch2013res}. This method is analogous to the debiasing process of the LASSO, as noted in \cite{semenova2023_hdpanel} and \cite{chernozhukov2021_lassotimespace}.
In the context of \eqref{eq:lp_baseline}, the main idea of both approaches is to incorporate another set of regressions of the covariates to control for the regularization bias, either by debiasing or by using the covariates selected in both regression models.

In this paper, I consider the double selection method with a model selection method OGA+HDAIC by \cite{ing2020}, where it consists of two steps. OGA first orders the covariates $\wt$ by their explanatory power, and then we select the number of covariates that minimizes the high-dimensional AIC criteria.
The selections come from the following regression models,
\begin{align}
    y_{t+h} =& \lambda_h' \wt + e_{t,h},\label{eq:yonw}\\ 
    x_{t} =& \gamma_{h}' \boldsymbol{w}_t + v_{t,{h}}. \label{eq:xonw}
\end{align}
Although not necessary in \eqref{eq:xonw}, I used the subscript $h$ for consistency across both equations.
The concept involves applying OGA+HDAIC on both equations \eqref{eq:yonw} and \eqref{eq:xonw}. To remove the regularization bias, I use the union of two selected covariates as $\wt$ to finally run the local projection regression in \eqref{eq:lp_baseline}.
For a clear presentation of the OGA+HDAIC procedure, I fix some notations below.
\begin{not*}
    Let $t$ be an index for observations and $j$ for covariates, so that $\bw_t = (w_{t,j})_{j=1}^p$ and $\bw_j = (w_{t,j})_{t=1}^T$, where $p:=dim(\bw_t)$ and $T:= \bar{T}-h-L$ is the effective sample size.
    Let $[p]=1,\dots,p$ be the set of all covariate indices.
    Let $J$ be a set of covariate indices and let the subset of the covariates be $\boldsymbol{w}_t(J) := (w_{t,j})_{j\in J}$. 
    Stack all $t=1,\dots,T$ observations of $\boldsymbol{w}_t(J)$ and denote $W(J) := (\bw_t(J))_{t=1}^T$. Similarly define $\yh=(y_{t+h})_{t=1}^T$. Denote the projection matrix using $W(J)$ as $P_{J} := W(J)(W(J)'W(J))^{-1}W(J)'$.
\end{not*}
First, consider the OGA part of $y_{t+h}$ on $\wt$. The ordering of the covariates requires the following definition, which indicates the explanatory power of the covariate $\bw_i$.
\begin{align}
    \mu_i(\emptyset) =& \frac{\bw_i'\yh}{\sqrt{T}\norm{\bw_i}} = \frac{\frac{1}{T} \bw_i'\yh}{\sqrt{\frac{1}{T}\bw_i'\bw_i}}, \label{eq:def:mu_empty}
\end{align}
As it looks, it works as a scaled version of a single regressor regression of $\yh$ on each regressor $\bw_i$. Since it is scaled by the square root of the $L_2$ norm of each regressor, the scale of $\bw_i$ does not affect the magnitude of $\mu_i(J)$. To choose the one with the most explanatory power, we start with the setting $J=\emptyset$ and choose the one with the largest value of $\abs{\mu_i(\emptyset)}$. 
Let this covariate index as 
\begin{align*}
    \widehat{j}_1 := \arg\max_{i\in [p]} \abs{\mu_i(\emptyset)},
\end{align*}
and define the first chosen set as $\widehat{J}_1 := \{\widehat{j}_1\}$.
To select the second order covariate, we use the residuals from the first step, using $\mu_i(\widehat{J}_1)$ to measure explanatory power. We choose the second order covariate from the remaining covariates, $\widehat{j}_2 := \arg\max_{i\in [p]\setminus \widehat{J}_1} \abs{\mu_i(\widehat{J}_1)}$, then update the chosen set of covariates, $\widehat{J}_2 = \widehat{J}_1 \cup \{\widehat{j}_2\}$.
Generalizing to the $m-$th order covariate, suppose we have the previously chosen set of covariates, $\widehat{J}_{m-1}$. Compute the following coefficient for all $i\in[p]\setminus \widehat{J}_{m-1}$, where
\begin{align}\label{eq:def:mu}
    \mu_i(\widehat{J}_{m-1}) =& \frac{\bw_i'(I-P_{\widehat{J}_{m-1}})\yh}{\sqrt{T}\norm{\bw_i}},
\end{align}
and select the one with the largest absolute value of $\mu_i(\widehat{J}_{m-1})$,
\begin{align}\label{eq:def:Jm}
    \widehat{j}_m =& \arg\max_{i\in[p]\setminus \widehat{J}_{m-1}} \abs{\mu_i(\widehat{J}_{m-1})},
\end{align}
and update the chosen set of covariates, $\widehat{J}_m = \widehat{J}_{m-1} \cup \{\widehat{j}_m\}$.
Repeating this procedure orders the covariates in descending order of their explanatory power, conditioning on the previously chosen set of covariates.
Now that the covariates are ordered, the remaining task is to select the threshold for the number of covariates. The information criterion we use is HDAIC\footnote{Note that it is called HD because it takes into account the penalization on $\log p /T$. This notion, unlike the traditional AIC vs. BIC framework, takes into account the size of the entire feature space $p$ relative to the available data $T$. This holistic approach to model complexity resonates in the high-dimensional setting, where the effect of the penalty is amplified as $\log p/T$ diverges with increasing dimensionality.}, deinfed as
\begin{align}\label{eq:def:HDAIC}
    \text{HDAIC}(J) = \paren{1+\frac{C^*|J| \log p}{T}} {\widehat{\sigma}}_J^2,
\end{align}
where $\widehat{\sigma}_J^2 = \yh'(I-P_J)\yh /T$. The number of covariates to be included in the model is the one which minimizes the information criteria,
\begin{align} \label{eq:def:argminHDAIC}
    \widehat{m} = \arg\min_{1\le m\le M_T^*} \text{HDAIC}(\widehat{J}_m).
\end{align}
Denote the chosen set of covariates as $\widehat{J}^{[1]} := \widehat{J}_{\widehat{m}}$. Next, repeat the OGA+HDAIC for \eqref{eq:xonw} and obtain the set of covariates $\widehat{J}^{[2]}$.
Finally, denote the union set as $\widetilde{J} = \widehat{J}^{[1]}\cup \widehat{J}^{[2]}$ and run the local projection regression \eqref{eq:lp_baseline} with the selected covariates:
\begin{align*}
    y_{t+h} = \beta_h x_t + \beta_{-h}'\wt(\widetilde{J}) + u_{t,h}.
\end{align*}
The least squares estimator for $\beta_h$ in this final model is our proposed estimator.
For the variance estimator, define $\psi_{t} = v_{t,h} e_{t,h}$ and $\tau^2 = E[\sumt v_{t,h}^2/T]$. 
The variance estimator is then defined as
\begin{align}
    \widehat{\sigma}^2_h =& \frac{1}{\hat{\tau}^2}\widehat{\Omega},\label{eq:def:sigma}\\
    \widehat{\Omega} =& \sum_{\ell = -(K-1)}^{K-1} \left(1-\frac{\ell}{K}\right) \frac{1}{T-\ell}\sum_{t=\ell+1}^T \widehat{\psi}_{t} \widehat{\psi}_{t-\ell},\label{eq:def:Omega}
\end{align}
where $\widehat{\Omega}$ is the Newey-West estimator with a bandwidth parameter $K$, which is assumed to be increasing with respect to increasing sample size. We borrow arguments from \cite{andrews1991heteroskedasticity} for the choice of $K$. $\widehat{\psi}_t$ is the sample analogue of $\psi_t$, where $\widehat{v}_{t,h}$ and $\widehat{e}_{t,h}$ are the residuals from \eqref{eq:yonw} and \eqref{eq:xonw}.

The proposed algorithm is detailed in Algorithm \ref{alg:double} after a remark on tuning parameters.

\begin{rem*}
    Unlike the OGA procedure, the HDAIC procedure has two unknown parameters, $M_T^*$ and $C^*$. A detailed description of both definitions can be found in Appendix \ref{sec:Appendix:details}.
    $M_T^*$ is a parameter concerning the maximum number of covariates to include in the model and increases with $(T/\log p)$, as defined in \eqref{eq:def:Mstar}.
    $C^*$, specified in \eqref{eq:def:Cstar}, adjusts the penalty for dimensionality and is assumed to be greater than certain constants.
    While $C^*$ could be viewed as a tuning parameter, it's more appropriately considered a hyperparameter, analogous to the hyperparameter of $2$ in traditional AIC.
    Aside from the plug-in option, there is a data-driven way to choose the value of $C^*$ by setting candidates, e.g. $\bar{C}=\{1.6, 1.8, 2, 2.2, 2.4\}$, and choosing the one that yields the smallest prediction errors in each regression in \eqref{eq:yonw} and \eqref{eq:xonw}.
\end{rem*}




\begin{algorithm}{[Double-OGA+HDAIC]}\label{alg:double}
    \begin{enumerate}
        \item Compute $\mu_i(\emptyset)$ in \eqref{eq:def:mu_empty} for all $i\in[p]$ and select the covariate with the largest $|\mu_i(\emptyset)|$. Denote the index of covariate as $\widehat{j}_1$ and define $\widehat{J}_1=\widehat{j}_1$.
        \item Given $\widehat{J}_1$, compute $\mu_i(\widehat{J}_1)$ for all $i\in[p]\setminus\widehat{J}_1$ and select the covariate with the largest $|\mu_i(\widehat{J}_1)|$. Denote the index of covariate as $\widehat{j}_2$ and define $\widehat{J}_2=\widehat{J}_1\cup \widehat{j}_2$.
        \item For $m>2$, compute $\mu_i(\widehat{J}_{m-1})$ for all $i\in[p]\setminus\widehat{J}_{m-1}$ and select the covariate with the largest $|\mu_i(\widehat{J}_{m-1})|$. Denote the index of covariate as $\widehat{j}_m$ and define $\widehat{J}_m=\widehat{J}_{m-1}\cup \widehat{j}_m$.
        \item Compute HDAIC in \eqref{eq:def:HDAIC} and select $m$ that minimizes HDAIC$(\widehat{J}_m)$. Denote it as $\widehat{m}^y$ and define $\widehat{J}_{y}:=\widehat{J}_{\widehat{m}^y}$.
        \item Run steps 1--4 by replacing $y_{t+h}$ with $x_t$. Obtain $\widehat{J}_{x}$ and the residuals $\widehat{v}_{t,h}$.
        \item Run OLS of $y_{t+h}$ on $[x_t :\, \bw_t(\widetilde{J})]$, where $\widetilde{J} = \widehat{J}_{x}\cup \widehat{J}_{y}$. Obtain the final estimates $\widehat{\beta}_h$ and the residuals $\widehat{u}_{t,h} = y_{t+h} - \widehat{\beta}_h x_t + \widehat{\beta}_{-h}'\bw_t(\widetilde{J})$.
        \item Calculate the variance estimator $\hat{\sigma}_h^2$ defined in \eqref{eq:def:sigma}.
    \end{enumerate}
\end{algorithm}


With the above algorithm, one can construct an $100(1-\alpha)\%$ confidence interval for $h-$step ahead impulse response estimator of the following form
\begin{align*}
    [\widehat{\beta}_h - z_{\alpha}\hat{\sigma}_h/\sqrt{T}, \:\:\widehat{\beta}_h + z_{\alpha}\hat{\sigma}_h/\sqrt{T}],
\end{align*}
where $z_{\alpha} = \Phi^{-1}(1-\alpha/2)$ is the $(1-\alpha/2)$ quantile of the standard normal distribution.

\section{Illustrative Simulation Study}\label{sec:illustration}
This section highlights the advantages of the proposed method over the commonly used LASSO approach, especially when dealing with unknown sparsity assumptions and the degree of persistence.
I present a comparative analysis of the proposed estimator, the standard local projection estimator, and the debiased LASSO\footnote{For the debiased lasso estimation, I use the R package \textit{desla} provided by \cite{adamek2023_hdlp}.}, as described in \cite{adamek2023_hdlp}. 
For the DGP, I adapt from \cite{MontielMikkel2021_ecta_lpinference}. Consider the following DGP
\begin{align*}
  y_{1,t} =& \rho y_{1,t-1} + u_{1,t} ,\\ 
  \boldsymbol{y}_t =& B_1 \boldsymbol{y}_{t-1} + \dots + B_{12} \boldsymbol{y}_{t-12} + \boldsymbol{u}_t, \quad \boldsymbol{u}_t \sim N(0,\Sigma), \quad\Sigma_{i,j} = \tau^{|i-j|},
\end{align*}
where $\boldsymbol{y}_t\in \re^{n}$. We are interested in estimating the reduced-form impulse response of $y_{2,t}$ with respect to the innovation $u_{1,t}$. I evaluate the finite sample properties with $95\%$ coverage probabilities and the median widths of the confidence intervals over $1000$ iterations.

For parameter settings, I set $\tau$ to be $0.3$, $n = 10$, and I consider the sample size of $T=300$. For estimation settings, I adopt the lag-augmentation and set the number of lags included in the model as $L=21$, while the lags included in the DGP is $12$. To evaluate performance over longer horizons, I estimate the reduced-form impulse responses for horizons from $1$ to $60$. 
The simulations explore different levels of persistence and sparsity by varying $\rho$ and $B_\ell$ for $\ell= 1,\dots,12$. 
Two levels of persistence are considered: $\rho=0.5$ and $\rho=0.95$.
For sparsity, I define the values of $B_\ell$ using a vector $a \in \re^{n-1}$ of different magnitudes. I defined the even elements of each row ${B\ell}_{(j,)}$ as polynomials of $a_j$ with alternating signs. 
Different values of $a$ are considered to generate different levels of sparsity, where I set $a^{s}= (0.4, -0.36, ...,,-0.094, 0.05)'$ for a sparse scenario and $a^{d}= (0.8, -0.73, ...,,-0.28, 0.2)'$ for a dense scenario.

The resulting coefficients of the corresponding local projection equations are depicted in Figure \ref{fig:sim_coefs}. The coefficients are ordered in descending order of their absolute magnitude.
The left panel displays the regression coefficients from the local projection equation using the sparse vector \(a_s\), while the right panel shows the coefficients using the dense vector \(a_d\). The top panels correspond to data with lower persistence (\(\rho=0.5\)), and the bottom panels correspond to data with higher persistence (\(\rho=0.95\)). Each line represents the regression coefficients from the \(h\)-step ahead local projection equations, with horizons spanning from 3 to 59. Note that \(h\) ranges from 1 to 60, and I have selected 9 points within this range for illustration.
In the left panel, the coefficients are sparse, with fewer than 10 coefficients being nonzero. In contrast, the right panel shows a dense set of coefficients; although the absolute magnitudes decay, a larger number of coefficients remain nonzero. The more persistent the data, the more amplified the largest ordered coefficients become.

\begin{figure}[htb!]
    \begin{center}
        \caption{Magnitude of coefficients in the estimating equations}\label{fig:sim_coefs}
        \includegraphics[width = .95\textwidth]{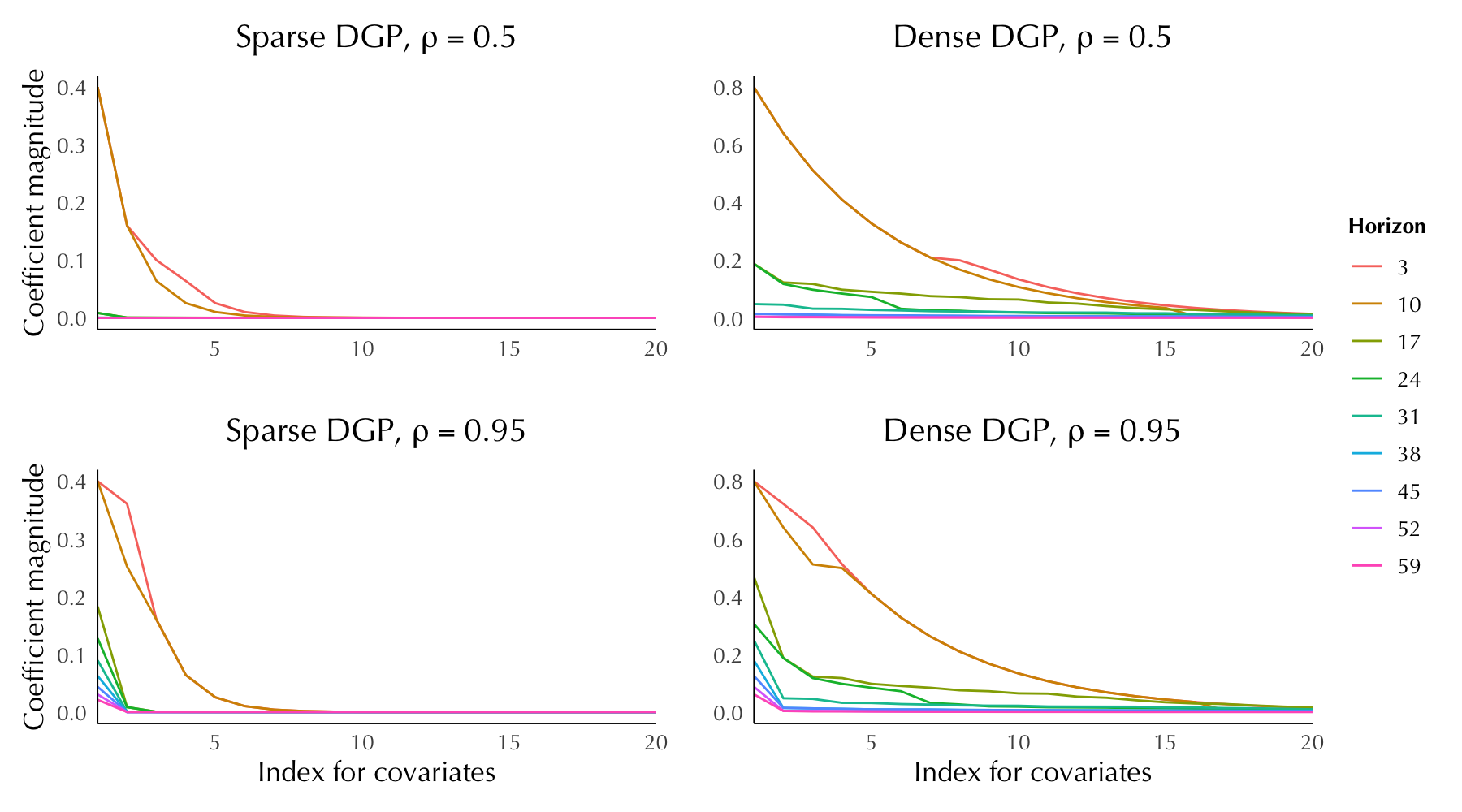}    
    \end{center}\vspace*{-.4cm}
    {\raggedright \footnotesize \textit{Notes.} Each line represents \(h\)-step ahead local projection coefficients for horizons from $3$ to $59$. The coefficients are ordered in their absolute magnitudes. The left panel shows coefficients using the sparse vector \(a_s\), while the right panel uses the dense vector \(a_d\). The top panels are for lower persistence (\(\rho=0.5\)) and the bottom panels for higher persistence (\(\rho=0.95\)).
    \par}
\end{figure}
Figures \ref{fig:sim_results_rho05} and \ref{fig:sim_results_rho95} show the simulation results for different levels of persistence. 
In the left panel of Figure \ref{fig:sim_results_rho05}, the model is sparse and the data are less persistent, making it easier for all estimation methods to perform well. Most methods achieve satisfactory coverage probabilities, except for debiased LASSO, which performs slightly worse. Notably, while the conventional LP achieves $95\%$ coverage rates, it does so at the cost of significantly larger confidence interval widths compared to the high-dimensional methods. Additionally, the conventional LP method shows slight underperformance for the last $5$ horizons, despite its large standard errors. This can be attributed to the sample size of $T=300$ and the fact that the number of covariates included in the model is about $70\%$ of the sample size. As a result, the standard erorr gets larger when fitting all covariates, and this issue worsens with increasing horizons due to the reduced effective sample size.

We can observe a similar pattern for the dense scenario in the right panel, except for a sharp failure in the initial horizons using the debiased LASSO method. The proposed method achieves the coverage rates with small standard errors across the different specifications. The huge gain in efficiency comes from the model selection, and it still achieves efficiency in the longer horizons. This is true for both high-dimensional methods.
\begin{figure}[htb!]
    \begin{center}
        \caption{$95\%$ Coverage probabilities and median widths, less persistent case}\label{fig:sim_results_rho05}
        \includegraphics[width=.9\textwidth]{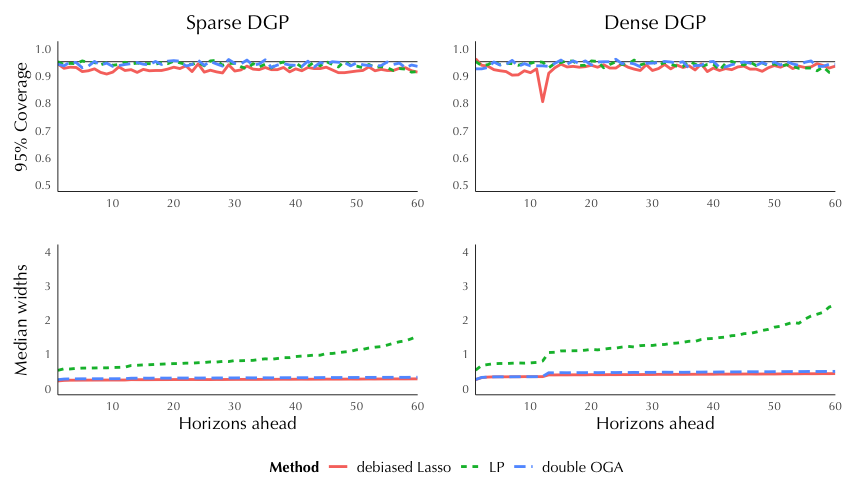}
    \end{center}\vspace*{-.4cm}
    {\raggedright \footnotesize \textit{Notes.} The figure compares $95\%$ coverage probabilities (top panels) and median widths (bottom panels) over $60$ forecast horizons for sparse and dense DGPs, generated with \(a_s\) (left panels) and \(a_d\) (right panels) in the less persistent case (\(\rho = 0.5\)). The results are based on 1,000 iterations. The red solid line represents the debiased Lasso method, the green dotted line indicates conventional LP, and the blue dashed line corresponds to the proposed method.
    \par}
\end{figure}

Figure \ref{fig:sim_results_rho95} presents the results for the higher persistence scenario. There is a huge failure of the debiased LASSO for either sparse or dense scenarios, especially when it gets to longer horizons. The reason for this failure can be inferred from the median width plot, where LASSO rules out too many regressors. The conventional LP performs similarly to the less persistent case, where it achieves the coverage rates but with larger standard errors. Overall, the proposed method maintains the coverage rates at a much lower cost compared to the conventional LP.
\begin{figure}[htb!]
    \begin{center}
        \caption{$95\%$ Coverage probabilities and median widths, more persistent case}\label{fig:sim_results_rho95}
    \includegraphics[width=.9\textwidth]{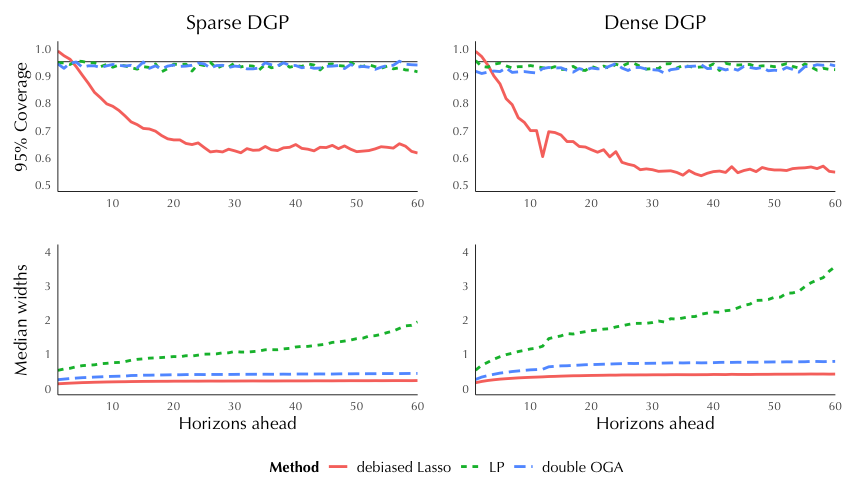}
    \end{center}\vspace*{-.4cm}
    {\raggedright \footnotesize \textit{Notes.} The figure compares $95\%$ coverage probabilities (top panels) and median widths (bottom panels) over $60$ forecast horizons for sparse and dense DGPs, generated with \(a_s\) (left panels) and \(a_d\) (right panels) in the less persistent case (\(\rho = 0.95\)). The results are based on 1,000 iterations. The red solid line represents the debiased Lasso method, the green dotted line indicates conventional LP, and the blue dashed line corresponds to the proposed method.
    \par}
\end{figure}

The simulation restuls illustrates that the proposed method maintains desirable coverage rates with more stable and narrower confidence intervals across different specifications. While the conventional LP also achieves the coverage probabilities, it suffers from larger standard errors, especially as the horizon increases. As the debiased LASSO consistently gives the smallest standard errors throughout all scenarios, there might be an appropriate DGP where it outperforms the proposed method at least in terms of efficiency, but less likely in dense scenarios, as the performance of debiased LASSO tends to decline with denser DGPs. Given the uncertainties regarding the underlying sparsity and persistence of the data, the proposed method provides a robust and reliable alternative that performs well across various conditions.

\section{Theory}\label{sec:theory}
\subsection{Preliminaries}
In this subsection, I introduce the definitions used throughout the theory.
Most of the definitions and explanations are taken from \cite{davidson2021stochastic}.
I will use these definitions in the context of the main text in the following subsection.

Consider a probability space $(\Omega,\calf,P)$. With time series data, time flows in one direction: The past is known while the future is unknown. It is hence important to condition on previous information set in the context of time series analysis. The accumulation of information is represented by an increasing sequence of sub $\sigma-$fields, $\{\calf_t\}_{-\infty}^\infty$, where $\calf_s \subseteq \calf_t$ for $s\le t$.
With the uncertainty of the future, the natural next step is to take expectations. If $X_t$ is $\calf_t-$measurable for each $t$, $\{X_t,\calf_t\}_{-\infty}^\infty$ is called an adapted sequence, and $E[X_t|\calf_{t-1}]$ is defined.
Also, if $E[X_{t+s}|\calf_t] = X_t$ a.s. for all $s\ge 0$ under adaptation, the sequence is identified by its history alone, and $\{X_t\}_{-\infty}^\infty$ is called a causal stochastic sequence.
I start by introducing the most widely used dependence concept, the martingale difference (m.d.) sequences.
\begin{defn}[Martingale difference sequence, \cite{davidson2021stochastic}] The process $\{X_t\}_{-\infty}^\infty$ is a martingale difference sequence if
    \begin{align*}
        E[X_t] <& \infty\\
        E[X_t|\calf_{-\infty}^{t-1}] =& 0\quad a.s.
    \end{align*}
\end{defn}
As is clear from the definition, m.d. assumes one-step-ahead unpredictability. It follows that it is uncorrelated with any measurable function of its lagged values, and thus behaves like independent sequences in classical limit results. As shown in \cite{davidson2021stochastic}, many limit theorems that hold under independence also hold under the m.d. assumption with few additional assumptions about the marginal distributions, making it a preferred dependence assumption for econometricians. 
To extend our understanding beyond one-step-ahead unpredictability, the next definition introduces the concept of asymptotic unpredictability.
\begin{defn}[Mixingale, \cite{davidson2021stochastic}] \label{def:mixingale}
    For $q\ge 1$, the sequence $\{X_t,\calf_t\}_{-\infty}^{\infty}$ is an $L_q-$mixingale if for a sequence of non-negative constants $\{c_t\}_{-\infty}^\infty$ and $\{\rho_m\}_0^\infty$ such that $\rho_m \to 0$ as $m\to \infty$, 
    \begin{align}
        \left(E\left[ \left| E[X_t |\calf_{-\infty}^{t-m}] \right|^q \right] \right)^{1/q} \le& c_t \rho_m,\label{eq:mixingale} \\
        \left(E\left[ \left| X_t - E[X_t |\calf_{-\infty}^{t+m}] \right|^q \right] \right)^{1/q} \le& c_t \rho_{m+1}, \label{eq:mixingale:future}
    \end{align}
    hold for all $t$ and $m\ge 0$.
\end{defn}
A mixingale generalizes a m.d. as a special case where $\rho_m =0$ for all $m>0$.
The equations show a diminishing effect of past information on the present, while suggesting eventual complete knowledge of the present in the distant future. Note that if the sequence is adapted, then $E[X_t|F_{-\infty}^{t+m}]=X_t$ for all $m\ge0$ and \eqref{eq:mixingale:future} is satisfied.
Just as m.d.s behave like independent sequences in limit theorems, mixingales behave like mixing sequences, which implies asymptotic independence with the following definitions.
\begin{defn}[$\alpha-$mixing coefficients, \cite{hansen1991strong}]\label{def:alpha}
    The $\alpha-$mixing coefficients of a sequence $\{X_t\}_{-\infty}^\infty$ are given by
    \begin{align*}
        \alpha_m = \sup_j \sup_{\{F\in\calf_{-\infty}^j, G\in\calf_{j+m}^\infty\}}|P(G\cap F) - P(G)P(F)|,
    \end{align*}
    and the process $X_t$ is said to be $\alpha-$mixing if $\lim_{m\to\infty}\alpha_m=0$.
\end{defn}
The $\alpha-$mixing coefficient $\alpha_m$ measures the dependence between the sequences separated by a lag of $m$. If a process $X_t$ is $\alpha-$mixing, the dependence diminishes as the lag increases.
Mixing conditions often play a crucial role in further establishing asymptotic properties.
Although the mixingale condition offers several advantageous properties, an important limitation arises: even if a function of mixingales is independent, it loses its mixing property with an infinite number of lags (\cite{davidson2021stochastic}, Chapter 18). The following definition introduces a mapping from a mixing process to a random sequence. This transformation enables the sequence to inherit some desired mixingale properties.
\begin{defn}[Near Epoch Dependence (NED), \cite{davidson2021stochastic}] \label{def:NED}
    For a stochastic process $\{V_t\}_{-\infty}^\infty$, let $\calf_{t-m}^t = \sigma\{V_{t-m},\dots,V_t\}$, such that $\{\calf_{t-m}^t\}_{m=0}^\infty$ is an increasing sequence of $\sigma-$fields.
    If for $q>0$ an adapted sequence $\{X_t\}_{-\infty}^\infty$ satisfies
    \begin{align}\label{eq:NED}
        \left(E\left[ \left| X_t - E(X_t|\calf_{t-m}^t) \right|^q \right] \right)^{1/q} \le d_t \zeta_m,
    \end{align}
    where $\zeta_m\to 0$ as $m\to \infty$ and $\{d_t\}_{-\infty}^\infty$ is a sequence of positive constants, $X_t$ is said to be near-epoch dependent in $L_q-$norm ($L_q-$NED) on $\{V_t\}_{-\infty}^\infty$.
\end{defn}
The NED condition is the main dependence assumption I use on the data, following \cite{adamek2022lasso} and \cite{adamek2023_hdlp}. In the rest of the subsection, I introduce some useful lemmas using the aforementioned definitions.


The following lemma gives a concentration bound without a specific assumption on the dependence structure, which will be used both in Theorems \ref{thm:errbd} and \ref{thm:variance}.
\begin{lem}[Triplex inequality, \cite{jiang2009uniform}]\label{lem:triplex}
    Let $\{X_t\}_{-\infty}^\infty$ be a causal stochastic sequence and $\{\calf_{t-m}^t\}_{m=0}^\infty$ be an increasing sequence of $\sigma-$fields. Let $\{X_t\}$ be $\calf_t-$measurable for each $t$. Then for any $\epsilon,M>0$ and positive integers $m$,
    \begin{equation}\label{eq:triplex}
        \begin{split}
        P\left(\left|\frac{1}{T}\sum_{t=1}^T X_t - EX_t\right| >\epsilon\right) &\le 2m \exp{\left(- \frac{T\epsilon^2}{288m^2M^2}\right)}\\
        & + (6/\epsilon) \frac{1}{T}\sum_{t=1}^T E[E[X_t|\calf_{t-m}]-EX_t] \\
        & + (15/\epsilon) \frac{1}{T}\sumt E\big[|X_t|\,\mathbbm{1}\{|X_t|>M\}\big]
        \end{split}
    \end{equation}
\end{lem}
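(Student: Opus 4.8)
The plan is to prove the bound by decomposing the centered average into three pieces---a martingale part, a conditional-mean part, and a tail part---and then splitting the event $\{\,|\frac1T\sum_t(X_t - EX_t)|>\epsilon\,\}$ across the three via a union bound, allocating a fixed fraction of $\epsilon$ to each. Concretely, fix the truncation level $M$, set $Y_t = X_t\mathbbm{1}\{|X_t|\le M\}$ and $R_t = X_t\mathbbm{1}\{|X_t|>M\}$ so that $X_t = Y_t+R_t$, and write
\begin{align*}
X_t - EX_t = \underbrace{\paren{Y_t - E[Y_t|\calf_{t-m}]}}_{Z_t} + \underbrace{\paren{E[Y_t|\calf_{t-m}] - EY_t}}_{W_t} + \paren{R_t - ER_t}.
\end{align*}
The sum of the $Z_t$ will supply the exponential term of \eqref{eq:triplex}, the sum of the $W_t$ the conditional-mean (mixingale-type) term, and the sum of the $R_t-ER_t$ the tail term.

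The exponential term is the quantitative heart of the argument. Since $Y_t$ is bounded and $\calf_t$-measurable, $E[Y_t|\calf_t]=Y_t$, so I would telescope
\begin{align*}
Z_t = E[Y_t|\calf_t] - E[Y_t|\calf_{t-m}] = \sum_{i=0}^{m-1} D_{t,i}, \qquad D_{t,i} := E[Y_t|\calf_{t-i}] - E[Y_t|\calf_{t-i-1}].
\end{align*}
For each fixed $i$, the sequence $\{D_{t,i}\}_t$ is a martingale difference sequence with respect to $\{\calf_{t-i-1}\}_t$ (by the tower property $E[D_{t,i}|\calf_{t-i-1}]=0$), with increments obeying $\abs{D_{t,i}}\le 2M$. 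Applying the Azuma--Hoeffding inequality to each of the $m$ averages $\frac1T\sum_t D_{t,i}$ and combining by a union bound over $i=0,\dots,m-1$ yields a bound of the form $2m\exp(-cT\epsilon^2/(m^2M^2))$; distributing a fixed fraction of $\epsilon$ evenly across the $m$ telescoped martingales is precisely what produces the $m^2$ in the exponent and the $2m$ prefactor, with the constant $288$ fixed by this allocation and the Azuma constant.

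The remaining two terms are handled by Markov's inequality. For the conditional-mean part, $P(\abs{\frac1T\sum_t W_t}>\epsilon') \le (1/\epsilon')\frac1T\sum_t E\abs{W_t}$, and since $W_t = (E[X_t|\calf_{t-m}]-EX_t) - (E[R_t|\calf_{t-m}]-ER_t)$ with $E\abs{E[R_t|\calf_{t-m}]-ER_t}\le 2E\abs{R_t}$ by Jensen, this produces the middle term proportional to $\frac1\epsilon\frac1T\sum_t E\abs{E[X_t|\calf_{t-m}]-EX_t}$ plus a spillover governed by the tail $E\abs{R_t}=E[\abs{X_t}\mathbbm{1}\{\abs{X_t}>M\}]$. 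The tail part is bounded directly: $P(\abs{\frac1T\sum_t(R_t-ER_t)}>\epsilon')\le (2/\epsilon')\frac1T\sum_t E[\abs{X_t}\mathbbm{1}\{\abs{X_t}>M\}]$, using $E\abs{R_t-ER_t}\le 2E\abs{R_t}$. Collecting the tail contributions from both the $W$- and $R$-sums gives the $15/\epsilon$ term and the conditional-mean contribution gives the $6/\epsilon$ term; the precise constants $6$ and $15$ are pure bookkeeping of how the fraction of $\epsilon$ is split among the three sums and how the truncation error redistributes.

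The main obstacle, and the step requiring the most care, is the exponential term: one must verify that the telescoped increments $D_{t,i}$ genuinely form martingale difference sequences in $t$ with respect to the shifted filtrations $\{\calf_{t-i-1}\}_t$ (this uses the causality and $\calf_t$-adaptedness of $\{X_t\}$), establish the uniform bound $\abs{D_{t,i}}\le 2M$, and track the dependence of the Azuma constant on $m$. It is worth emphasizing that no mixing or NED condition enters at this stage: the entire effect of temporal dependence is isolated in the middle term $\frac1T\sum_t E\abs{E[X_t|\calf_{t-m}]-EX_t}$, which is exactly the quantity controlled by the mixingale bound \eqref{eq:mixingale} of Definition \ref{def:mixingale}. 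This is what renders the inequality assumption-free in its dependence structure and directly applicable, as in Theorems \ref{thm:errbd} and \ref{thm:variance}, once $m$ and $M$ are tuned against $T$, $p$, and $\epsilon$.
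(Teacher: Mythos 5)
Your proof is correct, but note that the paper itself offers no proof of this lemma: it is imported directly from \cite{jiang2009uniform}, and the only related argument in the appendix is the proof of Lemma \ref{lem:triplex_mixingale}, which takes \eqref{eq:triplex} as given and merely bounds its second and third terms under mixingale and moment assumptions. What you have reconstructed is essentially the original argument behind the triplex inequality: truncation at $M$, telescoping the centered truncated part into $m$ martingale difference arrays with respect to the shifted filtrations, Azuma--Hoeffding plus a union bound over the $m$ arrays, and Markov's inequality for the remaining two sums. The bookkeeping you defer does close: with the allocation $(\epsilon/6,\,\epsilon/6,\,2\epsilon/3)$, Azuma with increments bounded by $2M$ at threshold $\epsilon/(6m)$ per array yields $2m\exp\left(-T\epsilon^2/(288m^2M^2)\right)$; Markov at level $\epsilon/6$ on the conditional-mean sum gives the $6/\epsilon$ coefficient; and the tail contributions, $12/\epsilon$ spilled over from the conditional-mean piece plus $2/(2\epsilon/3)=3/\epsilon$ from the remainder sum, give exactly $15/\epsilon$. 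One point worth flagging: as printed in the paper, the middle term of \eqref{eq:triplex} is missing an absolute value---literally, $E\left[E[X_t|\calf_{t-m}]-EX_t\right]=0$ by iterated expectations---whereas your derivation produces the intended quantity $E\left[\left|E[X_t|\calf_{t-m}]-EX_t\right|\right]$, which is also what the paper's proof of Lemma \ref{lem:triplex_mixingale} implicitly uses when it applies the Lyapunov and mixingale bounds.
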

This inequality is called the triplex inequality because it has three components. The first element is a Bernstein-type bound, the second deals with dependence, and the last component is on the tail. 
This bound is a central theory used in the proofs, where I use the mixingale property inherited by the NED assumption to simplify the dependence and the tail bounds. The following lemma provides the simplified bounds by imposing the mixingale assumptions on $\{X_t\}_{-\infty}^{\infty}$.
\begin{lem}[Triplex inequality for mixingales]\label{lem:triplex_mixingale}
    Suppose the assumptions in Lemma \ref{lem:triplex} hold. 
    Further assume that the moment generating function for $X_t$ exists and $\{X_t\}_{-\infty}^\infty$ is an $L_{\bar{q}-}$bounded $L_q-$mixingale with constants $c_t$ and $\rho_m\to 0$ for some $1\le\bar{q}\le q$, so that $E[|X_t|^{\bar{q}}]\le C$.Then for any $\epsilon,M>0$ and positive integers $m$,
    \begin{equation}\label{eq:triplex_mixingale}
        \begin{split}
        P\left(\left|\frac{1}{T}\sum_{t=1}^T X_t - EX_t\right| >\epsilon\right) &\le 2m \exp{\left(- \frac{T\epsilon^2}{288m^2M^2}\right)}\\
        & + \frac{6\overline{c}_T}{\epsilon} \rho_m + \frac{15}{\epsilon} C \exp(-M(\bar{q}-1)),
        \end{split}
    \end{equation}
    where $\overline{c}_T := \sumt c_t/T$.
\end{lem}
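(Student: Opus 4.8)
The plan is to start from the triplex inequality of Lemma \ref{lem:triplex} and bound its second and third terms one at a time, using the mixingale property for the dependence term and the moment/MGF conditions for the tail term. The first (Bernstein-type) term $2m\exp(-T\epsilon^2/(288m^2M^2))$ already appears verbatim on the right-hand side of the target \eqref{eq:triplex_mixingale}, so no work is needed there; the task reduces to simplifying the remaining two terms of \eqref{eq:triplex}.

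For the dependence term $\frac{6}{\epsilon}\frac{1}{T}\sum_{t=1}^T E\big|E[X_t|\calf_{t-m}]-EX_t\big|$, I would invoke the $L_q$-mixingale bound \eqref{eq:mixingale}. Since $q\ge 1$, Jensen's inequality gives $E|\cdot|\le (E|\cdot|^q)^{1/q}$, so applying \eqref{eq:mixingale} to the centered process controls each summand by $c_t\rho_m$. Summing over $t$ and dividing by $T$ produces $\frac{6}{\epsilon}\overline{c}_T\rho_m$ with $\overline{c}_T=\sum_{t=1}^T c_t/T$, which is exactly the second term of \eqref{eq:triplex_mixingale}.

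For the tail term $\frac{15}{\epsilon}\frac{1}{T}\sum_{t=1}^T E\big[|X_t|\,\mathbbm{1}\{|X_t|>M\}\big]$, the idea is to trade the truncation for an exponential weight. Because $1\le\bar{q}\le q$, the exponent $(\bar{q}-1)(|X_t|-M)$ is nonnegative on the event $\{|X_t|>M\}$, so $\mathbbm{1}\{|X_t|>M\}\le\exp\big((\bar{q}-1)(|X_t|-M)\big)$. This yields the pointwise bound $|X_t|\,\mathbbm{1}\{|X_t|>M\}\le \exp(-(\bar{q}-1)M)\,|X_t|\exp\big((\bar{q}-1)|X_t|\big)$, after which taking expectations pulls out the factor $\exp(-M(\bar{q}-1))$, and the existence of the moment generating function (together with the $L_{\bar{q}}$-bound $E|X_t|^{\bar{q}}\le C$) controls the residual exponential moment $E[|X_t|\exp((\bar{q}-1)|X_t|)]$ by a constant. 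Summing over $t$ delivers $\frac{15}{\epsilon}C\exp(-M(\bar{q}-1))$, the third term of \eqref{eq:triplex_mixingale}. Combining the three pieces completes the proof.

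The hard part will be the tail term: a naive Markov bound using only the $\bar{q}$-th moment gives polynomial decay of order $M^{1-\bar{q}}$, not the exponential rate $\exp(-M(\bar{q}-1))$ claimed, so the argument must genuinely exploit the exponential moment furnished by the MGF rather than the power moment alone. A secondary subtlety is the centering in the dependence term: Definition \ref{def:mixingale} states the mixingale bound for $E[X_t|\calf_{t-m}]$, so the property should be applied to the centered sequence $X_t-EX_t$, which is consistent with reading the middle term of \eqref{eq:triplex} with the absolute value taken inside the expectation as in \cite{jiang2009uniform}.
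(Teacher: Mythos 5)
Your proposal is correct and follows essentially the same route as the paper: the dependence term is handled identically (Lyapunov/Jensen plus the mixingale bound applied to the centered sequence, summed to give $6\overline{c}_T\rho_m/\epsilon$), and you correctly read the middle term of \eqref{eq:triplex} with the absolute value inside the expectation as in \cite{jiang2009uniform}. The only difference is cosmetic and occurs in the tail step: you dominate the indicator pointwise by $\exp\left((\bar{q}-1)(|X_t|-M)\right)$ and absorb the residual moment $E\left[|X_t|\exp\left((\bar{q}-1)|X_t|\right)\right]$ into the constant via the MGF (using $|x|\le e^{|x|}$ this is bounded by the MGF at $\bar{q}$), whereas the paper first applies H\"older with exponents $\bar{q}$ and $\bar{q}/(\bar{q}-1)$ and then a Chernoff bound $P(|X_t|>M)\le e^{-\bar{q}M}M_X(\bar{q})$; both yield the same rate $C\exp(-M(\bar{q}-1))$ under the same assumptions.
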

\begin{proof}
    The proof can be found in Appendix \ref{sec:Appendix:Lemmas:triplex_mixingale}.
\end{proof}

\subsection{Assumptions and Lemmas}
The definitions of the variables and the parameters come from the baseline models \eqref{eq:lp_baseline} -- \eqref{eq:xonw}.
I start by introducing the assumptions on the data generating processes.
\begin{asm}[NED]\label{asm:NED}
    Denote $\epsilon_{t,h}$ as a genereric notation for the error terms $u_{t,h}, e_{t,h},$ and $v_{t,h}$. There exist some constants $\bar{q}>q>2$, and $b \ge \max\{1, (\bar{q}/q-1)/(\bar{q}-2)\}$ such that
    \begin{enumerate}[label=(\alph*)]
        \item $(x_t,\boldsymbol{w}_t, u_{t,h}, e_{t,h})$ are zero-mean causal stochastic process with $E[x_t u_{t,h}] =E[x_t e_{t,h}] =0$ and $E[\boldsymbol{w}_t u_{t,h}] = E[\boldsymbol{w}_t e_{t,h}] = \boldsymbol{0}$. \label{asm:NED:meanzero}
        \item $\max_{1\le j\le p} E[|w_{t,j}|^{2\bar{q}}]\le C$
        and $\max_{1\le h\le H_{\max}} E[|\epsilon_{t,h}|^{2\bar{q}}]\le C$. \label{asm:NED:2q_bdd}
        \item Denote the moment generating function of $X$ as $M_X(t)$. Assume $M_X(t)<\infty$ for $X = \{w_{t,j}\epsilon_{t,h}, w_{t,j}w_{t,k}\}$ for all $j,k=1,\dots,p,j\neq k$. \label{asm:NED:mgfs_exist}
        \item Let $\{\Upsilon_{T,t}\}$ denote a triangular array that is $\alpha-$mixing of size $-b/(1/q-1/\bar{q})$ with $\sigma-$field $\mathcal{F}_t^\Upsilon := \sigma\{\Upsilon_{T,t},\Upsilon_{T,t-1},\dots\}$ such that $(x_t,\boldsymbol{w}_t, u_{t,h}, e_{t,h})$ is $\mathcal{F}_t^\Upsilon-$measurable. The processes $x_t$, $w_{t,j}$, $u_{t,h}$, and $e_{t,h}$ are $L_{2q}-$near-epoch-dependent (NED) of size $-b$ on $\Upsilon_{T,t}$ with positive constants $\{d_t\}$ and a sequence $\zeta_m\to 0$ as $m\to \infty$, uniformly over $j=1,\dots,p$. \label{asm:NED:NED}
    \end{enumerate}
\end{asm}

Assumption \ref{asm:NED} \ref{asm:NED:meanzero} assumes that the error terms in \eqref{eq:lp_baseline} and \eqref{eq:yonw} are not correlated with the contemporaneous regressors.
Note that I impose the adaptation assumption to use asymptotic martingale difference property the Bernstein blocks must have (\cite{davidson1994stochastic}, page 387). According to \cite{davidson2021stochastic}, it is a widely employed assumption in time series econometrics, assuming that future shock information cannot help in predicting $X_t$ given its history (\cite{davidson2021stochastic}, page 383).
Assumption \ref{asm:NED} \ref{asm:NED:2q_bdd} ensures the processes have bounded $2\bar{q}-$th moments.
Assumption \ref{asm:NED} \ref{asm:NED:NED} is on the dependence structure. While $(x_t, \boldsymbol{w}_t, u_{t,h}, e_{t,h})$ themselves are not mixing processes, they depend almost entirely on `near epoch' of $\{\Upsilon_t\}$ (\cite{davidson2021stochastic}, page 368), which is $\alpha-$mixing. This allows $(x_t, \boldsymbol{w}_t, u_t, e_{t,h})$ to inherit some mixingale properties, as will be stated in the following lemma.


\begin{lem}[Mixingale Property]\label{lem:mixingale} Denote $\epsilon_{t,h}$ as a genereric notation for the error terms $u_{t,h}, e_{t,h},$ and $v_{t,h}$. Under Assumption \ref{asm:NED}, the following holds. Note that $\{c_t\}$ and $\rho_{m}$ are generic notations for the mixingale constants.
\begin{enumerate}[label=(\alph*)]
    \item $\{v_{t,h}\}_{t=-\infty}^\infty$ is causal $L_{2q}-$NED of size $-b$ on $\Upsilon_{T,t}$ with positive bounded constants uniformly over $h=1,\dots,H_{\max}$. \label{lem:mixingale:v_NED} 
    \item $\{w_{t,j}\epsilon_t\}$ is a causal $L_q-$mixingale with non-negative constants $\{c_t\}$ and sequences $\rho_{m}$ for all $j=1,\dots,p$ and $h=1,\dots,H_{\max}$. \label{lem:mixingale:wu_mixingale}
    \item $\{w_{t,j}w_{t,k} - E[w_{t,j}w_{t,k}]\}$ is a causal $L_q-$mixingale with non-negative constants $\{c_t\}$ and sequences $\rho_{m}$ for all $j \neq k$ where $j,k=1,\dots,p$. \label{lem:mixingale:ww_mixingale}
\end{enumerate}
\end{lem}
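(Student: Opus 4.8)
The plan is to derive all three claims from two stability properties of near-epoch dependence (Definition~\ref{def:NED}): NED is preserved under finite linear combinations with bounded coefficients, and the product of two $L_{2q}$-NED sequences is $L_q$-NED. The mixingale conclusions in \ref{lem:mixingale:wu_mixingale} and \ref{lem:mixingale:ww_mixingale} then follow from the standard fact that an NED function of a mixing base is a mixingale, which is exactly why Assumption~\ref{asm:NED}~\ref{asm:NED:NED} fixes the mixing size of $\Upsilon_{T,t}$ relative to the NED size $-b$.

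For part~\ref{lem:mixingale:v_NED}, I would use that $v_{t,h}$ is the population projection residual in \eqref{eq:xonw}, so $v_{t,h} = x_t - \gamma_h'\boldsymbol{w}_t$ is a finite linear combination of $x_t$ and the coordinates $w_{t,j}$ with fixed coefficients; Assumption~\ref{asm:NED}~\ref{asm:NED:2q_bdd} makes these projection coefficients well defined and bounded uniformly over $h$. Since each of $x_t$ and $w_{t,j}$ is $L_{2q}$-NED of size $-b$ by Assumption~\ref{asm:NED}~\ref{asm:NED:NED}, the defect bound
\begin{align*}
    \norm{v_{t,h} - E[v_{t,h}\mid\calf_{t-m}^t]}_{2q} \le \norm{x_t - E[x_t\mid\calf_{t-m}^t]}_{2q} + \sum_{j} \abs{\gamma_{h,j}}\,\norm{w_{t,j}-E[w_{t,j}\mid\calf_{t-m}^t]}_{2q}
\end{align*}
is $O(\zeta_m)$, so $v_{t,h}$ inherits $L_{2q}$-NED of size $-b$; causality and adaptation pass through term by term.

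For parts~\ref{lem:mixingale:wu_mixingale} and~\ref{lem:mixingale:ww_mixingale}, the first step is to show each product is $L_q$-NED. Writing $\tilde a_t := E[a_t\mid\calf_{t-m}^t]$ and using that the conditional expectation is the $L_2$-optimal predictor, I would bound $\norm{w_{t,j}\epsilon_t - E[w_{t,j}\epsilon_t\mid\calf_{t-m}^t]}_q$ by $\norm{w_{t,j}\epsilon_t - \tilde w_{t,j}\tilde\epsilon_t}_q$ and apply the identity
\begin{align*}
    w_{t,j}\epsilon_t - \tilde w_{t,j}\tilde\epsilon_t = (w_{t,j}-\tilde w_{t,j})\,\epsilon_t + \tilde w_{t,j}\,(\epsilon_t - \tilde\epsilon_t).
\end{align*}
Cauchy--Schwarz then gives the upper bound $\norm{w_{t,j}-\tilde w_{t,j}}_{2q}\norm{\epsilon_t}_{2q} + \norm{\tilde w_{t,j}}_{2q}\norm{\epsilon_t-\tilde\epsilon_t}_{2q}$, where the NED defects are $O(\zeta_m)$ (invoking part~\ref{lem:mixingale:v_NED} when $\epsilon_t=v_{t,h}$) and the $L_{2q}$ factors are bounded by Assumption~\ref{asm:NED}~\ref{asm:NED:2q_bdd} since $2\bar q > 2q$. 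Hence $w_{t,j}\epsilon_t$ and $w_{t,j}w_{t,k}$ are $L_q$-NED of size $-b$; for \ref{lem:mixingale:wu_mixingale} Assumption~\ref{asm:NED}~\ref{asm:NED:meanzero} gives $E[w_{t,j}\epsilon_t]=0$, while for \ref{lem:mixingale:ww_mixingale} I center by subtracting $E[w_{t,j}w_{t,k}]$, which leaves the NED property intact.

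The second step converts $L_q$-NED into the mixingale bounds of Definition~\ref{def:mixingale}. Because the centered products are $L_q$-NED of size $-b$ on the $\alpha$-mixing array $\Upsilon_{T,t}$, which is mixing of size $-b/(1/q-1/\bar q)$ by Assumption~\ref{asm:NED}~\ref{asm:NED:NED}, the classical result that an NED function of a mixing process is a mixingale (\citealp{davidson1994stochastic}, Ch.~17) delivers $L_q$-mixingales with constants $c_t$ controlled by the $L_{2\bar q}$ moment bounds and $\rho_m\to 0$; causality makes the future condition \eqref{eq:mixingale:future} automatic, as noted after Definition~\ref{def:mixingale}. I expect the main obstacle to be the bookkeeping of integrability indices and sizes across the two operations: the product step trades $L_{2q}$ for $L_q$ through Cauchy--Schwarz, and the NED-to-mixingale step consumes mixing size in proportion to the gap $1/q-1/\bar q$. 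The conditions $\bar q>q>2$ and $b\ge\max\{1,(\bar q/q-1)/(\bar q-2)\}$ in Assumption~\ref{asm:NED} are there precisely to keep these indices and sizes aligned so that $\rho_m\to 0$; checking this alignment is the delicate part, whereas the algebraic identities and moment bounds are routine.
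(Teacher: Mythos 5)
Your proposal follows essentially the same route as the paper's proof: part \ref{lem:mixingale:v_NED} via closure of NED under linear combinations, parts \ref{lem:mixingale:wu_mixingale}--\ref{lem:mixingale:ww_mixingale} via the product rule for NED followed by the ``NED on a mixing base is a mixingale'' step, with the size bookkeeping ($-b$ NED size, mixing size $-b/(1/q-1/\bar{q})$, $L_{\bar{q}}$-boundedness) handled exactly as you describe. The division of labor is reversed, however: the paper simply cites Theorems \textbf{18.8} and \textbf{18.9} of \cite{davidson2021stochastic} for the two NED-closure steps you prove by hand, and instead writes out the NED-to-mixingale conversion you cite as classical (a Minkowski split at an intermediate lag, conditional Jensen plus LIE for the NED half, and the mixing inequality of Theorem \textbf{15.2} for the other half). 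This is not just taste: the explicit derivation identifies the constants $c_t = \max\{\widetilde{d}_t, C^{1/\bar{q}}\}$ and $\rho_m = 6\alpha_k^{1/q-1/\bar{q}} + 2\widetilde{\zeta}_k$, and Assumption \ref{asm:mixingale} imposes its exponential-decay condition on precisely this $\rho_m$; a black-box citation proves the lemma as stated (the constants are declared ``generic'') but loses the identification that the proofs of Theorems \ref{thm:errbd} and \ref{thm:variance} rely on.

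Two of your steps need repair as written. First, in the product step you justify $\norm{w_{t,j}\epsilon_t - E[w_{t,j}\epsilon_t\mid\calf_{t-m}^t]}_q \le \norm{w_{t,j}\epsilon_t - \tilde{w}_{t,j}\tilde{\epsilon}_t}_q$ by the $L_2$-optimality of conditional expectation; that minimality property holds only in $L_2$, and here $q>2$. The standard fix is that for any $\calf_{t-m}^t$-measurable $Z$, Minkowski and conditional Jensen give $\norm{X - E[X\mid\calf_{t-m}^t]}_q \le \norm{X-Z}_q + \norm{E[Z-X\mid\calf_{t-m}^t]}_q \le 2\norm{X-Z}_q$, and this factor of $2$ is exactly where the constant $\widetilde{\zeta}_m = 2\zeta_m + \zeta_m^2$ in the paper (inherited from Davidson's product theorem) comes from. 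Second, in part \ref{lem:mixingale:v_NED} your triangle-inequality bound requires $\sum_j \abs{\gamma_{h,j}}$ to be bounded uniformly over $h$ and over the growing dimension $p$, and you attribute this to Assumption \ref{asm:NED} \ref{asm:NED:2q_bdd}; moment bounds do not control the $\ell_1$ norm of a projection coefficient vector whose dimension diverges. That boundedness instead follows from the coefficient-decay Assumption \ref{asm:Ing2020:A3A4}, which yields $\norm{\gamma_h}_1 \le C_\delta C^{(\delta-1)/(2\delta-1)}$, or one delegates the step to Theorem \textbf{18.8} of \cite{davidson2021stochastic} as the paper does. Neither fix is deep, but both steps are unjustified as stated.
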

\begin{proof}
    The proof can be found in Appendix \ref{sec:Appendix:Lemmas:mixingale}. 
\end{proof}
This lemma shows that some transformations of NEDs and their demeaned processes are mixingales. It thus allows us to use Lemma \ref{lem:triplex_mixingale}. Next, I impose some assumptions on the mixingale constants to simplify some bounds used in the proof of the main theorems.

\begin{asm}\label{asm:mixingale} \label{asm:mixingale:rhom}
    Recall the NED constants $\{d_t\}$ and a sequence $\zeta_m\to 0$ defined in Assumption \ref{asm:NED} \ref{asm:NED:NED}. Let 
    $\widetilde{\zeta}_{m} = 2\zeta_m + \zeta_m^2$ and $k:=[q/2]$. 
        Define $\rho_m = 6\alpha_k^{1/q-1/\bar{q}} + 2\widetilde{\zeta}_k$, where $\alpha_k$ is the mixing coefficient for the sequences $X = \{\epsilon_{t,h},v_{t,h},w_{t,j}\epsilon_{t,h},w_{t,j}w_{t,k}\}$ for all $j=1,\dots,p$ and $h=1,\dots,H_{\max}$.
        Assume that $$\rho_m \le \exp(-m\kappa)$$ for some $\kappa>c_{\kappa}\log p$, where $c_{\kappa} > 2$.
\end{asm}

This assumption is on the mixingale constants, where I define the constants to align with the constants that inherit the mixingale properties. The sequence $\rho_m\to 0$ is hence the mixingale sequence for the transformed variables in Lemma \ref{lem:mixingale}.
Assumption \ref{asm:mixingale} gives the assumption of how fast $\rho_m$ should decay. 
The parameter $\kappa$ governs the strength of the dependence, with larger $\kappa$ values indicating a weaker dependence.
Note that this is a technical assumption I need to prove Thoerem \ref{thm:errbd}, where it simplifies the dependence bound in the triplex inequality \eqref{eq:triplex}.
The following is the assumptions on the degree of sparseness of the underlying coefficients, $\lambda_h$ and $\gamma_h$. 
\begin{asm}\label{asm:Ing2020:A3A4}
    Let $|\beta_h| \le C$.
    Let $\xi$ be a generic notation that represents $\beta_{-h}$, $\lambda_h$, and $\gamma_h$ for all $h=1,\dots,H_{\max}$. Then $\xi$ follows the following assumption.
    Suppose $\sum_{j=1}^p \xi_j^2 \le C$. Also assume that there exists $\delta > 1$ and $0<C_{\delta}<\infty$ that for any $J\subseteq \mathfrak{P}$,\label{asm:Ing2020:A3A4:A3}
        \begin{align}\label{eq:asm:poly}
            \sum_{j\in J}\abs{\xi_j} \le C_{\delta} \paren{\sum_{j\in J} \xi_j^2}^{(\delta-1)/(2\delta-1)},
        \end{align}
    where $C$ refers to a generic constant $0<C<\infty$. $\mathfrak{P}$ refers to the power set of $J$, where $J$ is a set of covariates.
\end{asm}
This assumption is the main difference between this method and LASSO, in that it doesn't limit the number of nonzero coefficients, but rather restricts the magnitudes of the coefficients.
While we do not require that the parameters to have a natural order, these conditions can be expressed more simply by rearranging the parameters in descending order of their magnitude $|\xi_j|$. 
Denote the rearrangement as $|\xi_{(1)}|\ge |\xi_{(2)}|\ge \dots\ge |\xi_{(p)}|$. Then, Assumption \ref{asm:Ing2020:A3A4} \ref{asm:Ing2020:A3A4:A3} implies
\begin{align}\label{eq:polydecay}
    C_1 j^{-\delta} \le \abs{\xi_{(j)}} \le C_2 j^{-\delta},\: 0<C_1\le C_2<\infty, \:\delta>1,
\end{align}
where from \eqref{eq:polydecay} we call Assumption \ref{asm:Ing2020:A3A4} \ref{asm:Ing2020:A3A4:A3} as a polynomial decay case: if \eqref{eq:polydecay} holds for some $\delta > 1$, then \eqref{eq:asm:poly} holds for the same $\delta$, as shown in Lemma A.2 in \cite{ing2020}.
Apart from \eqref{eq:polydecay}, it can be shown that Assumption \ref{asm:Ing2020:A3A4} \ref{asm:Ing2020:A3A4:A3} also implies
\begin{align}\label{eq:polysum}
    \sum_{j=1}^p \abs{\xi_j}^{1/\delta} < C ,\:0<C<\infty,\: \delta> 1,
\end{align}
which is a frequently adopted assumption in the high-dimensional literature, as in \cite{wang2014adaptive}.
Assume that \eqref{eq:polysum} holds for some $\delta$. Applying Hölder's inequality, 
\begin{align*}
    \sum_{j\in J} \abs{\xi_j} 
    \le& \paren{\sum_{j\in J} \abs{\xi_j}^{\frac{1}{2\delta-1}\frac{2\delta-1}{\delta}}}^{\delta/(2\delta-1)}\paren{\sum_{j\in J}\xi_j^{\frac{2\delta-2}{2\delta-1}\frac{2\delta-1}{\delta-1}}}^{(\delta-1)/(2\delta-1)} \\
    =& \paren{\sum_{j\in J} \abs{\xi_j}^{1/\delta}}^{\delta/(2\delta-1)}\paren{\sum_{j\in J}\xi_j^2}^{(\delta-1)/(2\delta-1)}\\
    \le& C^{\delta/(2\delta-1)}\paren{\sum_{j\in J}\xi_j^2}^{(\delta-1)/(2\delta-1)},
\end{align*}
and we can see that by setting $C_\delta=C^{\delta/(2\delta-1)}$, \eqref{eq:asm:poly} holds.
The parameter $\delta$ governs the degree of sparseness, with larger values indicating a faster decay of the coefficients.
Thus, we can see that Assumption \ref{asm:Ing2020:A3A4} covers a wider class of sparsity conditions where the condition trivially implies the exact sparsity case.


\begin{rem*}
    There are many variations of sparsity-based assumptions that extend the exact/strong sparse assumption. First, \eqref{eq:polysum} is called a soft sparsity, as opposed to strong sparsity, in that $L_q$ norm is bounded for $0<q\le 1$, whereas strong sparsity means that $L_0$ norm is bounded by a small constant. Another concept is the approximate sparsity (\cite{chernozhukov2021_lassotimespace}, Comment 3.1). The concept is to add a fast enough approximation error based on the error bounds, so that exact sparsity can ``approximate" a less sparse DGP such as \eqref{eq:polydecay}. 
\end{rem*}

\begin{asm}\label{asm:add}
    Assume the followings.
    \begin{enumerate}[label=(\alph*)]
        \item For some positive constant $C_1$ and $C_2$, it holds that\label{asm:add:IngA5}
        \begin{align*}
            \max_{1\le |J|\le C_1(T/\log p)^{1/2},i\notin J} \norm{\Gamma^{-1}(J)E[w_{t,i}\bw_t(J)]}_1 < C_2,
        \end{align*}
        where $\Gamma(J) = E[\bw_t(J)\bw_t(J)']$.
        \item Let $\Sigma = E[\sumt\wt\wt'/T]$. Assume $\max_{j\in[p]}\Sigma_{j,j}<C$ and $1/C \le \Lambda_{\min}$, where $\Lambda_{\min}$ is the minimum eigenvalue of $\Sigma$ and $0<C<\infty$. \label{asm:add:min_eig}
        \item 
        $\log p = o(T^{\frac{\delta-1}{3(2\delta-1)}})$
        and $(T/(\log p)^3)^{1/2}p^{-\underbar{c}} = o(1)$ for some $\underbar{c}=\min\{c_{\kappa} - 2, c_M(\bar{q}-1)-2\}>0$, where $c_M>2/(\bar{q}-1)$ and $c_\kappa > 2$. \label{asm:add:logp}
    \end{enumerate}
\end{asm}

Assumption \ref{asm:add} are some additional assumptions on the covariates and growth rates of the dimensionality $dim(\wt)=p$. Assumption \ref{asm:add} \ref{asm:add:IngA5} limits strong correlations between covariates, ensuring that the OLS coefficients of one covariate on any set of other covariates remain finite. Assumption \ref{asm:add} \ref{asm:add:min_eig} is on minimum eigenvalue of the Gram matrix, with a remark that this condition is on the population level, not on the sample matrix as in the restricted eigenvalue conditions in \cite{bickel2009simultaneous}. The third condition addresses the relationship between the growth of the covariate dimension $p$ and sample size $T$. It specifies that $\log p$ should not grow too quickly compared to $T$. This condition is crucial for bounding the triplex inequality in the proof of Theorem \ref{thm:errbd}. Note that this assumption is stronger than the usual growth rates in the high-dimensional literature with i.i.d. data, for example $\log p=o(T^{1/3})$ in \cite{bch2013res}. While bounding either the Bernstein bound or the tail bound would require $\log p = o(T^{(\delta-1)/(2\delta -1)})$, bounding all the terms simultaneously requires a stricter condition $(\log p)^3 = o(T^{(\delta-1)/(2\delta -1)})$. More detailed explanations are given in the proofs of Theorems \ref{thm:errbd} and \ref{thm:single}. While this may sound restrictive, note that this assumption is on the growth rate of $\log p$, where $p$ can grow at much faster rates.

\subsection{Inference}
In this subsection I establish the inference for the parameter of interest with each horizon $h$.
I use a matrix representation as the baseline instead of \eqref{eq:lp_baseline} -- \eqref{eq:xonw} for simplicity.
Let $\boldsymbol{y}_h$, $\boldsymbol{x}_h$, $\boldsymbol{u}_h$, and $\boldsymbol{v}_h$ be the $T \times 1$ vector and $W_h$ be the $T \times p$ matrix, where $p = dim(\boldsymbol{w}_t)$.
Then the models can be written in a compact matrix form:
\begin{equation}\label{eq:lp_vector_xonw}
    \begin{split}
    \boldsymbol{y}_h =& \beta_h \boldsymbol{x}_h + W_h \beta_{-h} + \boldsymbol{u}_h,\\
    \boldsymbol{x}_h =& W_h \gamma_{h} + \boldsymbol{v}_h.
    \end{split}
\end{equation}

\begin{thm}[Error bounds for Double OGA+HDAIC] \label{thm:errbd}
    Denote $\xi$ as a representative vector of parameters, $\beta_{-h}$ and $\gamma_h$, for all $h=1,\dots,H_{\max}$. Denote $E_T[.] = E[.|(y_{t+h},x_{t},\boldsymbol{w}_t)_{t=1}^{T-h}]$.
    Under Assumptions \ref{asm:NED} -- \ref{asm:add}, the following holds for all $h=1,\dots,H_{\max}$.
    \begin{align}
        E_T \left[  \norm{\Wh(\widehat{\xi} - \xi)}^2 \right]
        = O_p\left( \left(\frac{(\log p)^3}{T}\right)^{1-1/2\delta} \right).
    \end{align}
\end{thm}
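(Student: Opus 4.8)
The plan is to establish the prediction error bound for the OGA+HDAIC estimator by adapting the argument of \cite{ing2020} to the dependent-data setting, where the main work is to verify that their sample-level quantities concentrate at the required rate under the mixingale structure inherited from Assumption \ref{asm:NED}. I would treat $\beta_{-h}$ and $\gamma_h$ symmetrically through the generic $\xi$, since both arise from regressions (\eqref{eq:yonw} and \eqref{eq:xonw}) to which the identical OGA+HDAIC machinery applies. The target rate $\left((\log p)^3/T\right)^{1-1/2\delta}$ is exactly the polynomial-decay convergence rate from \cite{ing2020} with their $(\log p)/T$ replaced by $(\log p)^3/T$; the extra $(\log p)^2$ factor is precisely what the dependence and tail components of the triplex inequality force, consistent with Assumption \ref{asm:add}\ref{asm:add:logp}.

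First I would decompose the conditional prediction error $E_T[\norm{\Wh(\widehat\xi-\xi)}^2]$ into an approximation-error term, controlled by the polynomial-decay Assumption \ref{asm:Ing2020:A3A4} via \eqref{eq:asm:poly}, and a stochastic/estimation term governed by the greedy selection and the HDAIC stopping rule \eqref{eq:def:argminHDAIC}. The backbone is the OGA convergence result: after $m$ greedy steps the residual squared norm contracts, and the rate at which $\mu_i(\widehat J_{m-1})$ in \eqref{eq:def:mu} tracks its population analogue determines how faithfully the sample greedy path mimics the population one. I would establish a uniform concentration statement of the form
\begin{align*}
    \maxj \abs{\frac{1}{T}\bw_j'(\text{residual}) - E[\,\cdot\,]} = O_p\paren{\sqrt{\frac{(\log p)^3}{T}}},
\end{align*}
by applying Lemma \ref{lem:triplex_mixingale} to the mixingale sequences $\{w_{t,j}\epsilon_{t,h}\}$ and $\{w_{t,j}w_{t,k}-E[w_{t,j}w_{t,k}]\}$ supplied by Lemma \ref{lem:mixingale}, then taking a union bound over the $O(p^2)$ inner products. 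The calibration $\epsilon \asymp \sqrt{(\log p)^3/T}$, $m \asymp \log p$, and $M \asymp c_M \log p$ makes the Bernstein term $2m\exp(-T\epsilon^2/288m^2M^2)$, the dependence term $6\overline c_T\rho_m/\epsilon$ (using $\rho_m\le\exp(-m\kappa)$ with $\kappa>c_\kappa\log p$ from Assumption \ref{asm:mixingale}), and the tail term $15C\exp(-M(\bar q-1))/\epsilon$ all summable against the union bound — this is exactly where the growth conditions of Assumption \ref{asm:add}\ref{asm:add:logp} are consumed.

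With uniform control of the empirical moments in hand, I would import the deterministic part of \cite{ing2020}'s analysis: on the high-probability event where the sample Gram submatrices are well-conditioned (guaranteed by Assumption \ref{asm:add}\ref{asm:add:min_eig} together with Assumption \ref{asm:add}\ref{asm:add:IngA5} transferring the population minimum-eigenvalue and bounded-regression-coefficient conditions to the relevant submatrices of size $O((T/\log p)^{1/2})$), the OGA achieves the semipopulation rate, and HDAIC selects an $\widehat m$ that balances the $C^*|J|\log p/T$ penalty in \eqref{eq:def:HDAIC} against the squared bias, yielding the optimal order. Combining the greedy contraction at the HDAIC-chosen $\widehat m \asymp (T/(\log p)^3)^{1/(2\delta)}$ with the approximation error $\sum_{j>\widehat m}\xi_{(j)}^2 = O(\widehat m^{-(2\delta-1)})$ from \eqref{eq:polydecay} produces the stated rate. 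The main obstacle I anticipate is the uniformity over the greedy path: because $\widehat J_{m-1}$ in \eqref{eq:def:mu} is data-dependent, the residual $(I-P_{\widehat J_{m-1}})\yh$ is not a fixed mixingale, so the concentration must hold simultaneously over all candidate subsets $J$ with $|J|\le M_T^*$. I would resolve this by proving the bound uniformly over the at-most-$p^{M_T^*}$ relevant index sets via the union bound absorbed into the $(\log p)^3$ inflation — the delicate point being to confirm that $M_T^*\log p$ stays within the exponential budget of the Bernstein term under Assumption \ref{asm:add}\ref{asm:add:logp}, which is the precise reason the cube of $\log p$, rather than the first power, appears in the final rate.
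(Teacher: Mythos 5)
Your high-level skeleton is the same as the paper's: verify the weakened conditions \textbf{(2.33)}--\textbf{(2.34)} of \cite{ing2020} by applying Lemma \ref{lem:triplex_mixingale} to the mixingale sequences $\{w_{t,j}\epsilon_{t,h}\}$ and $\{w_{t,j}w_{t,k}-E[w_{t,j}w_{t,k}]\}$ from Lemma \ref{lem:mixingale}, take a union bound over the $O(p^2)$ coordinates/pairs, and then invoke Theorem 3.1 of \cite{ing2020} with the population conditions supplied by Assumptions \ref{asm:Ing2020:A3A4} and \ref{asm:add}\ref{asm:add:IngA5}. However, two of your concrete steps fail. First, the calibration $m\asymp\log p$ destroys the Bernstein term: with $\epsilon\asymp((\log p)^3/T)^{1/2}$ and $M\asymp\log p$, the exponent is $T\epsilon^2/(288m^2M^2)\asymp(\log p)/m^2$, so the Bernstein piece beats the $p^2$ union bound only if $m=O(1)$; with $m\asymp\log p$ the exponent tends to zero and $2m\exp(-T\epsilon^2/288m^2M^2)\to\infty$. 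The paper sets $m=1$, and this is precisely why Assumption \ref{asm:mixingale} imposes the strong decay $\rho_m\le\exp(-m\kappa)$ with $\kappa>c_\kappa\log p$: a single block already gives the dependence term $x^{-1}p^2\rho_1\le Cx^{-1}p^{2-c_\kappa}=o(1)$ under Assumption \ref{asm:add}\ref{asm:add:logp}. Relatedly, your diagnosis of where the cube comes from is off: the $(\log p)^3$ arises because the truncation level $M\asymp\log p$ (forced by the tail term) enters the Bernstein exponent quadratically, and one extra $\log p$ is needed to beat the polynomial union bound; it has nothing to do with an $M_T^*\log p$ budget.

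Second, the ``main obstacle'' you identify, and your proposed resolution, are both misplaced. A union bound over the $p^{M_T^*}$ candidate subsets is infeasible within this framework: the available probability decay is only polynomial in $p$, i.e.\ of order $\exp(-c\log p)$, which cannot absorb a factor $\exp(M_T^*\log p)$ with $M_T^*\asymp(T/(\log p)^3)^{1/(2\delta)}$. More importantly, it is unnecessary. The point of \cite{ing2020}'s conditions \textbf{(2.33)}--\textbf{(2.34)} is that only coordinate-wise and pairwise empirical moments need to concentrate; the uniformity over the data-dependent greedy path $\widehat J_{m-1}$ is handled \emph{deterministically} inside their Theorem 3.1, using the population-level conditions (A3) and (A5), which the paper does not re-derive but assumes directly (Assumptions \ref{asm:Ing2020:A3A4} and \ref{asm:add}\ref{asm:add:IngA5}). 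This is exactly why the paper's proof reduces to checking the two concentration statements \eqref{eq:ingA2}, \eqref{eq:ingA1_wu}, and \eqref{eq:ingA1_wv} and then citing Ing's theorem; re-proving subset-uniform concentration is neither needed nor achievable at the stated rate.
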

\begin{proof}
    The proof can be found in Appendix \ref{sec:Appendix:Thrm:errbd}.
\end{proof}

Note that the convergence rates are not the fastest rates proposed in the main theorem in \cite{ing2020}: the error bounds are $(\log p/T)^{1-1/2\delta}$ for the polynomial decay case. The main results require assumptions (A1) and (A2) in \cite{ing2020}, and with Assumption \ref{asm:NED} and Lemma \ref{lem:triplex_mixingale} those assumptions are not satisfied. I hence use the bounds from weakened assumptions in equations \textbf{(2.33)} and \textbf{(2.34)} in \cite{ing2020}, resulting in the slower converge rates. 
As noted in equation \textbf{(2.35)} of \cite{ing2020}, these relaxed assumptions require $\log p = o(T^{1/3}),$ which is implied by Assumption \ref{asm:add} \ref{asm:add:logp}.
While Theorem \ref{thm:errbd} itself is useful in deriving error bounds for the high-dimensional nuisance parameters, our parameter of interest is ${\beta}_h$ in equation \eqref{eq:lp_baseline}. The following theorem derives asymptotic distribution of $\widehat{\beta}_h$ using the error bounds derived in Theorem \ref{thm:errbd}.

\begin{thm}\label{thm:single}
    Under Assumptions \ref{asm:NED} -- \ref{asm:add}, 
    \begin{align*}
        \sqrt{T}(\widehat{\beta}_h - \beta_h) \overset{d}{\to} N(0,\sigma_h^2),
    \end{align*}
    where $\sigma_h^2 = \lim_{T\to\infty}\Omega_h/\tau_h^4$, $\Omega_h$ is the long-run covariance matrix following Newey-West, and $\tau^2 = E[\vh'\vh/T]$.
\end{thm}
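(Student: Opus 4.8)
The plan is to reduce the double-selection least-squares estimator to a single, asymptotically linear score by partialling out the selected controls, and then apply a central limit theorem for dependent sequences. Writing $M := I - P_{\widetilde{J}}$ for the annihilator of the selected regressors $\Wh(\widetilde{J})$ and substituting $\yh = \beta_h\xh + \Wh\betanh + \uh$ into the Frisch--Waugh--Lovell form $\widehat{\beta}_h = (\xh'M\yh)/(\xh'M\xh)$, I obtain
\[
  \sqrt{T}(\widehat{\beta}_h - \beta_h) = \frac{T^{-1/2}\xh' M \uh + T^{-1/2}\xh' M \Wh\betanh}{T^{-1}\xh' M \xh}.
\]
The target is to show the numerator equals $T^{-1/2}\vh'\uh + o_p(1)$, the denominator equals $\tau_h^2 + o_p(1)$, and that $T^{-1/2}\vh'\uh \overset{d}{\to} N(0,\Omega_h)$.

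For the denominator I would write $M\xh = M\vh + M\Wh\gamma_h$; since $T^{-1}\norm{M\Wh\gamma_h}^2 = O_p(r_T)$ with $r_T := ((\log p)^3/T)^{1-1/2\delta}$ by Theorem \ref{thm:errbd}, and $T^{-1}\sumt v_{t,h}^2 \overset{p}{\to}\tau_h^2$ by a law of large numbers for the NED sequence $v_{t,h}^2$, it remains to control $T^{-1}\vh'P_{\widetilde{J}}\vh$, handled by the maximal inequality and model-size bound discussed below. Using $M = I - P_{\widetilde{J}}$ and $\xh = \Wh\gamma_h + \vh$, the numerator expands as
\begin{align*}
  \xh' M \uh + \xh' M \Wh\betanh
  &= \vh'\uh - \vh' P_{\widetilde{J}}\uh + (M\Wh\gamma_h)'\uh \\
  &\quad + (M\Wh\gamma_h)'(M\Wh\betanh) + \vh' M \Wh\betanh .
\end{align*}
The term $(M\Wh\gamma_h)'(M\Wh\betanh)$ is the genuinely second-order double-selection bias: because $\widetilde{J}\supseteq\widehat{J}_x\cup\widehat{J}_y$, Theorem \ref{thm:errbd} gives $T^{-1}\norm{M\Wh\gamma_h}^2, T^{-1}\norm{M\Wh\betanh}^2 = O_p(r_T)$, so by Cauchy--Schwarz this term is $O_p(T r_T)$ and, after the $T^{-1/2}$ scaling, $O_p(\sqrt{T}\,r_T) = o_p(1)$ \emph{precisely} because Assumption \ref{asm:add:logp} forces $\log p = o(T^{(\delta-1)/(3(2\delta-1))})$, i.e. $\sqrt{T}\,r_T\to 0$. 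This is the Neyman-orthogonality payoff: the score $v_{t,h}u_{t,h}$ has vanishing derivative in both nuisances $\gamma_h$ and $\betanh$, so the leading bias is a product of the two estimation errors rather than either one alone.

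The hard part will be the three remaining cross terms $\vh'P_{\widetilde{J}}\uh$, $(M\Wh\gamma_h)'\uh$, and $\vh'M\Wh\betanh$, each of which pairs a noise vector with the \emph{data-dependent} selected set $\widetilde{J}$. A naive Cauchy--Schwarz bound is too crude (it leaves an $O_p(\sqrt{\log p})$ remainder), and one cannot treat $M$ as independent of $\uh,\vh$ because $\widetilde{J}$ is chosen from the same data; this correlation is the central difficulty. I would resolve it with a uniform-in-model argument: the selected model size is $|\widetilde{J}| = O_p((T/\log p)^{1/(2\delta)})$ by the HDAIC analysis of \cite{ing2020}, and, via Lemma \ref{lem:mixingale} (which makes $\{w_{t,j}v_{t,h}\}$ and $\{w_{t,j}u_{t,h}\}$ $L_q$-mixingales) together with the triplex inequality Lemma \ref{lem:triplex_mixingale}, I get the maximal bound $\maxj\abs{\sumt w_{t,j}v_{t,h}} = O_p(\sqrt{T\log p})$ and its analogue for $u$. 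Combined with the eigenvalue bound in Assumption \ref{asm:add:min_eig} and the correlation bound in Assumption \ref{asm:add:IngA5}, these yield $\norm{P_{\widetilde{J}}\uh}^2,\norm{P_{\widetilde{J}}\vh}^2 = O_p(|\widetilde{J}|\log p)$, so that every such term, once scaled by $T^{-1/2}$, is $O_p(|\widetilde{J}|\log p/\sqrt{T}) = o_p(1)$ under Assumption \ref{asm:add:logp}. The term $\vh'M\Wh\betanh$ additionally exploits the population orthogonality $E[\boldsymbol{w}_t v_{t,h}] = \boldsymbol{0}$, so that $\vh'\Wh\betanh$ and $\vh'P_{\widetilde{J}}\Wh\betanh$ nearly cancel, leaving a quantity controlled by $\norm{M\Wh\betanh}$.

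Having reduced the numerator to $T^{-1/2}\sumt v_{t,h}u_{t,h}$, I would finish with the central limit theorem. By Assumption \ref{asm:NED} and Lemma \ref{lem:mixingale}, $\{v_{t,h}u_{t,h}\}$ is a mean-zero (since $E[v_{t,h}u_{t,h}] = 0$ by Assumption \ref{asm:NED}\ref{asm:NED:meanzero}) $L_q$-NED array with bounded $2\bar{q}$-th moments, so a CLT for NED sequences delivers $T^{-1/2}\sumt v_{t,h}u_{t,h} \overset{d}{\to} N(0,\Omega_h)$, with $\Omega_h$ the long-run variance consistently estimated by the Newey--West $\widehat{\Omega}$. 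Slutsky's theorem applied to the ratio, together with $T^{-1}\xh'M\xh\overset{p}{\to}\tau_h^2$, then gives $\sqrt{T}(\widehat{\beta}_h-\beta_h)\overset{d}{\to} N(0,\Omega_h/\tau_h^4) = N(0,\sigma_h^2)$.
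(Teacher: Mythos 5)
Your overall architecture coincides with the paper's: the same Frisch--Waugh ratio, the same five-term expansion of the numerator (your $(M\Wh\gamma_h)'(M\Wh\betanh)$, $(M\Wh\gamma_h)'\uh$, $\vh'M\Wh\betanh$, $\vh'P_{\widetilde{J}}\uh$, $\vh'\uh$ match the paper's $\fm'M\fg$, $\fm'M\uh$, $\vh'M\fg$, $\vh'P\uh$, $\vh'\uh$), the same treatment of the product-of-nuisance-errors term via Cauchy--Schwarz on two Theorem \ref{thm:errbd} bounds killed by Assumption \ref{asm:add} \ref{asm:add:logp}, and the same endgame (NED/mixingale CLT, here Theorem 25.12 of \cite{davidson2021stochastic}, plus Slutsky). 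The genuine gap is in your treatment of the cross terms. The projection-norm bound $\norm{P_{\widetilde{J}}\uh}^2,\norm{P_{\widetilde{J}}\vh}^2=O_p(|\widetilde{J}|\log p)$ does dispose of $\vh'P_{\widetilde{J}}\uh$, but it cannot handle $(M_{\widetilde{J}}\Wh\gamma_h)'\uh$ or $\vh'M_{\widetilde{J}}\Wh\betanh$, because these pair a projection \emph{residual} of a dense signal with an \emph{unprojected} noise vector, so no factor $P_{\widetilde{J}}\uh$ or $P_{\widetilde{J}}\vh$ ever appears. Concretely, $(M_{\widetilde{J}}\fm)'\uh=(M_{\widetilde{J}}\fm)'(M_{\widetilde{J}}\uh)$, and Cauchy--Schwarz gives $\norm{M_{\widetilde{J}}\fm}\norm{M_{\widetilde{J}}\uh}=O_p(\sqrt{Tr_T})O_p(\sqrt{T})$ with your $r_T=((\log p)^3/T)^{1-1/(2\delta)}$, i.e. $O_p(\sqrt{Tr_T})\to\infty$ after the $T^{-1/2}$ scaling; alternatively, splitting $(M_{\widetilde{J}}\fm)'\uh=\fm'\uh-(P_{\widetilde{J}}\fm)'\uh$ does not help since $\fm'\uh/\sqrt{T}=O_p(1)$ (a CLT, not $o_p(1)$, as $E[\gamma_h'\wt u_{t,h}]=0$) and the second piece is even larger; the term is small only through cancellation between these two $O_p(1)$ quantities. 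Your ``near cancellation'' remark for $\vh'M_{\widetilde{J}}\Wh\betanh$ gestures at exactly this issue but never produces a bound.

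What closes the gap in the paper is an $L_1$/$L_\infty$ H\"older argument that your proposal never develops. Because $M_{\widetilde{J}}$ annihilates the selected columns, $M_{\widetilde{J}}\fm=\Wh(\gamma_h-\widetilde{\gamma}_h)$ with $\widetilde{\gamma}_h$ supported on $\widetilde{J}$, so $\abs{\fm'M_{\widetilde{J}}\uh/\sqrt{T}}\le\sqrt{T}\,\norm{\widetilde{\gamma}_h-\gamma_h}_1\norm{\Wh'\uh/T}_{\infty}$, and symmetrically for $\vh'M_{\widetilde{J}}\fg$. The required bound $\norm{\widetilde{\gamma}_h-\gamma_h}_1=O_p(((\log p)^3/T)^{(\delta-1)/(2\delta)})$ is itself nontrivial: it splits into $\norm{\widetilde{\gamma}_h-\gamma_h(\widetilde{J})}_1\le\sqrt{|\widetilde{J}|}\,\norm{\widetilde{\gamma}_h-\gamma_h(\widetilde{J})}$ (model size times the $L_2$ error) plus the unselected tail $\sum_{j\notin\widetilde{J}}\abs{\gamma_{j,h}}$, which is controlled only via the polynomial-decay Assumption \ref{asm:Ing2020:A3A4} together with Assumption \ref{asm:add} \ref{asm:add:IngA5}; this is precisely where the paper's no-sparsity structure enters the inference proof, and it is absent from your argument. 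Two smaller inaccuracies point the same way: under the paper's assumptions the triplex inequality yields $\maxj\abs{\sumt \wu}=O_p(\sqrt{T(\log p)^3})$, not $O_p(\sqrt{T\log p})$, and the HDAIC model-size bound is $O_p((T/(\log p)^3)^{1/(2\delta)})$, not $O_p((T/\log p)^{1/(2\delta)})$; your conclusions for $\vh'P_{\widetilde{J}}\uh$ survive these corrections, but the rates you quote are not the ones the assumptions deliver.
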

\begin{proof}
    The proof can be found in Appendix \ref{sec:Appendix:Thrm:single}.
\end{proof}

By Theorem \ref{thm:single}, the estimator of interest achieves the square root $T$ convergence rate regardless of the slower convergence rate of the nuisance parameters presented in Theorem \ref{thm:errbd}. 
The following theorem establishes the validity of the proposed variance estimator.

\begin{thm} \label{thm:variance}
    Define $\psi_{h,t} = v_{h,t}u_t$ and $\tau_h^2 = E[\vh\vh/T]$. Let $\sigma_h^2 = \lim_{T\to\infty} \tau_h^{-4}\Omega_h$, where $\Omega_h = \sum_{\ell = -(T-1)}^{T-1} 1/T \sum_{t=\ell+1}^T E[\psi_{h,t} \psi_{h,t-\ell}]$ with a bandwidth parameter $K$. Denote the sample analogues as $\hat{\tau}_h$, $\hat{\sigma}_h^2$ and $\hat{\Omega}_h$.
    Under Assumptions \ref{asm:NED} -- \ref{asm:add},
    \begin{align*}
        \abs{\hat{\sigma}_h^2 - \sigma_h^2}\overset{p}{\to} 0.
    \end{align*}
\end{thm}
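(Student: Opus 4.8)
The plan is to reduce the claim to two separate convergences---one for the scale $\hat\tau_h^2$ in the denominator and one for the long-run variance $\hat\Omega_h$ in the numerator---and then to combine them by Slutsky's theorem. Writing $\sigma_h^2 = \Omega_h/\tau_h^4$ with sample analogue $\hat\sigma_h^2 = \hat\Omega_h/\hat\tau_h^4$, the elementary bound
\begin{align*}
    \abs{\hat\sigma_h^2 - \sigma_h^2} \le \frac{\abs{\hat\Omega_h - \Omega_h}}{\hat\tau_h^4} + \abs{\Omega_h}\,\abs{\frac{1}{\hat\tau_h^4} - \frac{1}{\tau_h^4}}
\end{align*}
shows it suffices to prove (i) $\hat\tau_h^2 - \tau_h^2 \overset{p}{\to} 0$ with $\tau_h^2$ bounded away from zero and infinity, and (ii) $\hat\Omega_h - \Omega_h \overset{p}{\to} 0$ with $\Omega_h = O(1)$. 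Throughout I write $r_T := ((\log p)^3/T)^{1-1/2\delta}$ for the nuisance rate of Theorem \ref{thm:errbd}, noting $r_T = o(1)$ under Assumption \ref{asm:add} \ref{asm:add:logp}.

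For (i) I would split $\hat\tau_h^2 - \tau_h^2 = \big(\tfrac1T\sumt \widehat v_{t,h}^2 - \tfrac1T\sumt v_{t,h}^2\big) + \big(\tfrac1T\sumt v_{t,h}^2 - \tau_h^2\big)$. The second term is a law of large numbers for $v_{t,h}^2 - E[v_{t,h}^2]$: since $v_{t,h}$ is $L_{2q}$-NED by Lemma \ref{lem:mixingale} \ref{lem:mixingale:v_NED}, its square is a mixingale (products of NED sequences are NED, hence mixingale under the moment bound of Assumption \ref{asm:NED} \ref{asm:NED:2q_bdd}), so Lemma \ref{lem:triplex_mixingale} delivers $o_p(1)$. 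For the first term, $\widehat v_{t,h} - v_{t,h}$ is the prediction error of $(\gamma_h - \widehat\gamma_h)$ against the covariates, so $\tfrac1T\sumt(\widehat v_{t,h} - v_{t,h})^2 = \tfrac1T\norm{\Wh(\widehat\gamma_h - \gamma_h)}^2 = O_p(r_T)$ by Theorem \ref{thm:errbd}, and the cross term is $O_p(r_T^{1/2})$ by Cauchy--Schwarz with $\tfrac1T\sumt v_{t,h}^2 = O_p(1)$. Finally $\tau_h^2 = E[\vh'\vh/T]$ is the residual variance of $x_t$ after projecting on $\wt$, bounded below by the minimum-eigenvalue condition of Assumption \ref{asm:add} \ref{asm:add:min_eig} and above by Assumption \ref{asm:NED} \ref{asm:NED:2q_bdd}; hence $\hat\tau_h^4$ stays bounded away from zero with probability approaching one and $1/\hat\tau_h^4 - 1/\tau_h^4 \overset{p}{\to} 0$.

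Part (ii) is the substantive step, which I would handle with an infeasible--feasible decomposition. Let $\tilde\Omega_h$ denote the Newey--West estimator built from the true products $\psi_{h,t} = v_{t,h}u_{t,h}$, so $\hat\Omega_h - \Omega_h = (\hat\Omega_h - \tilde\Omega_h) + (\tilde\Omega_h - \Omega_h)$. The term $\tilde\Omega_h - \Omega_h \overset{p}{\to} 0$ is classical HAC consistency: with $K\to\infty$ and $K/T\to 0$, the Bartlett weights control the truncation bias while the variance vanishes, using that $\psi_{h,t}$ is a mixingale (again a product of NED sequences) with summable autocovariances, so the long-run variance is well defined; here I borrow the kernel and bandwidth conditions of \cite{andrews1991heteroskedasticity}. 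For $\hat\Omega_h - \tilde\Omega_h$ I would write $\widehat v_{t,h} = v_{t,h} + \Delta^v_t$ and $\widehat u_{t,h} = u_{t,h} + \Delta^u_t$, where $\tfrac1T\sumt(\Delta^v_t)^2 = O_p(r_T)$ by Theorem \ref{thm:errbd} and $\tfrac1T\sumt(\Delta^u_t)^2 = O_p(r_T) + O_p(1/T)$, the extra $O_p(1/T)$ coming from $(\widehat\beta_h - \beta_h)x_t$ and controlled by the $\sqrt T$-consistency of Theorem \ref{thm:single}. Expanding $\widehat\psi_{h,t} - \psi_{h,t} = v_{t,h}\Delta^u_t + u_{t,h}\Delta^v_t + \Delta^v_t\Delta^u_t$ and inserting into the kernel-weighted double sum, repeated Cauchy--Schwarz reduces each block to products of the empirical $L_2$ norms of $\psi_{h,t}$, $v_{t,h}$, $u_{t,h}$ and the $\Delta$ terms, each multiplied by the $O(K)$ cost of summing $2K-1$ lags with weights bounded by one.

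The hard part will be this last bound: I must ensure that the growing bandwidth does not undo the merely polynomial nuisance rate, i.e. that $K\,\sqrt{r_T}\to 0$ while still $K\to\infty$, which pins down the admissible growth of $K$ and is where the Andrews-type rate condition enters. Once the estimated-residual error is shown to be $o_p(1)$ under this bandwidth restriction, combining it with classical HAC consistency gives $\hat\Omega_h \overset{p}{\to}\Omega_h$; together with the denominator result and the ratio bound above, Slutsky's theorem yields $\abs{\hat\sigma_h^2 - \sigma_h^2}\overset{p}{\to}0$, completing the proof.
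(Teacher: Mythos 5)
Your proposal is correct and follows the same overall architecture as the paper's proof: the same ratio decomposition, the same two-part treatment of $\hat\tau_h^2$ (Theorem \ref{thm:errbd} rates for the estimated-residual correction plus a triplex/mixingale law of large numbers for $\tfrac{1}{T}\sumt v_{t,h}^2$, with the lower bound on $\tau_h^2$ coming from Assumption \ref{asm:add} \ref{asm:add:min_eig}), and the same feasible-versus-infeasible split for the long-run variance. The one genuine difference is how the infeasible part is handled: you delegate $\tilde\Omega_h - \Omega_h \overset{p}{\to} 0$ to classical Andrews-type HAC consistency (\cite{andrews1991heteroskedasticity}), whereas the paper proves it internally---first establishing $\max_{t\le T}\abs{E[\psi_t\psi_{t-\ell}]} \le C\xi_\ell$ with $\xi_\ell$ of size $-b$ (via Davidson's product and mixingale theorems, following Adamek et al.), which simultaneously gives $\Omega_h \le C$ and the truncation bias $\abs{\Omega_h^K - \Omega_h}\to 0$, and then handling the sample-autocovariance law of large numbers lag by lag with the same triplex inequality (Lemma \ref{lem:triplex_mixingale}) used throughout. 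The paper's route is self-contained, which matters because the conditions of the classical HAC theorems would otherwise have to be verified under the NED assumptions of this setting; your route is shorter but leaves that verification implicit. On the other hand, your closing observation---that summing over the $2K-1$ lags multiplies the per-lag estimation error by $O(K)$, so one needs a bandwidth restriction of the form $K\sqrt{r_T}\to 0$ alongside $K\to\infty$---is a point the paper glosses over: its proof bounds each lag-$\ell$ term by $o_p(1)$ and then sums $K$ of them without comment, so your accounting of the bandwidth--rate interaction is, if anything, the more careful one.
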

\begin{proof}
    The proof can be found in Appendix \ref{sec:Appendix:Thrm:variance}.
\end{proof}

With Theorems \ref{thm:single} and \ref{thm:variance}, we can now construct $100(1-\alpha)\%$ confidence intervals for each horizon. Given a confidence level $\alpha$, an asymptotic $100(1-\alpha)\%$ confidence interval $\widehat{\mathcal{I}}_{\alpha,h}$ is given by
\begin{align}
    \widehat{\mathcal{I}}_{\alpha,h} := [\widehat{\beta}_h - z_{\alpha}\hat{\sigma}_h/\sqrt{T}, \:\:\widehat{\beta}_h + z_{\alpha}\hat{\sigma}_h/\sqrt{T}],
\end{align}
where $z_{\alpha} = \Phi^{-1}(1-\alpha/2)$ is the $(1-\alpha/2)$ quantile of the standard normal distribution and $\hat{\sigma}_h$ is defined in Theorem \ref{thm:single}.


\section{Empirical Applications}\label{sec:applications}
\subsection{The Effect of Democracy on GDP}\label{sec:applications:gdp}
I illustrate the performance of the proposed method through an empirical investigation of the response of GDP growth to democratization, revisiting the LP framework in \cite{acemoglu2019democracy}. They develop a binary index of democracy by gathering information from several different sources of data. The data covers $184$ countries from $1960$ to $2010$. In part of their analysis, they adopt the LP specificataion to see how the effect of democratization evolves over time. I adopt the baseline LP specification in \cite{acemoglu2019democracy}
\begin{align}\label{eq:anrr}
    y_{c,t+h} - y_{c,t-1} = \beta_h^{\text{LP-DiD}}\Delta D_{ct} + \delta_t^h + \sum_{j=1}^p\gamma_j^h y_{c,t-j} + \epsilon_{ct}^h,
\end{align}
where $y_{c,t}$ denotes the log GDP per capita in country $c$ at time $t$, $D_{ct}$ is the binary democracy variable, and $\delta_t^h$ the time dummy. The baseline model includes the lagged terms of the GDP to address the selection into democracy, as the descriptive data suggests the dip in GDP preceding democratizations. The identifying assumption is that conditional on the lags of GDP, countries that democratize do not follow a different GDP trend relative to other nondemocracies. 
This specification can be viewed as a version of the LP-DiD identification scheme by \cite{dube2023local}, where the sample is restricted to 
\begin{align*}
    \begin{cases}
      \text{democratizations}\quad &D_{it}=1, D_{i,t-1}=0\\
      \text{clean controls}\quad &D_{it}= D_{i,t-1}=0.
    \end{cases}
\end{align*}
I ran \eqref{eq:anrr} with the conventional LP and with my method, with $p=4$ lags of $y_{c,t}$ included as controls. The results are depicted in Figure \ref{fig:app_baseline}. While the point estimates are very similar, the proposed method has efficiency gain overall, especially for the logner horizon estimates. As expected, the standard errors increase with the horizon due to the decreasing effective sample size. However, even after accounting for long-run variance, the proposed method results in smaller standard errors, attributable to the model selection process. The empirical findings align with the original results, indicating significant GDP growth of 20 percent higher than the controls over a 20-year period.

\begin{figure}[h]
    \begin{center}
        \caption{The effect of democratization on GDP growth, baseline model}\label{fig:app_baseline}
        \hspace*{1.5cm}
        \includegraphics[width=.9\textwidth]{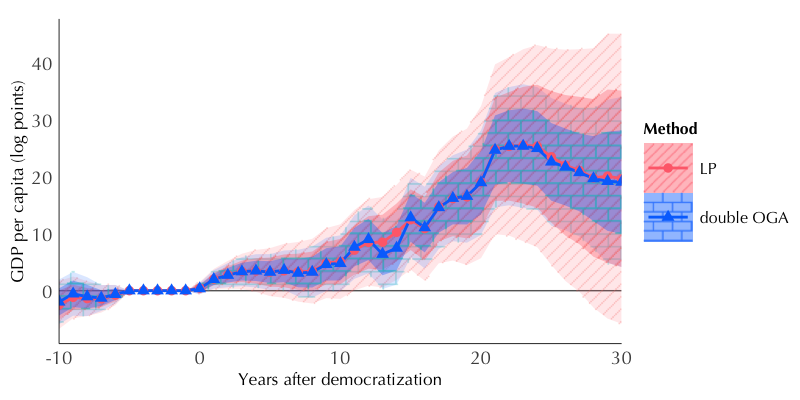}
    \end{center}\vspace*{-.4cm}
    {\footnotesize \textit{Notes.} This model follows the baseline model in \eqref{eq:anrr}, where $4$ lags of GDP are included as controls. 
    Solid lines represent the point estimates, derived using the LP (red, diagonal shading) and double OGA (blue, brick-shaped shading) methods. Light-shaded areas indicate 90\% confidence intervals; dark shades 68\%.}
\end{figure}

However, as noted in their paper, even after controlling for fixed effects and GDP dynamics, there are possible sources of bias coming from unobservables related to future GDP and the change in democracy.
To investigate these factors, the authors show the robustness of the results with different sets of controls. The sets of controls include potential trends related to differences in the level of GDP in the initial period, dummies for the democracy of Soviet and Soviet satellite countries, lags of trade and financial flows, lags of demographic structure, and the full set of region$\times$initial regime$\times$year dummies, and more in the appendix. To see the robustness against high-dimensional controls, I have combined these controls into a union set. This brings the number of covariates included in the model to $276$ with a sample size of $774$.

The results are displayed in Figure \ref{fig:app_highdim}. As expected for the conventional LP, the standard errors are larger due to the inclusion of all regressors, and the estimates are generally insignificant. Even with the proposed method, most estimates remain insignificant, although it manages to maintain moderate standard errors. However, the proposed method identifies significant GDP growth at longer horizons, starting from year $25$ onward. The percentage increase reaches $17.33$ percent in year $30$, which is consistent with the original findings using baseline controls. This result suggests that the proposed method provides reliable estimates even when considering a large number of potential confounders, which is essential for the causal interpretation of the parameter.

\begin{figure}[h]
    \begin{center}
        \caption{The effect of democratization on GDP growth, high-dimensional controls}\label{fig:app_highdim}
        \hspace*{1.5cm}
        \includegraphics[width=.9\textwidth]{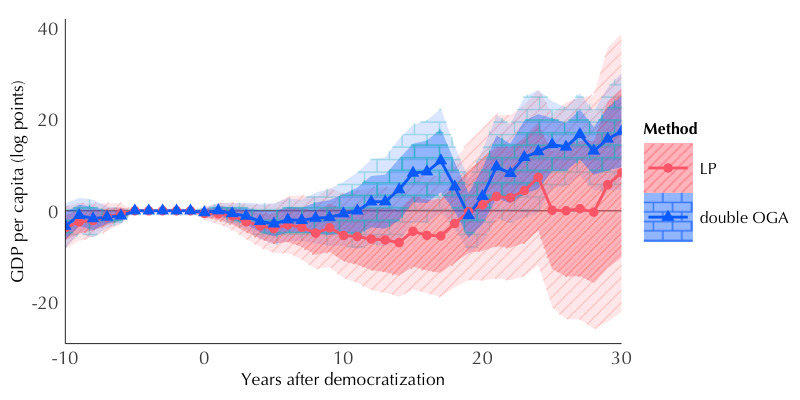}
    \end{center}\vspace*{-.4cm}
    {\footnotesize \textit{Notes.} The model includes a total of 276 covariates, which are listed in the main text.
    Solid lines represent the point estimates, derived using the LP (red, diagonal shading) and double OGA (blue, brick-shaped shading) methods. Light-shaded areas indicate 90\% confidence intervals; dark shades 68\%.}
\end{figure}

\subsection{Subjective Beliefs in Business Cycle Models}\label{sec:applications:subj_belief}
In this subsection, I revisit the empirical study by \cite{bhandari2024survey}, which explores the impact of subjective beliefs on macroeconomic aggregates. Their research develops a theory on how the bias of subjective beliefs from rational beliefs affects key economic indicators, formalizing these departures using model-consistent notions of pessimism and optimism.

The theoretical framework posits that fluctuations in beliefs, particularly increases in pessimism, have significant effects on the macroeconomy. This pessimism is hypothesized to be contractionary and to increase belief biases in both inflation and unemployment forecasts. The mechanism operates through various channels: pessimistic households lower current demand due to consumption smoothing, firms adjust their pricing and hiring strategies based on expectations of future economic conditions, and labor market frictions amplify these effects.
This shared pessimistic outlook creates a positive correlation between the biases in inflation and unemployment forecasts.

Figure \ref{fig:app_subj_model} replicates the dynamic responses originally presented in Figure 10 of \cite{bhandari2024survey}.
It compares the dynamic responses under two scenarios: one where all agents, including firms, hold subjective beliefs, and another where firms adopt rational expectations while households retain subjective beliefs. The solid line represents the case where all agents follow subjective beliefs, while the dashed line reflects the responses when firms adopt rational expectations. This comparison highlights important differences, where rational firms exhibit more muted fluctuations compared to the scenario where all agents are subjective. 
Specifically, rational firms keep inflation lower on impact, as they perceive an increase in pessimism as contractionary but do not anticipate higher future marginal costs, leading to smaller price adjustments.
This figure provides a baseline understanding of the impulse responses under different belief structures, which will serve as the foundation for further analysis.

\begin{figure}[hbt!]
    \begin{center}
        \caption{Impulse responses under subjective and rational beliefs}\label{fig:app_subj_model}
        \includegraphics[width=.9\textwidth]{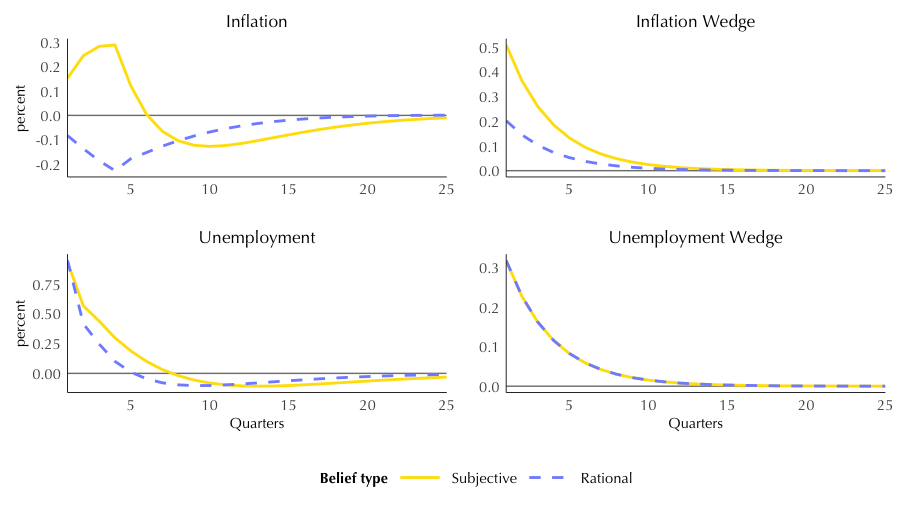}
    \end{center}
    \vspace*{-.4cm}
    {\footnotesize \textit{Notes.} Replicated from Figure 10 in \cite{bhandari2024survey}. This figure compares impulse responses under two scenarios: subjective beliefs for all agents (solid line) and rational expectations for firms (dashed line).}
\end{figure}

To test this theory empirically, the authors utilize data from multiple sources. The key variable is the belief bias, which they term the "belief wedge"---defined as the difference between subjective beliefs and rational forecasts. Subjective beliefs are measured using the University of Michigan Surveys of Consumers, while rational predictions are constructed using both VAR predictions and forecasts from the Survey of Professional Forecasters (SPF). The study focuses on quarterly data from 1982Q1 to 2019Q4.

Since their model predicts a one-factor structure of the belief wedges, they define the belief shock as the first principal component derived from the standardized inflation belief wedge and unemployment belief wedge.
To address a possible misspecification of VAR forecasts and hence the belief wedges, the authors construct a belief shock using the principal component of the belief wedges between the Michigan and SPF forecasts.
The question here is how positive deviation to this belief shock (pessimism) affects the economy.

The general estimating equation can be written as:
\begin{align} \label{eq:app:subj}
    y_{t+h} = \beta_h \theta_t + \sum_{\ell = 1}^{L} \delta_{\ell} \theta_{t-\ell} + \bold{w}_t'\gamma + u_{t,h},
\end{align}
where $\theta_t$ is the belief shock, the first PC of the inflation and unemployment belief wedges, $\bold{w}_t$ includes additional controls, and $u_{t,h}$ the error term.

\subsubsection*{Baseline Analysis and Robustness Across Specifications}
Figures \ref{fig:app_subj_combined_4lags}--\ref{fig:app_subj_combined_var_controls} present the responses of the inflation, unemployment, and the belief wedges to a positive innovation in the belief shock across different model specifications.
Each set of responses is estimated using the conventional LP and the proposed method. To provide context, the model-implied impulse response functions are included, where all agents are assumed to hold subjective beliefs.
This provides a theoretical benchmark against which the empirical results can be evaluated.

I begin by presenting the results for the baseline model, where no additional controls are included and the lag order is set to $4$.
As shown in Figure \ref{fig:app_subj_base}, the empirical results generally align with the model's predictions.
Both LPs and the model show initial increases in inflation. While the model does not generate the hump-shaped responses by both methods, the authors say the magnitude is comparable to what the model implies.
These outcomes align with increased pessimism, resulting in higher unemployment and inflation wedges.
This alignment provides initial support for the robustness of the original study's conclusions.

\begin{figure}[hbt!]
    \begin{center}
        \caption{Baseline and VAR controls added models with 4 lags}
        \label{fig:app_subj_combined_4lags}
        \begin{subfigure}[b]{\textwidth}
            \centering
            \caption{Baseline with 4 lags}\label{fig:app_subj_base}
            \includegraphics[width=.9\textwidth]{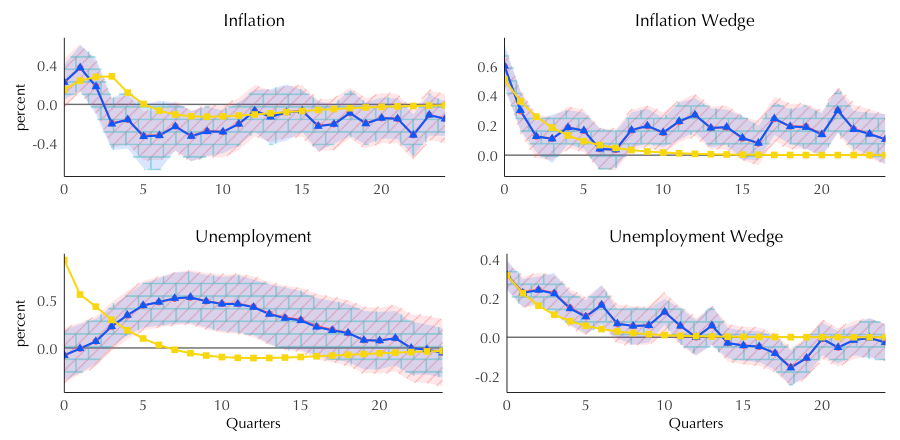}
        \end{subfigure}
        \begin{subfigure}[b]{\textwidth}
            \centering
            \caption{VAR controls with 4 lags}\label{fig:app_subj_var15_4lags}
            \includegraphics[width=.9\textwidth]{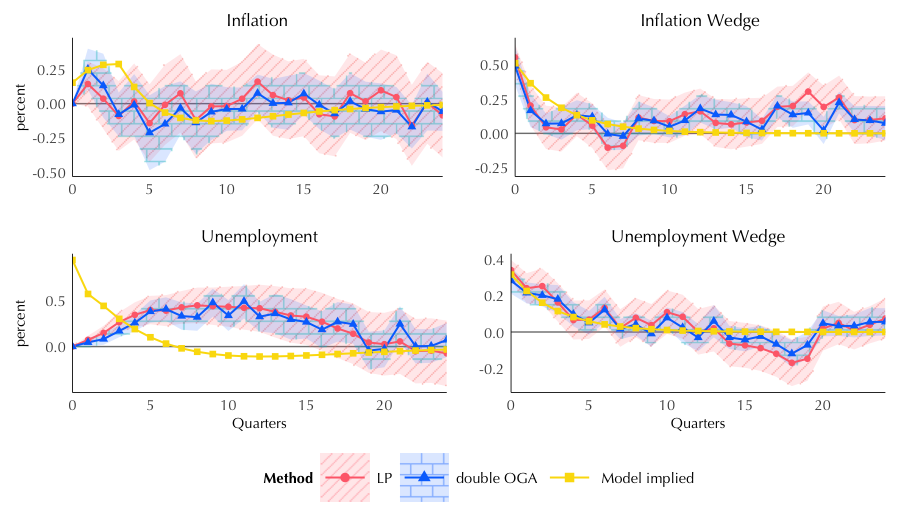}
        \end{subfigure}
    \end{center}
    \vspace*{-.4cm}
    {\footnotesize \textit{Notes.} This figure compares the baseline model with 4 lags (Panel (a)) and the VAR controls added model with 4 lags (Panel (b)) for inflation, inflation wedge, unemployment, and unemployment wedge. 
    Point estimates are shown as solid lines with circle markers for LP (red), triangle markers for double OGA (blue), and square markers for model-implied (yellow). Shaded areas represent 90\% confidence intervals, with diagonal shading for LP and brick-shaped shading for double OGA.}
\end{figure}

To ensure robustness, alternative model specifications are considered, including adding variables used to create the VAR forecasts as controls and increasing their lags. These controls, $\bold{w}_t$ in \eqref{eq:app:subj}, include key economic indicators such as GDP, inflation, unemployment, among others.

Invoking the recursive identification scheme, this approach assumes that these economic factors influence the belief shock variable, while the belief shock does not, in turn, affect these factors.
This assumption aligns intuitively, as both subjective beliefs and rational forecasts are derived from the current state of economic indicators.
The picture becomes more complex when additional controls are introduced. When I added the 9 variables used for the VAR prediction with 4 lags (Figure \ref{fig:app_subj_var15_4lags}), the standard errors for the conventional LP increased substantially. This increase in uncertainty makes it challenging to interpret some results coming from LP, particularly for inflation, where no significant effects are observed.

Further increasing the lags to 8 and 10 (Figures \ref{fig:app_subj_var15_8lags} and \ref{fig:app_subj_var15_10lags}), the conventional LP produces increasingly erratic results with inconsistent patterns. In contrast, the proposed method maintains consistent results with narrower confidence intervals across different specifications.
Notably, the proposed method closely mimics the model-implied impulse responses, with the exception of unemployment. While neither methods do not comply with the model-implied responses for unemployment, the proposed method consistently find an approximately 0.5 percent increase in unemployment in magnitude across all model specifications. 

\begin{figure}[hbt!]
    \begin{center}
        \caption{VAR controls added models with longer lags}
        \label{fig:app_subj_combined_var_controls}
        \begin{subfigure}[b]{\textwidth}
            \centering
            \caption{VAR controls with 8 lags}
            \includegraphics[width=.9\textwidth]{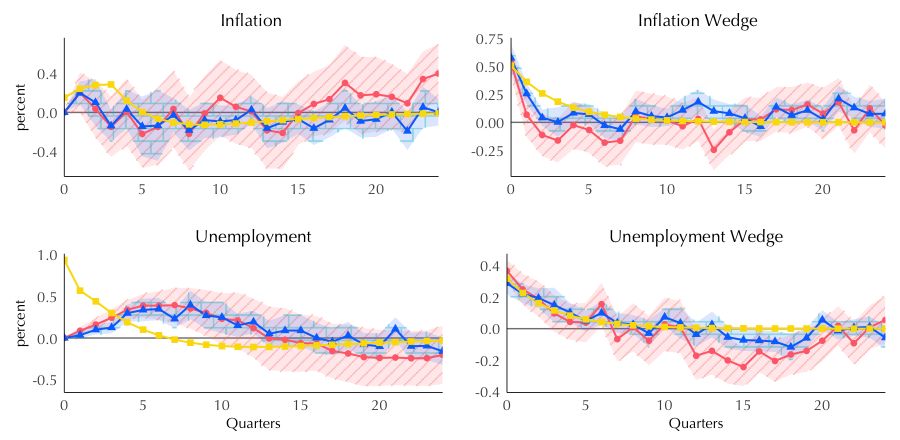}
            \label{fig:app_subj_var15_8lags}
        \end{subfigure}
        \begin{subfigure}[b]{\textwidth}
            \centering
            \caption{VAR controls with 10 lags}
            \includegraphics[width=.9\textwidth]{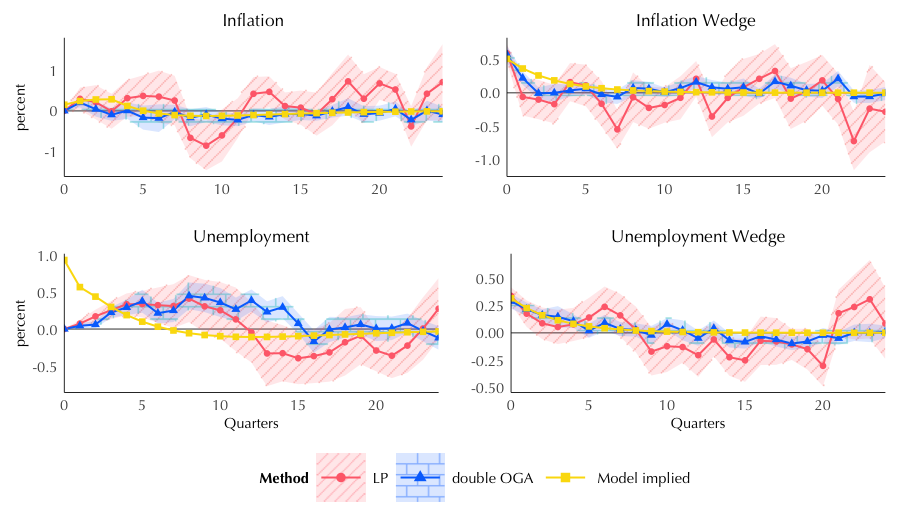}
            \label{fig:app_subj_var15_10lags}
        \end{subfigure}
    \end{center}
    \vspace*{-.4cm}
    {\footnotesize \textit{Notes.} This figure compares the VAR controls added model with 8 lags (Panel (a)) and with 10 lags (Panel (b)) for inflation, inflation wedge, unemployment, and unemployment wedge. 
    Point estimates are shown as solid lines with circle markers for LP (red), triangle markers for double OGA (blue), and square markers for model-implied (yellow). Shaded areas represent 90\% confidence intervals, with diagonal shading for LP and brick-shaped shading for double OGA.}
\end{figure}

These results reveal important insights about the robustness of different estimation methods across various model specifications. As the number of included variables and lags increases, the conventional LP method shows increasing variability in its results, producing wiggly estimates with wider standard errors. In contrast, the proposed method demonstrates remarkable robustness, maintaining consistent patterns and narrower confidence intervals across different specifications.

This robustness becomes particularly evident when the model includes more variables and higher lag orders. Importantly, the proposed method performs consistently in both simple and more complex settings, suggesting there's no disadvantage to using it when the underlying model structure is uncertain---a common scenario in empirical analysis.


For comparison, I also ran debiased LASSO\footnote{For the debiased lasso estimation, I use the R package \textit{desla} provided by \cite{adamek2023_hdlp}.} from \cite{adamek2023_hdlp} in Figure \ref{fig:app_subj_lasso}. The results with other model specifications are in Appendix \ref{sec:Appendix:empiric_subj_lasso}, where debiased LASSO shows similar patterns across different model specifications. In particular, the method yields highly persistent responses for inflation and inflation wedges. The unemployment response, while significant, tends to be overestimated, and the unemployment wedge exhibits a persistent negative bias over longer horizons.

These results are challenging to interpret both intuitively and within the theoretical perspective proposed by \cite{bhandari2024survey}. One possible explanation is that these models are relatively low-dimensional, which is not an ideal setting for LASSO. Alternatively, as suggested by the simulations, the high persistence in the data might be driving the estimates away from expected values.

\begin{figure}[hbt!]
    \begin{center}
        \caption{Baseline model with debiased LASSO}\label{fig:app_subj_lasso}
        \includegraphics[width=.9\textwidth]{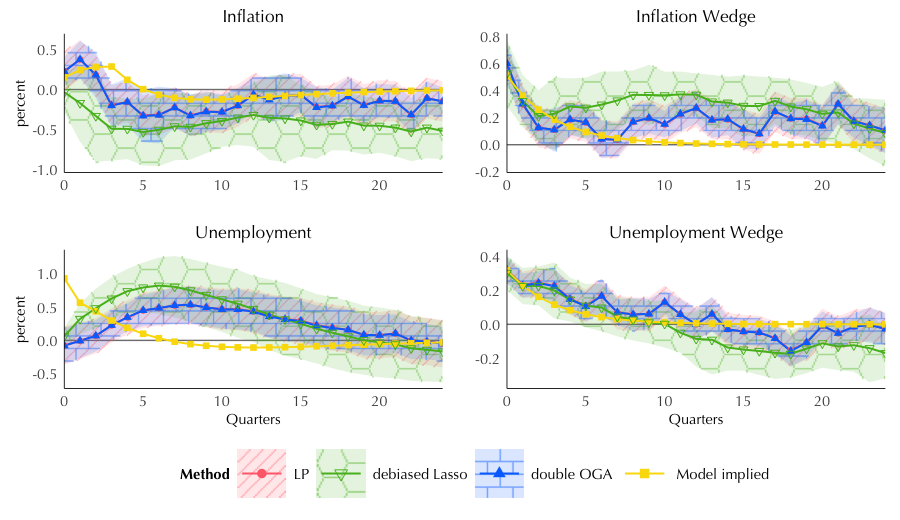}
    \end{center}
    \vspace*{-.4cm}
    {\footnotesize \textit{Notes.} This figure compares the baseline model for inflation, inflation wedge, unemployment, and unemployment wedge. 
    Point estimates are shown as solid lines with circle markers for LP (red), inverted triangle markers for debiased LASSO (green), triangle markers for double OGA (blue), and square markers for model-implied (yellow). Shaded areas represent 90\% confidence intervals, with diagonal shading for LP, hexagonal shading for debiased LASSO, and brick-shaped shading for double OGA.}
\end{figure}

\subsubsection*{State-Dependent Responses to Belief Shocks}

Another findings of their paper is the cyclical patterns in the belief wedges. While their empirical analysis only included the baseline model for impulse responses, I extend the analysis by exploring different empirical specifications to account for the cyclicality. To examine how the impulse resonses differ across differrent economic conditions, I use NBER recession indicators to separate belief shocks into those occurring in bad states ($\theta_t^b$) during recessions and good states ($\theta_t^g$) outside of recession periods. If this specification shows no difference in responses, we can conclude that economic states do not significantly alter the effects of pessimism shocks. However, if we observe differences, we can infer important implications from these results. The model is specified as follows:
\begin{align*}
    y_{t+h} = \beta_h^{g}\theta_t^{g} + \beta_h^{b}\theta_t^{b} + \sum_{\ell=1}^L\delta_{\ell}\theta_{t-\ell} + \bold{w}_t'\gamma + u_{t,h},
\end{align*}
where the key interest lies in comparing the estimates of $\beta_h^{g}$ and $\beta_h^{b}$. I begin with a baseline model without controls, then add controls used in VAR predictions, with both 4 and 8 lags in subsequent models.

In terms of results, I focus on inflation as it shows noticeable differences between good and bad states. The results for other dependent variables are presented in Appendix \ref{sec:Appendix:empiric_subj_goodnbad}, where they generally behave similarly across states: increase in both wedges, and hump-shaped patterns for the unemployment. However, inflation exhibits distinctive patterns depending on the economic state. Figure \ref{fig:app_subj_goodnbad} shows the responses of inflation in good (left panel) and bad (right panel) states.
For the baseline model, there is no clear pattern in good states, as no significant effects are observed. In bad states, we observe an initial positive response followed by negative responses. The behavior in bad states, especially for early horizons, resembles the subjective agent model predictions from \cite{bhandari2024survey}, as shown in Figure \ref{fig:app_subj_model}.

As in the previous discussion, adding controls make intuitive sense in that key economic indicators affect in the formation of both the subjective and rational beliefs. Importantly, focusing on the models with VAR controls, the proposed model yields robust estimates regardless of lag length. And this is where we see more obvious patterns. Unlike in the baseline model were we couldn't find any significant effect for the good state, we now see significant negative responses in the beginning horizons, in both models regardless of the lag length. Also, we observe significant positive responses in bad states, while the magnitude is smaller than what the model predicts. For convenience, I also included the point estimates and the standard errors in Table \ref{tab:app_subj_goodnbad}. For the bad states in subtable (b), we do see positive point estimates compared to negative ones in the good states coutnerparts.

These findings suggest that the economy behaves more rationally in good states, even in the presence of a pessimism shock, while firms act more like subjective agents in bad states.
In good states, as in the rational firms model of \cite{bhandari2024survey}, firms recognize a pessimism shock but do not associate it with higher future inflation, leading to only a slight and temporary dip in inflation before recovery. This behavior reflects a confidence in the economy's strong fundamentals, which prevents firms from overreacting to the shock. 

In bad states, however, pessimism raises the subjective probability of negative outcomes like lower productivity and tighter monetary policy, as noted by the subjective agents model of \cite{bhandari2024survey}. This leads firms to expect higher future marginal costs, thus reducing their incentive to lower prices even during a contraction. The ``muted" inflation response in bad states, which actually involves an increase in inflation, is due to firms' defensive pricing behavior as they brace for worsening conditions. 
This mirrors \cite{shiller2003efficient}'s concept of negative feedback loops where bad economic states reinforce pessimism, leading to irrational decisions and inflationary pressures as firms anticipate continued declines in productivity and demand.

\begin{figure}[hbt!]
    \begin{center}
        \caption{Impulse responses of inflation with good and bad states}
        \label{fig:app_subj_goodnbad}
        \includegraphics[width=.9\textwidth]{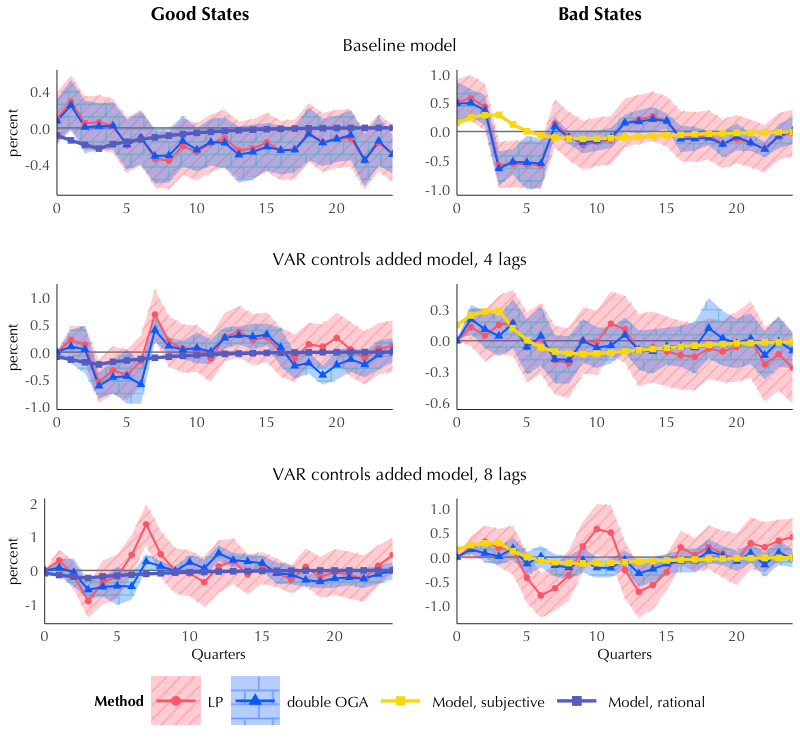}
    \end{center}
    \vspace*{-.4cm}
    {\footnotesize \textit{Notes.} This figure illustrates the impulse responses of inflation to a belief shock in both good and bad states under various model specifications. The good states (left panels) and bad states (right panels) are identified by NBER recession indicators. The model specifications include a baseline model and models with added VAR controls (4 lags and 8 lags).}
\end{figure}

In summary, the empirical analysis largely supports the findings of \cite{bhandari2024survey}, showing that positive shocks to belief wedges lead to increases in both inflation and unemployment, consistent with the hypothesized effects of heightened pessimism. The proposed method demonstrates robustness across different model specifications, producing stable results with added controls and varying lag lengths.
A key contribution of this analysis is the exploration of belief shock responses in different economic states. In good states, inflation shows a temporary dip followed by recovery, reflecting rational firm behavior. However, in bad states, inflation increases, driven by firms’ pessimistic expectations of higher future marginal costs, as predicted by the subjective agent model. This state-dependent inflation response highlights the role of economic conditions in shaping the effects of pessimism shocks. While other variables, such as unemployment, exhibit consistent dynamics across states, the differentiated inflation response underscores the importance of considering state-dependent behaviors in empirical analyses. These results open avenues for future research to refine theoretical models to account for these distinct patterns across economic conditions.

\section{Concluding Remarks}\label{sec:conclusion}
Local projection has emerged as the preferred alternative to VARs in impulse response analysis due to its robustness to model misspecification and ease of implementation (\citealp{MontielMikkel2021_ecta_lpinference}; \citealp{olea2024double}). 
As the use of local projections has gained traction, recent literature has addressed the challenges of incorporating high-dimensional covariates, with a particular focus on methods like LASSO. However, the reliance on strong sparsity assumptions in these methods limits their applicability in many empirical settings, especially when dense DGPs are present.


This paper aims to address the gap in the literature by introducing a high-dimensional local projection approach that caters to both sparse and dense settings and takes into account the uncertainty of sparseness in the DGP. Building on the OGA with HDAIC method proposed by \cite{ing2020}, the proposed method relaxes the need for strict sparsity by allowing parameters to decay toward zero as dimensionality increases, making it especially suitable for economic time series with autoregressive properties. 

The proposed framework demonstrates significant advantages in both DiD estimation and impulse response analysis. In addition, the proposed approach has the advantage of interpretability through the use of OGA, which orders covariates based on their explanatory power. Through simulations, I show strong performance in handling persistent data and longer horizons estimations.

By utilizing the OGA with HDAIC and the NED assumptions, the local projection estimator achieves $\sqrt{T}$ asymptotic normality with HAC standard errors. The theoretical foundation of this method is based on the error bounds of \cite{ing2020}, the triplex inequality of \cite{jiang2009uniform}, and double selection arguments of \cite{bch2013res}. These advances strengthen the framework's capacity to handle complex time series data while maintaining reliable inference.

In conclusion, this method advances econometric modeling by offering a practical, robust solution for high-dimensional datasets. Its applicability to both sparse and dense scenarios makes it a practical tool for researchers working on a range of empirical applications, from macroeconomic analysis to event studies, providing a framework that ensures consistency and reliability across a broad range of empirical applications.


\clearpage
\bibliographystyle{ecta}
\bibliography{biblio}

\clearpage
\appendix
\section{Proof of Main Theorems}\label{sec:Appendix:Thrm}
\begin{not*}
    For a vector $X=(X_1,\dots,X_p)'$, $||X||_q = (\sum_{j=1}^p X_j^q)^{1/q}$ denotes the L-$q$ norm for $q < \infty$ and $||X||_{\infty} = \max_{1\le j \le p} |X_j|$.
\end{not*}
\subsection{Proof of Theorem \ref{thm:errbd}}\label{sec:Appendix:Thrm:errbd}
\begin{proof}
    The main purpose of this proof is to show that the assumptions \textbf{(2.33)} and \textbf{(2.34)} in \cite{ing2020} hold under Assumptions \ref{asm:NED} -- \ref{asm:add}.
    Since the estimating equations are \eqref{eq:yonw} and \eqref{eq:xonw}, the assumptions should hold on the two equations.
    
    I start with condition \textbf{(2.34)} in \cite{ing2020} because it is a stringent condition than \textbf{(2.33)}.
    It states that there exists a constant $c>0$ such that 
    \begin{align}\label{eq:ingA2}
        P\left( \max_{1\le j,k\le p,j\neq k} \left|\frac{1}{T}\sum_{t=1}^T w_{t,j}w_{t,k} - E[w_{t,j}w_{t,k}] \right|\ge c \sqrt{\frac{(\log p)^3}{T}} \right) = o(1).
    \end{align}

    For any $x\ge 0$, it follows from the union bound that
    \begin{align*}
        P&\paren{\maxjk \abs{\frac{1}{T}\sumt \ww - E[\ww]} \ge x} \\
        &\qquad\qquad \le \sum_{j=1}^p \sum_{k=1}^p P\paren{\abs{\frac{1}{T}\sumt \ww - E[\ww]} \ge x}\\
        &\qquad\qquad \le p^2\, P\paren{\abs{\frac{1}{T}\sumt \ww - E[\ww]} \ge x}.
    \end{align*}
    Notice that by Lemma \ref{lem:mixingale} \ref{lem:mixingale:ww_mixingale}, we can apply the triplex inequality
    Lemma \ref{lem:triplex_mixingale},
    \begin{align}
        p^2 \,P&\paren{\abs{\frac{1}{T}\sumt \ww - E[\ww]} \ge x} \notag \\
        \le&\:\: 2p^2m\exp{\paren{-\frac{T x^2}{288 m^2 M^2}}} \tag{BD1} \label{eq:triplex:bd1}\\
        &+ \frac{6 \overline{c}_T}{x} p^2\rho_m \tag{BD2} \label{eq:triplex:bd2}\\
        &+ \frac{15 C}{x} p^2 \exp(-M(\bar{q}-1)) , \tag{BD3} \label{eq:triplex:bd3}
    \end{align}
    where $m$ is a positive integer number and $M>0$ is some constant. I proceed by showing that there is a sequence $\eta_T\to 0$ as $T\to \infty$ that bounds all the components \eqref{eq:triplex:bd1} -- \eqref{eq:triplex:bd3}. Since Lemma \ref{lem:triplex_mixingale} holds for all positive integer $m$ and a constant $M>0$, the bounds can be further simplified by defining $m$ and $M$ in terms of $T$ and $p$ (\cite{jiang2009uniform}, Remark 2). Let $M = c_M \log p$ and $m = 1$, where $c_M > 2/(\bar{q}-1)$. 
    To match the convergence rate in \eqref{eq:ingA2}, let $x=c_1 ((\log p)^3/T)^{1/2}$, where $c_1 > 24c_M$. By Assumption \ref{asm:mixingale}, $\rho_m\le \exp(-c_\kappa \log p)$, and the bounds become
    \begin{align*}
        P&\paren{\maxjk \abs{\frac{1}{T}\sumt \ww - E[\ww]} \ge x} \\
        \le&\:\:C p^2\exp\paren{-\frac{c_1^2}{288c_M^2}\log p} \\
        &+ C x^{-1} p^2 \exp\paren{- c_{\kappa} \log p} \\
        &+ C x^{-1} p^2 \exp(-c_M(\bar{q}-1)\log p),
    \end{align*}
    where I abuse the notation $C$ for a generic constant, because only the constants inside the exponential terms matter.
    The first term is $o(1)$ if $c_1^2/(288c_M^2)>2$, which is satisfied by setting $c_1>24c_M$. Similarly, the second and the third term are $o(1)$ if $c_\kappa>2$ and $c_M(\bar{q}-1)>2$, which is satisfied by definitions of the constants. Therefore, all terms are $o(1)$ and hence equation \ref{eq:ingA2} is satisfied.

    Next we turn to condition \textbf{(2.32)} in \cite{ing2020}. It states that there exists a constant $c>0$ such that 
    \begin{align}\label{eq:ingA1_wu}
        P\left( \max_{1\le j\le p} \left|\frac{1}{T}\sum_{t=1}^T w_{t,j}e_{t,h} \right|\ge c \sqrt{\frac{(\log p)^3}{T}} \right) = o(1)
    \end{align} and
    \begin{align}\label{eq:ingA1_wv}
        P\left( \max_{1\le j\le p} \left|\frac{1}{T}\sum_{t=1}^T w_{t,j}v_{t,h} \right|\ge c \sqrt{\frac{(\log p)^3}{T}} \right) = o(1),
    \end{align}
    for all $h = 1,\dots,H_{\max}$.

    Consider the case with $\{w_{t,j}e_{t,h}\}$. 
    Applying the union bound, for any $x\ge 0$ it holds that
    \begin{align*}
        P&\paren{\max_{1\le j\le p} \abs{\frac{1}{T}\sumt w_{t,j}e_{t,h} - E[w_{t,j}e_{t,h}]} \ge x} \\
        &\qquad\qquad \le \sum_{j=1}^p P\paren{\abs{\frac{1}{T}\sumt w_{t,j}e_{t,h} - E[w_{t,j}e_{t,h}]} \ge x}\\
        &\qquad\qquad \le p\, P\paren{\abs{\frac{1}{T}\sumt w_{t,j}e_{t,h} - E[w_{t,j}e_{t,h}]} \ge x}.
    \end{align*}
    By Assumption \ref{asm:mixingale} \ref{asm:mixingale:rhom} and Lemma \ref{lem:mixingale} \ref{lem:mixingale:wu_mixingale}, we can apply Lemma \ref{lem:triplex_mixingale}, and
    \begin{align*}
        p\, P&\paren{\abs{\frac{1}{T}\sumt w_{t,j}e_{t,h} - E[w_{t,j}e_{t,h}]} \ge x} \\
        \le&\:\:C p\exp\paren{-\frac{c_1^2}{288c_M^2}\log p} \\
        &+ C x^{-1} p \exp\paren{- c_{\kappa} \log p} \\
        &+ C x^{-1} p \exp(-c_M(\bar{q}-1)\log p),
    \end{align*}
    where all three terms are bounded by $\eta_T\to 0$ used to bound \eqref{eq:triplex:bd1} -- \eqref{eq:triplex:bd3} since the above right hand side terms grow at a slower rate, by $p^{-1}$. Therefore, \eqref{eq:ingA1_wu} holds.

    The same argument applies for $\{\wv\}$ case, because Assumption \ref{asm:mixingale} \ref{asm:mixingale:rhom} and Lemma \ref{lem:mixingale} applies analogously with same constants $\{c_t\}$ and $\rho_m$. By Lemma \ref{lem:triplex_mixingale}, all three terms are bounded by the same $\eta_T\to 0$, and hence \eqref{eq:ingA1_wv} holds.

    Assumptions (A3) in \cite{ing2020} are assumed in Assumption \ref{asm:Ing2020:A3A4} and Assumption (A5) in \cite{ing2020} is assumed in Assumption \ref{asm:add} \ref{asm:add:IngA5}.
    This set of assumptions collectively satisfies the necessary conditions outlined in Theorem 3.1 of \cite{ing2020}, thereby yielding the desired results.
\end{proof}

\subsection{Proof of Theorem \ref{thm:single}}\label{sec:Appendix:Thrm:single}
\begin{proof}
    In this proof, I use notations in \eqref{eq:lp_vector_xonw}.
    Further, write \eqref{eq:lp_vector_xonw} as
    \begin{equation}\label{eq:lp_thm:single}
        \begin{split}
        \boldsymbol{y}_h =& \beta_h \boldsymbol{x}_h + W_h \beta_{-h} + \boldsymbol{u}_h =: \beta_h \boldsymbol{x}_h + \mathfrak{g} + \boldsymbol{u}_h,\\
        \boldsymbol{x}_h =& W_h \gamma_h + \boldsymbol{v}_h =: \mathfrak{m} + \boldsymbol{v}_h,
        \end{split}
    \end{equation}
    where $\mathfrak{g}$ and $\mathfrak{m}$ are functions of $W_{h}$.
    In the following proof, I first show the probability limit of $\sqrt{T} (\widehat{\beta}_h - \beta)$ using similar arguments in \cite{bch2013res} and then derive asymptotic distribution of it using Theorem \textbf{25.12} in \cite{davidson2021stochastic}.
    
    Recall the subset of the covariates notation $\Wh(J) = (W_{h,j})_{j\in J}$ and denote $\widehat{\Sigma}(J) = \Wh(J)'\Wh(J)/T$.
    Define 
    \begin{align*}
        \sqrt{T} (\widehat{\beta}_h - \beta) &= (\xh'M \xh/T)^{-1} (\xh' M (\fg + \uh)/\sqrt{T})\\
        &=:  A ^{-1} B,
    \end{align*}
    where $M_J := I - P_J$ and $P_J = \Wh({J})(\Wh({J})'\Wh({J}))^{-1}\Wh({J})'$. Define $\widetilde{J} = \widehat{J}_{x}\cup \widehat{J}_{y}$, where $\widehat{J}_{x}$ and $\widehat{J}_{y}$ are the chosen covariates from the steps 1 and 2 in Algorithm \ref{alg:double}. Also, let $\widetilde{\gamma}_h$ the projection coefficient of $\xh$ onto span $W_h(\widetilde{J})$ and let $\widetilde{\alpha}_h$ be the projection coefficient of $\vh$ onto span $W_h(\widetilde{J})$, respectively.
    
    I will proceed by showing that
    \begin{align*}
        A = \vh'\vh/T + o_p(1), && B = \vh'\uh/\sqrt{T} + o_p(1).
    \end{align*}

    Decompose $A$ and $B$ into pieces using \eqref{eq:lp_thm:single}.
    \begin{align*}
        A =& (\xh'M_{\widetilde{J}}\xh)/T \\ 
          =& (\fm + \vh)'M_{\widetilde{J}}(\fm + \vh)/T \\
          =& \fm'M_{\widetilde{J}}\fm/T + 2\fm'M_{\widetilde{J}} \vh/T + \vh'\vh/T - \vh'P_{\widetilde{J}}\vh/T,\\
        B =& \xh' M_{\widetilde{J}} (\fg + \uh)/\sqrt{T} \\
          =& (\fm + \vh)' M_{\widetilde{J}} (\fg + \uh)/ \sqrt{T}\\
          =& \fm' M_{\widetilde{J}} \fg/\sqrt{T} + \fm' M_{\widetilde{J}} \uh/\sqrt{T} + \vh' M_{\widetilde{J}} \fg/\sqrt{T} + \vh'\uh/\sqrt{T} - \vh' P_{\widetilde{J}} \uh/\sqrt{T}.
    \end{align*}
    {
    Notice that the pieces in $A$ and $B$ can be bounded similarly, except the bounds in $B$ should be more restrictive. Each piece in $A$ and $B$ can be then decomposed into components, using idempotent property of the orthogonal projection matrix $M$. Below are the components used as ingredients for bounding $A$ and $B$.
    }
    \begin{enumerate}[label=\roman*)]
        \item Bounds for $\norm{\Wh'\vh/T}_{\infty}$ and $\norm{\Wh'\uh/T}_{\infty}$\\
        From the proof of Theorem \ref{thm:errbd}, \eqref{eq:ingA1_wu} and \eqref{eq:ingA1_wv} holds, and hence
        \begin{align}\label{eq:BD1_WvTinfty}
            \norm{\Wh'\vh/T}_{\infty} = O_p\paren{\paren{(\log p)^3/T}^{1/2}},
        \end{align}
        and the same argument applies to $\norm{\Wh'\uh/T}_{\infty}$ because $e_{t,h}$ and $u_{t,h}$ share the same assumption in Assumption \ref{asm:NED}.

        \item Bounds for $\norm{\widetilde{\gamma}_h - \gamma_h}$ and $\norm{\widehat{\beta}_{-h} - \beta_{-h}}$\\
        Recall from Assumption \ref{asm:add} \ref{asm:add:min_eig} and Theorem \ref{thm:errbd} we have
        \begin{align*}
            \norm{\widehat{\gamma}_h - \gamma_h } &\le \paren{\Lambda_{\min}^{-1} E_T \left[ \frac{1}{T}\sumt\paren{ (\widehat{\gamma}_h - \gamma_h)'\boldsymbol{w}_t}^2 \right]}^{1/2}\\
            &= O_p\paren{\paren{\frac{(\log p)^3}{T}}^{(2\delta-1)/(4\delta)}}.
        \end{align*}
        Similarly,
        \begin{align*}
            \norm{\widehat{\beta}_{-h} - \beta_{-h}} &\le \paren{\Lambda_{\min}^{-1} E_T \left[ \frac{1}{T}\sumt\paren{ (\widehat{\beta}_{-h} - \beta_{-h})'\boldsymbol{w}_t}^2 \right]}^{1/2}\\
            &\le \paren{\Lambda_{\min}^{-1}E_T\left[ \frac{1}{T}\sumt \paren{(\widehat{\lambda}_h - \lambda_h)'\wt}^2 \right]}\\
            &= O_p\paren{\paren{\frac{(\log p)^3}{T}}^{(2\delta-1)/(4\delta)}},
        \end{align*}
        where the first inequality comes from Assumption \ref{asm:add} \ref{asm:add:min_eig}, the second from $\widetilde{J}\subset \widehat{J}_{y}$, and the third from Theorem \ref{thm:errbd}.

        \item Bounds for $\norm{M_{\widetilde{J}}\fm/\sqrt{T}}$ and $\norm{M_{\widetilde{J}}\fg/\sqrt{T}}$\\
        It follows from the definition of $M_{\widetilde{J}}$ and $\widetilde{\gamma}_h$ that $M_{\widetilde{J}}\fm = (I - P_{\widetilde{J}})W_h\gamma_h = W_h(\gamma_h - \widetilde{\gamma}_h)$ and hence
        \begin{equation}\label{eq:BD4_MmsqrtT}
            \begin{split}
            \norm{M_{\widetilde{J}}\fm/\sqrt{T}} &\le \norm{\Wh(\widetilde{\gamma}_h - \gamma_h)/\sqrt{T}}\\
            &= O_p\paren{\paren{\frac{(\log p)^3}{T}}^{(2\delta - 1)/(4\delta)}}
            \end{split}
        \end{equation}
        from Theorem \ref{thm:errbd}.
        For $\norm{M_{\widetilde{J}}\fg/\sqrt{T}}$, it holds by equation \eqref{eq:lp_thm:single} and triangle inequality that
        \begin{align*}
            \norm{M_{\widehat{J}_y}\yh/\sqrt{T}} &\ge \norm{M_{\widetilde{J}}(\beta_h \xh + \fg)/\sqrt{T}}\\
            &\ge \norm{\beta_h}\norm{M_{\widetilde{J}}\xh/\sqrt{T}} - \norm{M_{\widetilde{J}}\fg/\sqrt{T}},
        \end{align*}
        where $\norm{\beta_h}\le C$ by Assumption \ref{asm:Ing2020:A3A4}. Since $\norm{M_{\widehat{J}_y} \yh/\sqrt{T}}$ and $\norm{M_{\widetilde{J}}}\xh/\sqrt{T}$ share the same error bound by Theorem \ref{thm:errbd},
        \begin{align*}
            \norm{M_{\widetilde{J}}\fg/\sqrt{T}} = O_p\paren{\paren{\frac{(\log p)^3}{T}}^{(2\delta - 1)/(4\delta)}}.
        \end{align*}

        \item Bounds for $\norm{\widetilde{\gamma}_h - \gamma_h}_1$ and $\norm{\widehat{\beta}_{-h} - \beta_{-h}}_1$\\
        Denote $\gamma_h(J) = (\gamma_{j,h})_{j=1}^p$, where $\gamma_{k,h}=0$ for all $k\notin J$. By triangle inequality it follows that
        \begin{align*}
            \norm{\widetilde{\gamma}_h - \gamma_h}_1 &\le \norm{\widetilde{\gamma}_h - \gamma_h(\widetilde{J})}_1 + \norm{\gamma_h(\widetilde{J}^c)}\\
            &\le \norm{\widetilde{\gamma}_h - \gamma_h(\widetilde{J})}_1 + \norm{\gamma_h - \gamma_h(\widetilde{J})}_1.
        \end{align*}
        The first term is bounded by
        \begin{align*}
            \norm{\widetilde{\gamma}_h - \gamma_h(\widetilde{J})}_1 &\le \sqrt{|\widetilde{J}|}\norm{\widetilde{\gamma}_h - \gamma_h(\widetilde{J})}\\
            &\le \sqrt{\widehat{m}^y + \widehat{m}^x}\norm{\widehat{\gamma}_h - \gamma_h(\widehat{J}_x)}\\
            &\le C \sqrt{M_T^*} \norm{\widehat{\gamma}_h - \gamma_h(\widehat{J}_x)}\\
            &= O_p\paren{\paren{\frac{(\log p)^3}{T}}^{(\delta-1)/(2\delta)}},
        \end{align*}
        where the second inequality comes from $\widetilde{J} = \widehat{J}^{[1]}\cup \widehat{J}^{[2]}$ and the third from $P(\widehat{m}\ge C M_T^*)=0$, as proved in Section S2 of the Supplementary Material of \cite{ing2020}. The fourth comes from the definition of $M_T^*$ in equation \eqref{eq:def:Mstar} and the error bounds in Theroem \ref{thm:errbd}.
        The second term can be bounded by
        \begin{align*}
            \norm{\gamma_h - \gamma_h(\widetilde{J})}_1 &\le C \sum_{j\notin \widetilde{J}}\abs{\gamma_{j,h}}\\
            &\le CC_{\delta}\paren{\sum_{j\notin \widetilde{J}} \gamma_{j,h}^2}^{(\delta-1)/(2\delta-1)}\\
            &\le CC_\delta \Lambda_{\min}^{(-\delta+1)/(2\delta-1)}\paren{E_T\left[\frac{1}{T}\sumt (x_t - \widetilde{\gamma}_h'\wt)^2\right]}^{(\delta-1)/(2\delta-1)}\\
            &= O_p\paren{\paren{\frac{(\log p)^3}{T}}^{(\delta - 1)/(2\delta)}},
        \end{align*}
        where the first inequality comes from Assumption \ref{asm:add} \ref{asm:add:IngA5}: for all $J\subseteq \mathfrak{P}$ such that $|J|\le C(T/(\log p)^3)^{1/2}$, the first inequality holds as shown in \cite{ing2020} equation \textbf{(2.16)} and the following equation. The second inequality follows from Assumption \ref{asm:Ing2020:A3A4} \ref{asm:Ing2020:A3A4:A3}, the third is implied by Assumption \ref{asm:add} \ref{asm:add:min_eig} and the error bounds from Theorem \ref{thm:errbd}.
        Combining the two bounds, we have
        \begin{equation}\label{eq:BD2_gammaL1}
            \begin{split}
            \norm{\widetilde{\gamma}_h - \gamma_h}_1 &\le \norm{\widetilde{\gamma}_h - \gamma_h(\widetilde{J})}_1 + \norm{\gamma_h - \gamma_h(\widetilde{J})}_1\\
            &= O_p\paren{\paren{\frac{(\log p)^3}{T}}^{(\delta-1)/(2\delta)}}.
            \end{split}
        \end{equation}
        Similar argument applies to $\norm{\widehat{\beta}_{-h} - \beta_{-h}}_1$, where $\norm{\widehat{\beta}_{-h} - \beta_{-h}}_1 \le \norm{\widehat{\beta}_{-h}-\beta_h(\widetilde{J})}_1 + \norm{\beta_h - \beta_h(\widetilde{J})}_1$ by triangle inequality. The two terms are bounded by
        \begin{align*}
            \norm{\widehat{\beta}_h - \beta_h(\widetilde{J})}_1 &\le \sqrt{|\widetilde{J}|}\norm{\widehat{\beta}_h - \beta_h(\widetilde{J})}\\
            &\le \sqrt{M_T^*} \norm{\widehat{\lambda}_h - \lambda_h(\widehat{J}_y)}\\
            &= O_p\paren{\paren{\frac{(\log p)^3}{T}}^{(\delta-1)/(2\delta)}}
        \end{align*}
        and
        \begin{align*}
            \norm{\beta_h(\widehat{J})-\beta_h}_1 &\le C\sum_{j\notin \widetilde{J}}|\beta_{j,-h}|\\
            &\le CC_{\delta} \paren{\sum_{j\notin \widetilde{J}}\beta_{j,-h}^2}^{(\delta - 1)/(2\delta-1)}\\
            &\le CC_{\delta} \Lambda_{\min}^{(-\delta+1)/(2\delta-1)} \paren{E_T\left[\frac{1}{T}\sumt(y_{t+h} - \widehat{\beta}_{-h}'\wt)^2\right]}^{(\delta-1)/(2\delta-1)}\\
            &\le CC_{\delta}\Lambda_{\min}^{(-\delta+1)/(2\delta-1)}\paren{E_T\left[\frac{1}{T}\sumt(y_{t+h} - \widehat{\lambda}_{h}'\wt)^2\right]}^{(\delta-1)/(2\delta-1)}\\
            &= O_p\paren{\paren{\frac{(\log p)^3}{T}}^{(\delta-1)/(2\delta-1)}},
        \end{align*}
        where all the steps are analogous to deriving the bounds of $\norm{\widetilde{\gamma}_h-\gamma_h}_1$ except for the fourth inequality here, which comes from $\widehat{J}_y\subset\widetilde{J}$. Hence it follows that
        \begin{align*}
            \norm{\widehat{\beta}_{-h} - \beta_h}_1 = O_p\paren{\paren{\frac{(\log p)^3}{T}}^{(\delta-1)/(2\delta-1)}}.
        \end{align*}

        \item Bounds for $\norm{\widetilde{\alpha}_h}_1$\\
        Similarly to the previous bound,
        \begin{align*}
            \norm{\widetilde{\alpha}_h}_1 &\le \sqrt{\widetilde{J}}\norm{\widetilde{\alpha}_h}\\
            &\le \sqrt{|\widetilde{J}|}\norm{\widehat{\Sigma}^{-1}(\widetilde{J})}\norm{\Wh(\widetilde{J})'\vh/T}\\
            &\le |\widetilde{J}| \norm{\widehat{\Sigma}^{-1}(\widetilde{J})} \norm{\Wh(\widetilde{J})'\vh/T}_{\infty}\\
            &= O_p\paren{\paren{\frac{(\log p)^3}{T}}^{(\delta - 1)/(2\delta)}}
        \end{align*}
    \end{enumerate}

    Now back to $A$ and $B$,
    \begin{align*}
        A - \vh'\vh/T &\le \abs{\fm'M\fm/T} + 2\abs{\fm'M \vh/T} + \abs{\vh'P\vh/T}
    \end{align*}
    and
    \begin{align*}
        B - \vh'\uh/\sqrt{T} &\le \abs{\fm' M \fg/\sqrt{T}} + \abs{\fm' M \uh/\sqrt{T}} + \abs{\vh' M \fg/\sqrt{T}} + \abs{\vh' P \uh/\sqrt{T}},
    \end{align*}
    where the second and third pieces in $B-\vh'\uh/\sqrt{T}$ share the same bounds and $A - \vh'\vh/T = o_p(1)$ if $B - \vh'\uh/\sqrt{T} = o_p(1)$.
    The pieces in $B - \vh'\uh/\sqrt{T}$ can be bounded by
    \begin{align*}
        \abs{\fm' M \fg/\sqrt{T}} &\le \sqrt{T}\norm{M\fm/\sqrt{T}}\norm{M\fg/\sqrt{T}}\\
            &= O_p\paren{\paren{(\log p)^3}^{\frac{2\delta-1}{2\delta}} T^{\frac{1-\delta}{2\delta}}},\\
        \abs{\fm' M \uh/\sqrt{T}} &\le \sqrt{T}\abs{(\widetilde{\gamma}_h-\gamma_h)'\Wh'\uh/T}\\
            &\le \sqrt{T}\norm{\widetilde{\gamma}_h-\gamma_h}_1\norm{\Wh'\uh/T}_{\infty}\\
            &= O_p\paren{\paren{(\log p)^3}^{\frac{2\delta-1}{2\delta}}T^{\frac{1-\delta}{2\delta}}},\\
        \abs{\vh' P \uh/\sqrt{T}} &\le \abs{\widetilde{\alpha}_h'\Wh'\uh/\sqrt{T}}\\
            &\le \norm{\widetilde{\alpha}_h}_1 \sqrt{T}\norm{\Wh'\uh/T}_{\infty}\\
            &= O_p\paren{\paren{(\log p)^3}^{\frac{2\delta-1}{2\delta}}T^{\frac{1-\delta}{2\delta}}},
    \end{align*} 
    where all the pieces become $o(1)$ if $\log p = o(T^{\frac{\delta-1}{ 3(2\delta-1)}})$, which is satisfied by Assumption \ref{asm:add} \ref{asm:add:logp}. Therefore we have
    \begin{align*}
        \sqrt{T}(\widehat{\beta}_h - \beta_h) = \left( \vh'\vh/T \right)^{-1} \left( \vh'\uh/\sqrt{T} \right) + o_p(1).
    \end{align*}
    The remainder of the proof continues by deriving the asymptotic distribution of 
    \begin{align*}
        \left( \vh'\vh/T \right)^{-1} \left( \vh'\uh/\sqrt{T} \right) = \frac{1}{\sqrt{T}} \sum_{t=1}^T v_{h,t}u_{h,t}/\tau_h^2.
    \end{align*}
    I proceed by applying Theorem \textbf{25.12} in \cite{davidson2021stochastic}, showing that conditions (a) -- (c) hold.
    First, condition (a) states that $\ex{(\sumt X_t)^2} = 1$, where $X_t = \vu/(\tau_h^2 \sigma_h \sqrt{T})$.
    \begin{align*}
        \ex{\paren{\sumt X_t}^2} = \frac{1}{\tau_h^{4}\sigma_h^2}\ex{\frac{1}{T}\paren{\sumt \vu}^2} = \frac{1}{\tau_h^{4}\sigma_h^2} \Omega_h = 1.
    \end{align*}

    Next, consider the condition (b). By Lemma \ref{lem:mixingale} \ref{lem:mixingale:v_NED} and Theorem \textbf{18.9} of \cite{davidson2021stochastic}, $\{\vu\}$ is a causal $L_q-$NED of size $-b$, and hence of size $-1/2$. By Assumption \ref{asm:NED} \ref{asm:NED:NED}, it is NED on an $\alpha-$mixing array $\{\Upsilon_{Tt}\}$ of size $-b/(1/q-1/\bar{q})<-\bar{q}/(\bar{q}-2)$, which is satisfied by $\bar{q}>q>2$ in Assumption \ref{asm:NED}.

    The last condition (c) is on $L_{\bar{q}}-$boundedness.
    From the definition of $\tau_h$, 
    \begin{align*}
        \tau_h^2=& \min_{\gamma_j} \left\{E\left[\frac{1}{T}\sum_{t=1}^T (x_{h,t} - \gamma_h'\boldsymbol{w}_{t})^2 \right] \right\}\\
        \le& E\left[ \frac{1}{T}\sum_{t=1}^T (x_{h,t} - 0'\boldsymbol{w}_{t})^2 \right] = \frac{1}{T}\sum_{t=1}^TE[x_{h,t}^2] = \Sigma_{h,h} \le C,
    \end{align*}
    where $0$ denotes the $0$ vector and the last inequality follows by Assumption \ref{asm:add} \ref{asm:add:min_eig}.
    Also, by Assumption \ref{asm:NED} \ref{asm:NED:2q_bdd} and Cauchy-Schwarz inequality, it follows that $\vu$ is $L_{\bar{q}}-$bounded.
    Consider the condition (c)
    \begin{align*}
        \sup_{T,t} \paren{\ex{\abs{\vu/\tau_h^2}^{\bar{q}}}}^{1/\bar{q}} \le \frac{1}{\tau_h^2} \sup_{T,t}\paren{\ex{\abs{\vu}^{\bar{q}}}}^{1/\bar{q}} \le C,
    \end{align*}
    which follows from $\tau_h^2\le C$ and $L_{\bar{q}}-$boundedness of $\vu$.

    With conditions (a) -- (c) satisfied, it holds that $1/\sqrt{T}\,\sumt \vu/(\tau_h^2\sigma_h) \overset{d}{\to} N(0,1)$, and we can obtain the desired results by applying Slutsky's theorem.
\end{proof}

\subsection{Proof of Theorem \ref{thm:variance}}\label{sec:Appendix:Thrm:variance}
\begin{proof}
    Decompose
    \begin{align}\label{eq:var_firsteq}
    \begin{split}
    \abs{\hat{\sigma}_h^2 - \sigma_h^2} &= \abs{\frac{1}{\hat{\tau}_h^4} \hat{\Omega}_h - \frac{1}{\tau_h^4}\Omega_h}\\
    &\le \abs{\paren{\frac{1}{\hat{\tau}_h^4}-\frac{1}{\tau_h^4}}\paren{\hat{\Omega}_h - \Omega_h}} + \abs{\paren{\frac{1}{\hat{\tau}_h^4}-\frac{1}{\tau_h^4}}\Omega_h} + \abs{\frac{1}{\tau_h^4}\paren{\hat{\Omega}_h - \Omega_h}}.
    \end{split}
    \end{align}
    From Assumption \ref{asm:add} \ref{asm:add:min_eig}, $1/\tau_h^4$ is bounded by a constant. I further show the probability bounds for the other components, namely $\abs{\frac{1}{\hat{\tau}_h^4}-\frac{1}{\tau_h^4}}$, $\Omega_h$, and $\abs{\hat{\Omega}_h - \Omega_h}$.

    First, I establish
    \begin{align*}
    \abs{\frac{1}{\hat{\tau}_h^4}-\frac{1}{\tau_h^4}} \overset{p}{\to} 0.
    \end{align*}
    From the definition of $\hat{\tau}_h^2,$ it can be decomposed as
    \begin{align*}
    \hat{\tau}_h^2 &= \norm{\paren{\xh - \Wh\hat{\gamma}_h}/\sqrt{T}}^2\\
    &= \frac{1}{T}\sumt v_{t,h}^2 + \norm{\Wh\paren{\gamma_h - \hat{\gamma}_h}/\sqrt{T}}^2 + 2\abs{\vh'\Wh\paren{\gamma_h - \hat{\gamma}_h}/T},
    \end{align*}
    and we can write
    \begin{align*}
    \abs{\hat{\tau}_h^2 - \tau_h^2} \le \abs{\frac{1}{T}\sumt v_{t,h}^2 - \frac{1}{T}\sumt E[v_{t,h}^2]} + \norm{\Wh\paren{\gamma_h - \hat{\gamma}_h}/\sqrt{T}}^2 + 2\abs{\vh'\Wh\paren{\gamma_h - \hat{\gamma}_h}/T},
    \end{align*}
    where
    \begin{align*}
    \norm{\Wh\paren{\gamma_h - \hat{\gamma}_h}/\sqrt{T}}^2 &= O_p\paren{\paren{\frac{(\log p)^3}{T}}^{\frac{2\delta -1}{2\delta}}},\\
    \abs{\vh'\Wh\paren{\gamma_h - \hat{\gamma}_h}/T} &\le \norm{\Wh'\vh/T}_{\infty}\norm{\gamma_h - \hat{\gamma}_h}_1 = O_p\paren{\paren{\frac{(\log p)^3}{T}}^{\frac{2\delta-1}{2\delta}}}
    \end{align*}
    from equations \eqref{eq:BD4_MmsqrtT}, \eqref{eq:BD1_WvTinfty}, and \eqref{eq:BD2_gammaL1}.
    Now applying the triplex inequality to the first component by setting $x=c_1((\log p)^3/T)^{1/2}$,
    \begin{equation}\label{eq:triplex_vsq}
        \begin{split}
            P&\left(\abs{\frac{1}{T}\sumt v_{t,h}^2 - \frac{1}{T}\sumt E[v_{t,h}^2]} < c_1\paren{\frac{(\log p)^3}{T}}^{\frac{1}{2}} \right)\\
            &\le C \exp\paren{-\frac{c_1^2}{288c_M^2} \log p} + Cx^{-1}\exp(-c_\kappa \log p) + C x^{-1}\exp(-c_M (\bar{q}-1)\log p)
        \end{split}
    \end{equation}
    can be bounded by the sequence $\eta_T \to 0$ which bounds \eqref{eq:triplex:bd1} -- \eqref{eq:triplex:bd3}, because the bounds \eqref{eq:triplex:bd1} -- \eqref{eq:triplex:bd3} grow $p^2$ times faster than the above bounds.
    Combining all three parts, we have
    \begin{align}\label{eq:tausq_cons}
    \abs{\hat{\tau}_h^2 - \tau_h^2} = O_p\paren{\paren{\frac{(\log p)^3}{T}}^{\frac{1}{2}}}.
    \end{align}
    We can write
    \begin{align*}
    \abs{\frac{1}{\hat{\tau}_h^4}-\frac{1}{\tau_h^4}} &= \abs{\frac{1}{\hat{\tau}_h^2} - \frac{1}{\tau_h^2}}\abs{\frac{1}{\hat{\tau}_h^2}+\frac{1}{\tau_h^2}}\\
    &= \abs{\frac{\hat{\tau}_h^2 - \tau_h^2}{\tau_h^4 - \tau_h^2(\hat{\tau}_h^2 - \tau_h^2)}}
    \abs{\frac{\hat{\tau}_h^2 - \tau_h^2 + 2\tau_h^2}{\tau_h^4 - \tau_h^2(\hat{\tau}_h^2 - \tau_h^2)}}\\
    &\le \abs{\frac{\abs{\hat{\tau}_h^2 - \tau_h^2}}{\tau_h^4 - \tau_h^2\abs{\hat{\tau}_h^2 - \tau_h^2}}} \abs{\frac{\abs{\hat{\tau}_h^2 - \tau_h^2} + 2\tau_h^2}{\tau_h^4 - \tau_h^2 \abs{\hat{\tau}_h^2 - \tau_h^2}}} = o_p(1),
    \end{align*}
    where the last bound follows from \eqref{eq:tausq_cons} and Assumption \ref{asm:add} \ref{asm:add:min_eig}.

    Second, to show $\Omega_h \le C$, I first establish that $\max_{t\le T} E[\psi_{t}\psi_{t-\ell}]\le C \xi_\ell$, where $\xi_\ell \to 0$ is a sequence of size $-b$. This proof follows similar reasoning found in the proof of Lemma B.2. (ii) of \cite{adamek2022lasso}.
    First, by Assumption \ref{asm:NED} \ref{asm:NED:NED} and Lemma \ref{lem:mixingale} \ref{lem:mixingale:v_NED}, both $\{u_{t,h}\}$ and $\{v_{t,h}\}$ are $L_{2q}-$NED of size $-b$, and hence by Theorem \textbf{18.9} in \cite{davidson2021stochastic}, $\{v_{t,h}u_{t,h}\}$ is $L_q-$NED of size $-b$ for all $h=1,\dots,H_{\max}$. 
    Also, by Assumption \ref{asm:NED} \ref{asm:NED:2q_bdd} and Cauchy-Schwarz inequality, $\{v_{t,h}u_{t,h}\}$ is $L_{\bar{q}}-$bounded. Applying Theorem \textbf{18.6} of \cite{davidson2021stochastic}, $\{v_{t,h}u_{t,h}\}$ is an $L_q-$mixingale of size $-b$.

    Let $k=[\ell/2]$, and decompose $\ex{\psi_t\psi_{t-\ell}}$ by Minkowski's inequality.
    \begin{align*}
    \ex{\psi_t\psi_{t-\ell}} &\le \abs{\ex{\psi_t \paren{\psi_{t-\ell} - E_{t-\ell-k}^{t-\ell+k} \left[ \psi_{t-\ell} \right]}}} + \abs{\ex{\psi_t\, E_{t-\ell-k}^{t-\ell+k} \left[ \psi_{t-\ell} \right]}}\\
    &=: A + B.
    \end{align*}
    The first term can be bounded by Hölder inequality,
    \begin{align*}
    A \le \abs{\paren{E\Big[\psi_t^{\frac{q}{q-1}}\Big]}^{\frac{q-1}{q}}\paren{\ex{\abs{\psi_{t-\ell} - E_{t-\ell-k}^{t-\ell+k} \left[ \psi_{t-\ell} \right]}^{q}}}^{\frac{1}{q}}},
    \end{align*}
    where the $q/(q-1)$-th moment of $\psi_t$ is bounded by a constant from $L_{\bar{q}}-$boundedness, and the latter term is bounded by
    \begin{align*}
    \paren{\ex{\abs{\psi_{t-\ell} - E_{t-\ell-k}^{t-\ell+k} \left[ \psi_{t-\ell} \right]}^{q}}}^{\frac{1}{q}} \le C \zeta_k,
    \end{align*}
    where $\zeta_k$ is the sequence from the Definition \ref{def:NED}, since $\{\psi_{t-\ell}\}$ is $L_q-$NED of size $-b$. It follows that $\zeta_k = O(k^{-\tilde{b}})=O(\ell^{-\tilde{b}})$ for $\tilde{b}>b$.
    By LIE and Hölder's inequality it holds that
    \begin{align*}
    B &= \abs{\ex{E_{t-\ell-k}^{t-\ell+k} \psi_t \, E_{t-\ell-k}^{t-\ell+k} \psi_{t-\ell}}}\\
    &\le \abs{\paren{\ex{\abs{E_{t-\ell-k}^{t-\ell+k}\psi_t}^q}}^{\frac{1}{q}} \paren{\ex{\abs{E_{t-\ell-k}^{t-\ell+k} \psi_{t-\ell}}^{\frac{q}{q-1}}}}^{\frac{q-1}{q}} },
    \end{align*}
    where again the latter term can be bounded by a constant from $L_{\bar{q}}-$boundedness, and the first term can be bounded by
    \begin{align*}
    \paren{\ex{\abs{E_{t-\ell-k}^{t-\ell+k}\psi_t}^q}}^{\frac{1}{q}} 
    \le \paren{\ex{\abs{E_{-\infty}^{t-\ell+k}\psi_t}^q}}^{\frac{1}{q}} \le C \rho_{\ell-k},
    \end{align*}
    where the first inequality follows because conditioning is a contractionary projection in $L_p$ spaces. The sequence $\rho_{\ell-k}$ is from the Definition \ref{def:mixingale}, since $\{\psi_t\}$ is an $L_q-$mixingale of size $-b$. Similarly to $\zeta_k$, it follows that $\rho_{\ell-k} = O((\ell-k)^{-\tilde{b}})=O(\ell^{-\tilde{b}})$. Note also that $\zeta_k$ and $\rho_{\ell-k}$ are both independent of $t$ and hence, 
    \begin{align}\label{eq:scr_sizeb}
    \max_{t\le T} E[\psi_t\psi_{t-\ell}] \le C \xi_\ell, 
    \end{align}
    where $\xi_{\ell} = O(\ell^{-\tilde{b}})$. This implies that the covariances are absolutely summable.

    Now consider 
    \begin{align*}
    \abs{\Omega_h} &= \sum_{\ell = -T+1}^{T-1} \abs{\frac{1}{T}\sum_{t=\ell+1}^T E[\psi_t \psi_{t-\ell}]} \\
    &\le 2\sum_{\ell=0}^{T-1} \abs{\max_{t\le T} E[\psi_t\psi_{t-\ell}]} \le C,
    \end{align*}
    where the last inequality follows from the absolute summability.

    Finally, I show $\abs{\hat{\Omega}_h - \Omega_h} \to 0$.
    First, divide it into two terms,
    \begin{align*}
    \abs{\hat{\Omega}_h - \Omega_h} \le \abs{\hat{\Omega}_h - \Omega_h^K} + \abs{\Omega_h^K - \Omega_h},
    \end{align*}
    where $\Omega_h^K = \sum_{\ell= -K+1}^{K-1}\frac{1}{T}\sum_{t=\ell+1}^T E[\psi_t\psi_{t-\ell}]$.
    Note that by \eqref{eq:scr_sizeb}, the latter term can be bounded by
    \begin{align}\label{eq:var_oemgaK_omega}
    \abs{\Omega_h^K - \Omega_h} \le 2\sum_{\ell=K}^T \abs{\frac{1}{T}\sum_{t=\ell+1}^T E[\psi_t\psi_{t-\ell}]} \le C \sum_{\ell=K}^T \xi_{\ell} \le C \sum_{\ell=K}^T \ell^{-\tilde{b}},
    \end{align}
    where it converges to $0$ by the following arguments. Let $\tilde{b} = b + \epsilon = 1 + (b-1) + \epsilon$ for $\epsilon>0$ and let $\delta=\epsilon/2$. Since $K \le \ell$,
    \begin{align*}
    \sum_{\ell=K}^T \ell^{-\tilde{b}} \le K^{-b+1} \sum_{\ell=K}^T \ell^{-1-\epsilon} \le K^{1-b-\delta} \sum_{\ell=K}^T \ell^{-1-\delta},
    \end{align*}
    and $K^{1-b-\delta}\to 0$ with $b\ge 1$ and $\sum_{\ell=K}^T \ell^{-1-\delta}\to 0$ with $K\to \infty$ and the property of $p-$series.

    Now consider the first term.
    Using a telescopic sum argument, decompose
    \begin{align}\label{eq:var_omegahat_omegaK}
    \abs{\hat{\Omega}_h - \Omega_h^K} 
    &= \abs{\sum_{\ell = -K + 1}^{K-1} \paren{1-\frac{\ell}{K}} \frac{1}{T} \sum_{t=\ell+1}^T \hat{\psi}_t\hat{\psi}_{t-\ell} - \frac{1}{T}\sum_{t=\ell+1}^T E[\psi_t \psi_{t-\ell}]} \nonumber \\
    &\le 2 \sum_{\ell=0}^{K-1}\left( \abs{\frac{1}{T} \sum_{t=\ell+1}^T \paren{ \hat{\psi}_t\hat{\psi}_{t-\ell} - E[\psi_t \psi_{t-\ell}]}} + \frac{\ell}{K} \abs{\frac{1}{T}\sum_{t=\ell+1}^T E[\psi_t \psi_{t-\ell}]}\right).
    \end{align}

    Note that from \eqref{eq:scr_sizeb}, the latter term can be bounded by
    \begin{align*}
        \sum_{\ell=0}^{K-1} \frac{\ell}{K} \abs{\frac{1}{T}\sum_{t=\ell+1}^T E[\psi_t \psi_{t-\ell}]}
        \le \frac{C}{K}\sum_{\ell=1}^{K-1} \ell^{1-\tilde{b}} \le C K^{-\tilde{b}} \sum_{\ell=1}^{K-1}\paren{\frac{\ell}{K}}^{1-\tilde{b}} \le C K^{1 - \tilde{b}},
    \end{align*}
    where the last inequality comes from $\ell < K$ and $\sum_{\ell=1}^{K-1}\ell^{-1-\epsilon}\le C$ for $\epsilon>0$, and hence it converges to $0$ since $K^{1-\tilde{b}}\to 0$ for $b\ge 1$.

    Next, we turn to the first term in \eqref{eq:var_omegahat_omegaK}. From triangle inequality we have 
    \begin{align} \label{eq:psipsiell_A_B}
        \begin{split}
        &\abs{\frac{1}{T} \sum_{t=\ell+1}^T \paren{\hat{\psi}_t\hat{\psi}_{t-\ell} - E[\psi_t \psi_{t-\ell}]}}\\
        &\le \abs{\frac{1}{T} \sum_{t=\ell+1}^T \paren{\hat{\psi}_t\hat{\psi}_{t-\ell} - \psi_t \psi_{t-\ell}}} + \abs{\frac{1}{T} \sum_{t=\ell+1}^T \paren{\psi_t \psi_{t-\ell} - E[\psi_t \psi_{t-\ell}]}}\\
        &=: A + B.
        \end{split}
    \end{align}
    We can further write $A$ as
    \begin{equation}\label{eq:psipsiell_bound_A}
        \begin{split}
        A &\le \abs{\frac{1}{T} \sum_{t=\ell+1}^T \underbrace{\paren{\hat{u}_t\hat{u}_{t-\ell} - u_t u_{t-\ell}}}_{A(a)} \underbrace{v_t v_{t-\ell}}_{A(b)} } + \abs{\frac{1}{T} \sum_{t=\ell+1}^T \underbrace{\paren{\hat{v}_t\hat{v}_{t-\ell} - v_t v_{t-\ell}}}_{A(c)} \underbrace{u_t u_{t-\ell}}_{A(b')}  }\\
        &\phantom{dd} + \abs{\frac{1}{T} \sum_{t=\ell+1}^T \underbrace{\paren{\hat{u}_t\hat{u}_{t-\ell} - u_t u_{t-\ell}}}_{A(a)} \underbrace{\paren{\hat{v}_t\hat{v}_{t-\ell} - {v}_t{v}_{t-\ell}}}_{A(c)} },
        \end{split}
    \end{equation}
    where I omit the subscript for $h$ for simplicity.
    Consider the component $A(a)$ in the first and the last term. Using the baseline model specification \eqref{eq:lp_baseline} and the triangle inequality, we can write
    \begin{align*}
        \abs{\hat{u}_t\hat{u}_{t-\ell} - u_t u_{t-\ell}} &\le 2\abs{\paren{\beta_h - \hat{\beta}_h}x_t u_{t-\ell}} + 2\abs{\paren{\betanh - \hat{\beta}_{-h}}'\wt u_{t-\ell}} + \abs{\paren{\beta_h - \hat{\beta}_h}^2 x_t x_{t-\ell}}\\
        &+ 2\abs{(\beta_h - \hat{\beta}_h)x_{t-\ell} \wt'(\betanh - \hat{\beta}_{-h})} + \abs{\paren{\beta_{-h}-\hat{\beta}_{-h}}'\wt\paren{\beta_{-h}-\hat{\beta}_{-h}}'\boldsymbol{w}_{t-\ell}}\\
        &=: A(a)_i + A(a)_{ii} + A(a)_{iii} + A(a)_{iv} + A(a)_{v}.
    \end{align*}
    Each term is then bounded by
    \begin{align*}
        \frac{1}{T}\sum_{t=\ell+1}^T A(a)_{i} &\le 2\frac{1}{T}\sum_{t=\ell+1}^T \abs{\beta_h - \hat{\beta}_h} \max_{t\le T}\abs{x_t u_{t-\ell}} = O_p\paren{T^{-\frac{1}{2}} \paren{\frac{(\log p)^3}{T}}^{\frac{1}{2}}},\\
        \frac{1}{T}\sum_{t=\ell+1}^T A(a)_{ii} &\le 2 \abs{(\beta_{-h} - \hat{\beta}_{-h})\Wh'\uh/T} \\
                        &\le 2\norm{\beta_{-h} - \hat{\beta}_{-h}}_1 \norm{\Wh'\uh/T}_{\infty} = O_p\paren{\paren{\frac{(\log p)^3}{T}}^{\frac{2\delta-1}{2\delta}}} \\
        \frac{1}{T}\sum_{t=\ell+1}^T A(a)_{iii} &\le \abs{(\beta_h - \hat{\beta}_h)^2} \max_{t\le T} \abs{x_t x_{t-\ell}} = O_p(T^{-1 + \frac{1}{\bar{q}}})\\
        \frac{1}{T}\sum_{t=\ell+1}^T A(a)_{iv} &\le 2\abs{\beta_h - \hat{\beta}_h}\max_{t\le T}\abs{x_t} \norm{\boldsymbol{w}_t}_{\infty} \norm{\beta_{-h} - \hat{\beta}_{-h}}_1 = O_p\paren{T^{-\frac{1}{2}+\frac{1}{\bar{q}}} \paren{\frac{(\log p)^3}{T}}^{\frac{2\delta-1}{2\delta}}}\\
        \frac{1}{T}\sum_{t=\ell+1}^T A(a)_{v} &\le \norm{M \fg/\sqrt{T}}^2 = O_p\paren{\paren{\frac{(\log p)^3}{T}}^{\frac{2\delta -1}{2\delta}}},
    \end{align*}
    where I use the $\bar{q}-$th moment boundedness from Assumption \ref{asm:NED} \ref{asm:NED:2q_bdd}, $\sqrt{T}-$rate for $\abs{\beta_h-\hat{\beta}_h}$ from Theorem \ref{thm:single}, and the error bounds for $\abs{\beta_{-h} - \hat{\beta}_{-h}}$ from Theorem \ref{thm:errbd}.
    Note that $\abs{A(b)}$ can be bounded by $T^{1/\bar{q}}$ by Assumption \ref{asm:NED} \ref{asm:NED:2q_bdd} and Cauchy-Schwarz inequality.
    Combining with $\abs{A(b)}$, the first term in \eqref{eq:psipsiell_bound_A} can be bounded at the rate of $o(1)$ if $T^{-1/2+2/\bar{q}} = o(1)$, which is satisfied by Assumption \ref{asm:NED} \ref{asm:NED:2q_bdd}.
    
    Turning to $A(c)$, it can be similarly expanded by \eqref{eq:xonw} and the triangle inequality. We can write
    \begin{align*}
        \abs{\hat{v}_t\hat{v}_{t-\ell} - v_t v_{t-\ell}} &\le \abs{(\gamma_h - \hat{\gamma}_h)'\boldsymbol{w}_t v_{t-\ell} } + \abs{(\gamma_h - \hat{\gamma}_h)'\boldsymbol{w}_{t-\ell}v_t} + \abs{(\gamma_h - \hat{\gamma}_h)'\boldsymbol{w}_t (\gamma_h - \hat{\gamma}_h)'\boldsymbol{w}_{t-\ell}},
    \end{align*}
    and each term is bounded By
    \begin{align*}
        \frac{1}{T}\sum_{t=\ell+1}^T \abs{(\gamma_h - \hat{\gamma}_h)'\boldsymbol{w}_t v_{t-\ell} } \le \norm{\gamma_h - \hat{\gamma}_h}_1\norm{\Wh'\vh/T}_{\infty} &= O_p\paren{\paren{\frac{(\log p)^3}{T}}^{\frac{2\delta-1}{2\delta}}  } \\
        \frac{1}{T}\sum_{t=\ell+1}^T \abs{(\gamma_h - \hat{\gamma}_h)'\boldsymbol{w}_{t-\ell}v_t} \le \norm{\gamma_h - \hat{\gamma}_h}_1\norm{\Wh'\vh/T}_{\infty} &= O_p\paren{\paren{\frac{(\log p)^3}{T}}^{\frac{2\delta-1}{2\delta}}  }\\
        \frac{1}{T}\sum_{t=\ell+1}^T \abs{(\gamma_h - \hat{\gamma}_h)'\boldsymbol{w}_t (\gamma_h - \hat{\gamma}_h)'\boldsymbol{w}_{t-\ell}} \le \norm{M_{\widehat{J}_x}\fm/\sqrt{T}}^2 &= O_p\paren{\paren{\frac{(\log p)^3}{T}}^{\frac{2\delta -1}{2\delta}}}.
    \end{align*}
    Combining with $\abs{A(b')}$, which can be bounded analogously to $\abs{A(b)}$, the second term in \eqref{eq:psipsiell_bound_A} can be bounded at the rate of $o(1)$ if $\log p = o(T^{1-\frac{1}{\bar{q}}\frac{2\delta}{2\delta-1}})$, which is satisfied by Assumption \ref{asm:NED} \ref{asm:NED:2q_bdd} and \ref{asm:add} \ref{asm:add:logp}.
    The third term in \eqref{eq:psipsiell_bound_A} is bounded at the rate of $o(1)$ if both $A(a)$ and $A(b)$ are bounded at the rate of $o(1)$, and it trivially holds by the previous arguments. Therefore, $A = o_p(1)$.
    
    For part $B$, we use the triplex inequality. By Theorem \textbf{18.11} of \cite{davidson2021stochastic}, $\{\psi_t\psi_{t-\ell}\}$ is a $L_{\bar{q}/2}-$bounded $L_{q/2}-$NED, and by Theorem \textbf{18.6} of \cite{davidson2021stochastic}, $\{\psi_t \psi_{t-\ell} - E[\psi_t \psi_{t-\ell}]\}$ is a $L_{\bar{q}/2}-$bounded $L_{q/2}-$mixingale. Applying Lemma \ref{lem:triplex_mixingale} with $x = c_1((\log p)^3/T)^{1/2}$,
    \begin{align*}
        P&\paren{ \abs{\frac{1}{T} \sum_{t=\ell+1}^T \paren{\psi_t \psi_{t-\ell} - E[\psi_t \psi_{t-\ell}]}}<  c_1 \paren{\frac{(\log p)^3}{T}}^{1/2} }\\
        &\le\:\:C \exp\paren{-\frac{c_1^2}{288c_M^2}\log p} + C x^{-1} \exp\paren{- c_{\kappa} \log p} + C x^{-1} \exp(-c_M(\bar{q}/2-1)\log p),
    \end{align*}
    where all three terms are bounded by $\eta_T\to 0$ used to bound \eqref{eq:triplex:bd1} -- \eqref{eq:triplex:bd3} since the above right hand side terms grow at a slower rate, by $p^{-2}$. 
    Therefore, $B = o_p(1)$ and hence \eqref{eq:psipsiell_A_B} is $o_p(1)$.

    So far I've shown that \eqref{eq:var_oemgaK_omega} and \eqref{eq:var_omegahat_omegaK} are $o_p(1)$, and therefore we have $\abs{\hat{\Omega}_h - \Omega_h} = o_p(1)$. Combining with other elements in \eqref{eq:var_firsteq}, all the three terms are bounded by $o_p(1)$ and hence we accomplish $|\hat{\sigma}_h^2 -\sigma_h^2| = o_p(1)$.    
\end{proof}

\section{Proof of Lemmas}\label{sec:Appendix:Lemmas}
\subsection{Proof of Lemma \ref{lem:triplex_mixingale}} \label{sec:Appendix:Lemmas:triplex_mixingale}
\begin{proof}
    First, consider the dependence bound from \eqref{eq:triplex}. From Lyapunov inequality and the definition of mixingales,
    \begin{align*}
        E[E[X_t|\calf_{t-m}]-EX_t] \le \paren{E[\abs{E[X_t|\calf_{t-m}]-EX_t}^q]}^{1/q}\le c_t\rho_m,
    \end{align*}
    and it follows that
    \begin{align*}
        (6/\epsilon) \frac{1}{T}\sum_{t=1}^T E[E[X_t|\calf_{t-m}]-EX_t]
        \le (6/\epsilon) \frac{1}{T}\sumt c_t\rho_m \le (6\overline{c}_T/\epsilon) \rho_m.
    \end{align*}
    Now consider the tail bound in \eqref{eq:triplex}. From Hölder inequality,
    \begin{align*}
        \ex{\abs{X_t}\mathbbm{1}\{\abs{X_t}>M\}} 
        &\le \paren{\ex{\abs{X_t}^{\bar{q}}}}^{1/\bar{q}} \paren{\ex{\mathbbm{1}\{|X_t|>M\}}}^{1-1/\bar{q}}\\
        &\le \paren{\ex{\abs{X_t}^{\bar{q}}}}^{1/\bar{q}} \paren{P\paren{\abs{X_t}>M}}^{1-1/\bar{q}}\\
        &\le \paren{\ex{\abs{X_t}^{\bar{q}}}}^{1/\bar{q}} \paren{e^{-\bar{q}M}M_{X}(t)}^{1-1/\bar{q}}\\
        &\le C^{1/\bar{q}} C^{1-1/\bar{q}} \exp(-M(\bar{q}-1)) = C \exp(-M(\bar{q}-1)),
    \end{align*}
    where the second inequality comes from Lyapunov inequality and the third and fourth from Markov's inequality and Assumption \ref{asm:NED} \ref{asm:NED:2q_bdd} -- \ref{asm:NED:mgfs_exist}. It follows that
    \begin{align*}
        (15/\epsilon) \frac{1}{T}\sumt E\big[|X_t|\,\mathbbm{1}\{|X_t|>M\}\big] \le \frac{15}{\epsilon} C \exp(-M(\bar{q}-1)),
    \end{align*}
    and hence the desired result follows.
\end{proof}

\subsection{Proof of Lemma \ref{lem:mixingale}} \label{sec:Appendix:Lemmas:mixingale}
\begin{proof}
    (a) Recall that $v_{h,t}$ is the projection error from equation \eqref{eq:xonw}. Since $x_t$ and $w_{t,j}$ are $L_{2q}-$NED for all $j=1,\dots,p$, the proposed result follows from Theorem \textbf{18.8} in \cite{davidson2021stochastic}.

    (b) Recall that $\epsilon_{t,h} = \{u_{t,h},e_{t,h},v_{t,h}\}$. I start by showing the results for $\epsilon_{t,h} = u_{t,h}$.
    By Theorem \textbf{18.9} in \cite{davidson2021stochastic}, $\{w_{t,j}u_t\}$ are causal $L_{q}-$NED of size $-b$ with NED constants $\widetilde{d}_t$ and sequence $\widetilde{\zeta}_{m}$, where
    \begin{align}\label{eq:NEDd_zeta}
        \widetilde{d}_t =\max \{C^{1/2\bar{q}}d_t,\,d_t^2\},
        &&
        \widetilde{\zeta}_{m} = 2\zeta_m + \zeta_m^2,
    \end{align}
    with $d_t$ and $\zeta_m$ defined in Assumption \ref{asm:NED} \ref{asm:NED:NED}.
        
    I will proceed by showing that $\{w_{t,j}u_t-E[w_{t,j}u_t]\}$ is a causal mixingale for all $j=1,\dots,p$, following similar arguments in the proof of Theorem \textbf{18.6} in \cite{davidson2021stochastic}. Though $E[w_{t,j}u_t]=0$ by Assumption \ref{asm:NED} \ref{asm:NED:meanzero}, I maintain the mean extraction to ensure consistency across Lemma \ref{lem:mixingale} \ref{lem:mixingale:wu_mixingale} -- \ref{lem:mixingale:ww_mixingale}.
    For simplicity, write $E_s^t[.] = E[.|\calf_s^t]$ where $\calf_s^t=\sigma(\varepsilon_s,\dots,\varepsilon_t)$. Also let $k:=[q/2]$, the largest integer less or equal to $q/2$.
    We can start from the left hand side of equation \eqref{eq:mixingale}. By Minkowski's inequality,
    \begin{equation}\label{eq:wu_triangle}
    \begin{split}
        \left(E\left[ \left| E_{-\infty}^{t-m}[w_{t,j}u_t - E[w_{t,j}u_t]] \right|^q \right] \right)^{1/q} 
        \le & \left(E\left[ \left| E_{-\infty}^{t-m}[w_{t,j}u_t - E_{t-k}^{t}[w_{t,j}u_t]] \right|^q \right] \right)^{1/q}\\
        & + \left(E\left[ \left| E_{-\infty}^{t-m}[E_{t-k}^{t}[w_{t,j}u_t]] - E[w_{t,j}u_t] \right|^q \right] \right)^{1/q}
    \end{split}
    \end{equation}
    holds for all $j=1,\dots,p$. Consider the first term.
    \begin{align*}
        \left(E\left[ \left| E_{-\infty}^{t-m}[w_{t,j}u_t - E_{t-k}^{t}[w_{t,j}u_t]] \right|^q \right] \right)^{1/q} \le&
        \left(E\left[ E_{-\infty}^{t-m}\left[\left|w_{t,j}u_t - E_{t-k}^{t}[w_{t,j}u_t] \right|^q \right]\right] \right)^{1/q}\\
        =& \left(E\left[\left|w_{t,j}u_t - E_{t-k}^t[w_{t,j}u_t] \right|^{q}\right] \right)^{1/q}\\
        \le& \:\widetilde{d}_t\widetilde{\zeta}_k,
    \end{align*}
    where the first inequality is the conditional Jensen's inequality and the following equality is the law of iterated expectations (LIE). The last follows from the Definition \ref{def:NED}'s equation \eqref{eq:NED} and \eqref{eq:NEDd_zeta}.
    Now consider the second term in equation \eqref{eq:wu_triangle}. Because $E_{t-k}^{t}[w_{t,j}u_t] - E[w_{t,j}u_t]$ is a finite lag measurable function of $\varepsilon_{t-k},\dots,\varepsilon_t$, it is $\alpha-$mixing of the same size as $\{\varepsilon_t\}$. By the mixing inequality in Theorem \textbf{15.2} in \cite{davidson2021stochastic},
    \begin{align*}
        \left(E\left[ \left| E_{-\infty}^{t-m}[E_{t-k}^{t}[w_{t,j}u_t]] - E[w_{t,j}u_t]] \right|^q \right] \right)^{1/q} \le& \:6\alpha_k^{1/q-1/\bar{q}} \left(E\left[\left| E_{t-k}^t[w_{t,j}u_t]\right|^{\bar{q}} \right]\right)^{1/\bar{q}}\\
        \le& \: 6\alpha_k^{1/q-1/\bar{q}} \left(E \left[ E_{t-k}^t[\left|w_{t,j}u_t \right|^{\bar{q}}] \right] \right)^{1/\bar{q}}\\
        \le& \: 6\alpha_k^{1/q-1/\bar{q}} \left(E[|w_{t,j}u_t|^{\bar{q}}] \right)^{1/\bar{q}}\\
        \le& \: 6\alpha_k^{1/q-1/\bar{q}} C^{1/\bar{q}},
    \end{align*}
    where $\alpha_k$ is the mixing coefficient, the first inequality comes from the conditional Jensen's inequality, the second from LIE, and the last from Assumption \ref{asm:NED} \ref{asm:NED:2q_bdd} and Cauchy-Schwarz inequality.
    Combining both bounds, equation \eqref{eq:wu_triangle} is bounded by
    \begin{align*}
        \left(E\left[ \left| E_{-\infty}^{t-m}[w_{t,j}u_t - E[w_{t,j}u_t]] \right|^q \right] \right)^{1/q} \le& \: \widetilde{d}_t\widetilde{\zeta}_k + 6\alpha_k^{1/q-1/\bar{q}} C^{1/\bar{q}}\\
        \le& \: c_t\rho_{m},
    \end{align*}
    where $c_t = \max\{\widetilde{d}_t,C^{1/\bar{q}}\}$ and $\rho_m = 6\alpha_k^{1/q-1/\bar{q}} + 2\widetilde{\zeta}_k$. 
    The above proof applies to all $j=1,\dots,p$ because, for each $j$, $w_{t,j}$ shares the same constant $\{d_t\}$ and $\zeta_m$ by Assumption \ref{asm:NED} \ref{asm:NED:NED}.
    Therefore, $\{w_{t,j}u_t\}$ is a causal $L_q$ mixingale with a constant $c_t$ and sequence $\rho_m$ for all $j=1,\dots,p$.
    
    Because $u_{t,h}$ and $e_{t,h}$ share the same assumptions, the same arguments apply for $\epsilon_{t,h} =e_{t,h}$. For $\epsilon_{t,h} = v_{t,h}$, because we have the NED property by \ref{lem:mixingale:v_NED}, the rest of the proof follows the previous arguments.

    (d) Since both $w_{t,j}$ and $w_{t,k}$ are $L_{2q}-$NED for all $j\neq k$, the same argument in the proof of \ref{lem:mixingale:wu_mixingale} applies.
\end{proof}

\section{Further Details}\label{sec:Appendix:details}
\subsection{Finite Lag Approximation in Impulse Response Analysis}\label{sec:Appendix:details:finiteLP}
This part elaborates on the finite lag approximation in the impulse response analysis application in Section \ref{sec:overview}.
When using a finite number of lags $L$ in LP as in \eqref{eq:lpequation_slowfast}, the impulse response estimand from the finite lag order model necessarily includes a bias term that diminishes as $L$ grows. To effectively approximate the infinite lags with a finite number of lags, there should be a condition on how fast p should grow so that the bias term diminishes fast enough. I will fix the notations following \cite{plagborg2021local}. Denote the impulse response parameter from LP with infinite lag as $\beta_h^*$, and the finite-lag counterpart as $\beta_h^*(L)$. Define the projection residual $\tilde{x}_t = x_t - \sum_{\ell=0}^{\infty}{\gamma_{\ell}^*}'w_{t-\ell}$ and the finite lag counterparat as $\tilde{x}_t(L) = x_t - \sum_{\ell=0}^{L}{\gamma_{\ell}^*}(L)'w_{t-\ell}$.
Then the following lemma holds.
\begin{lemma}\label{lem:finiteLP}
    Assume the data $\{w_t\}$ are covariance stationary and non-deterministic, with an everywhere nonsingular spectral density matrix and absolutely summable Wold decomposition coefficients.
    Then, $\beta_h^* = \frac{E[\tilde{x}_t(L)^2]}{E[\tilde{x}_t]} \times \beta_h^*(L) + \frac{1}{E[\tilde{x}_t^2]}\{\sum_{\ell=0}^\infty \cov{y_{t+h},w_{t-\ell}}\paren{\gamma_{\ell}^* - \gamma_{\ell}^*(L)}\}$.
\end{lemma}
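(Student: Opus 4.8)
The plan is to reduce the identity to the Frisch--Waugh--Lovell (FWL) theorem applied to the two population projections and then to bookkeep the discrepancy between the two projection residuals. First I would record the FWL representations of the two estimands. Since $\beta_h^*$ is by definition the population projection coefficient on $x_t$ in the regression of $y_{t+h}$ on $x_t$ and $\{w_{t-\ell}\}_{\ell=0}^\infty$, and $\tildex$ is the residual of $x_t$ on that same span, orthogonality of $\tildex$ to every $w_{t-\ell}$ gives $\ex{\tildex^2}=\ex{\tildex x_t}$ and hence $\beta_h^* = \ex{\tildex y_{t+h}}/\ex{\tildex^2}$. The identical argument applied to the finite-lag regression yields $\beta_h^*(L) = \ex{\tilde{x}_t(L) y_{t+h}}/\ex{\tilde{x}_t(L)^2}$. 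Here the covariance-stationarity hypothesis is what makes all these second moments and the coefficients $\gamma_\ell^*$ depend only on the lag $\ell$ and not on $t$, while the everywhere-nonsingular spectral density together with non-determinism is what guarantees $\ex{\tildex^2}>0$, so that the division (and the statement's denominator) is legitimate.

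Next I would compute the residual difference. Setting $\gamma_\ell^*(L):=0$ for $\ell>L$ and subtracting the two defining equations gives $\tildex - \tilde{x}_t(L) = \sum_{\ell=0}^\infty \paren{\gamma_\ell^*(L)-\gamma_\ell^*}'w_{t-\ell}$, where the absolute summability of the Wold coefficients is used to ensure that the infinite-lag projection exists as an $L_2$ projection onto the closed span and that this series converges in $L_2$. Multiplying the first FWL identity through by $\ex{\tildex^2}$ and writing $\ex{\tildex y_{t+h}} = \ex{\tilde{x}_t(L) y_{t+h}} + \ex{\paren{\tildex - \tilde{x}_t(L)} y_{t+h}}$, the first piece is exactly $\beta_h^*(L)\,\ex{\tilde{x}_t(L)^2}$ by the second FWL representation, and the second piece becomes, after taking the covariance term by term, $\sum_{\ell=0}^\infty \paren{\gamma_\ell^*(L)-\gamma_\ell^*}'\cov{w_{t-\ell},y_{t+h}}$. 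Dividing by $\ex{\tildex^2}$ then produces the claimed decomposition, with the bias term identified through $\cov{w_{t-\ell},y_{t+h}}' = \cov{y_{t+h},w_{t-\ell}}$.

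The two places that need genuine care, rather than routine algebra, are the following. The first is justifying the interchange of the infinite sum with the expectation in the bias term; I would control this by dominated convergence, using absolute summability of the Wold coefficients to bound the partial sums of $\sum_\ell \norm{\cov{y_{t+h},w_{t-\ell}}}\,\norm{\gamma_\ell^*(L)-\gamma_\ell^*}$ uniformly, so that the series is absolutely convergent. The second is sign and orientation bookkeeping: the natural subtraction produces the coefficient difference in the order $\gamma_\ell^*(L)-\gamma_\ell^*$ together with the vector covariance $\cov{w_{t-\ell},y_{t+h}}$, so one must match these against the orientation $\gamma_\ell^*-\gamma_\ell^*(L)$ and $\cov{y_{t+h},w_{t-\ell}}$ written in the statement, being consistent throughout about which residual is differenced from which. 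I expect this bookkeeping, together with verifying $\ex{\tildex^2}>0$ from the spectral-density condition, to be the only nontrivial parts; everything else is the FWL identity plus an $L_2$-convergence argument.
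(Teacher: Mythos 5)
Your proposal is correct and follows essentially the same route as the paper's proof: the Frisch--Waugh representation of both estimands, adding and subtracting $\cov{y_{t+h},\tilde{x}_t(L)}$, and expanding the residual difference term by term, with your version supplying the $L_2$-convergence and $E[\tilde{x}_t^2]>0$ justifications that the paper leaves implicit. The sign issue you flag is genuine: with the paper's definitions of $\tilde{x}_t$ and $\tilde{x}_t(L)$ the algebra produces $\sum_{\ell=0}^{\infty}\cov{y_{t+h},w_{t-\ell}}'\paren{\gamma_\ell^*(L)-\gamma_\ell^*}$, so the orientation $\paren{\gamma_\ell^*-\gamma_\ell^*(L)}$ appearing in the statement (and in the paper's own one-line computation) carries a flipped sign --- a harmless discrepancy downstream, since the bias term $\phi_t(L)$ is subsequently controlled only in absolute value.
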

\begin{proof}
    By the Frisch-Waugh theorem, we can write
\begin{align*}
    \beta_h^* =&  \frac{\text{cov}(y_{t+h},\tildex)}{E(\tildex)}\\
    =& \frac{\cov{y_{t+h},\tildex(L)} + \paren{\cov{y_{t+h},\tildex} - \cov{y_{t+h},\tildex(L)}}}{E(\tildex)}\\
    =& \beta_h^*(L) \frac{E(\tildex(L))}{E(\tildex)} + \frac{1}{E(\tildex)}\paren{\sum_{\ell=0}^{\infty}\cov{y_{t+h},w_{t-\ell}}\paren{\gamma_\ell^* - \gamma_\ell^*(L)}}.
\end{align*}
\end{proof}
Denote the bias term as $\phi_t(L) := \frac{1}{E[\tilde{x}_t^2]}\{\sum_{\ell=0}^\infty \cov{y_{t+h},w_{t-\ell}}\paren{\gamma_{\ell}^* - \gamma_{\ell}^*(L)}\}$. The bias term shrinks as the projection coefficient $\gamma(L)^*$ get closer to $\gamma^*$. This term can be bounded by assuming that the projection coefficients for the lags later than $L$ decays fast enough. We have
\begin{align*}
    \phi_t(L) =& \frac{1}{E[\tilde{x}_t^2]}\{\sum_{\ell=0}^\infty \cov{y_{t+h},w_{t-\ell}}\paren{\gamma_{\ell}^* - \gamma_{\ell}^*(L)}\}\\
    \le& \frac{1}{E[\tilde{x}_t^2]} \sup_{\ell} \abs{\cov{y_{t+h},w_{t-\ell}}} \norm{\gamma_\ell^* - \gamma_\ell^*(L)}_1.
\end{align*}
Now assume $\sup_{\ell} \abs{\cov{y_{t+h},w_{t-\ell}}}<C$ for some constant $0<C< \infty$. For bounding $\norm{\gamma_\ell^* - \gamma_\ell^*(L)}_1$, by equation (2.14) of \cite{ing2020}, if $n|L|\le C(T/\log^3 p)^{1/2}$ where $n$ is the dimension of $w_t$, then there is some constant $0<C'<\infty$ such that $\norm{\gamma_\ell^* - \gamma_\ell^*(L)}_1 \le C' \sum_{\ell=L+1}^\infty |\gamma_\ell^*|$. By assuming $\sum_{\ell = L+1}^\infty \norm{\gamma_\ell^*}_1 = O_p(p^{-\tilde{c}})$, where $\underbar{c}<\tilde{c}$ is large enough so that $(T/(\log p)^3)^{1-1/(2\delta)}p^{-\tilde{c}} = o(1)$, the bias term can be bounded: $\phi_t(L) = o_p(1)$. Note that by Assumption \ref{asm:add} \ref{asm:add:logp}, $(T/(\log p)^3)^{1/2}p^{-\underbar{c}} = o(1)$. While $(T/(\log p)^3)^{1-1/(2\delta)}$ grows faster, a small increase in $\tilde{c}$ can still bound it to be $o(1)$, as the growth rate of $p$ is restricted at the exponential level.

\subsection{Further Definitions}
First, I introduce the constants defined in Section \ref{sec:overview}.
This entails the assumptions (A1) and (A2) \cite{ing2020} used in the proof of Theorem \ref{thm:errbd}, written in \eqref{eq:ingA1_wu}, \eqref{eq:ingA1_wv}, and \eqref{eq:ingA2}.
Also denote the constant $c$ on the right hand side bound of \eqref{eq:ingA1_wu} and \eqref{eq:ingA1_wv} as $c_1$, and the one in \eqref{eq:ingA2} as $c_2$. Further denote the constant $c$ in Assumption \ref{asm:add} \ref{asm:add:IngA5} as $c_3$. These constants are assumed to be some positive constants.
In equation \eqref{eq:def:argminHDAIC}, the parameter $M_T^*$ is defined as 
\begin{align}\label{eq:def:Mstar}
    M_T^* = \bar{c} \paren{\frac{T}{(\log p)^3}}^{1/2\delta},
\end{align}
with a constant $\bar{c}$, which is some small constant that satisfies $0<\bar{c}<\min\{\bar{\tau},c_3\}$. $\bar{\tau}$ is defined as
\begin{align*}
    \bar{\tau} =& \sup \tau\\
    =& \sup \left\{ \tau \Bigg| \tau>0, \lim\sup_{T\to \infty} \frac{\tau\, c_2}{\min_{|J|\le \tau(T/(\log p)^3)^{1/2}} \lambda_{\min}(\Gamma(J))} \le 1\right\},
\end{align*}
where $\Gamma(J) = E[\wt(J)\wt(J)']$ and $\lambda_{\min}(.)$ refers to the minimum eigenvalue of the matrix.

Second, in equation \eqref{eq:def:HDAIC}, the parameter $C^*$ is defined as
\begin{align}\label{eq:def:Cstar}
    C^* > V_0 := \frac{\bar{B}(c_1 + c_2)}{\sigma_{\epsilon}^2},
\end{align}
where $\sigma_{\epsilon}^2 = \min \{\sigma_e^2, \sigma_v^2\}$ with $\sigma_e^2= \lim_{T\to\infty} 1/T \sumt e_{t,h}^2$ and $\sigma_v^2 = \lim_{T\to\infty} 1/T \sumt v_{t,h}^2$. $\bar{B}$ is defined as
\begin{align*}
    \bar{B} > \frac[5pt]{1}{\lim\inf_{T\to\infty} \min_{|J|\le M_T^*} \lambda_{\min}(\Gamma(J))-c_2\bar{c}},
\end{align*}
where $\bar{c}$ is the constant appeared in \eqref{eq:def:Mstar}.

\section{Additional Empirical Results in Section \ref{sec:applications:subj_belief}}\label{sec:Appendix:empiric_subj}
\subsection{Further Results on Debiased LASSO}\label{sec:Appendix:empiric_subj_lasso}
In this subsection, I present additional results on the performance of debiased LASSO in the empirical applications discussed in Section \ref{sec:applications:subj_belief}. For comparison purposes, the baseline model from the main text is also included. The patterns observed in the main text persist across these alternative specifications. Debiased LASSO consistently yields highly persistent responses, particularly for inflation and inflation wedges. The unemployment response tends to be overstated, while the unemployment wedge exhibits a persistent negative skew over extended time frames. These observations raise the concerns regarding straightforward theoretical interpretation.

For a comprehensive view, the following Figures present the full array of debiased LASSO results as well as the other methods' estimates. This expanded set of findings further corroborates the notion that debiased LASSO may not be ideally suited for this particular empirical context, likely due to the low-dimensional nature of the models and the high persistence in the data.

\begin{figure}[hbt!]
    \begin{center}
    \caption{Baseline and VAR controls added models with 4 lags, all methods}
    \begin{subfigure}[b]{\textwidth}
        \centering
        \caption{Baseline model with 4 lags}
        \includegraphics[width=.9\textwidth]{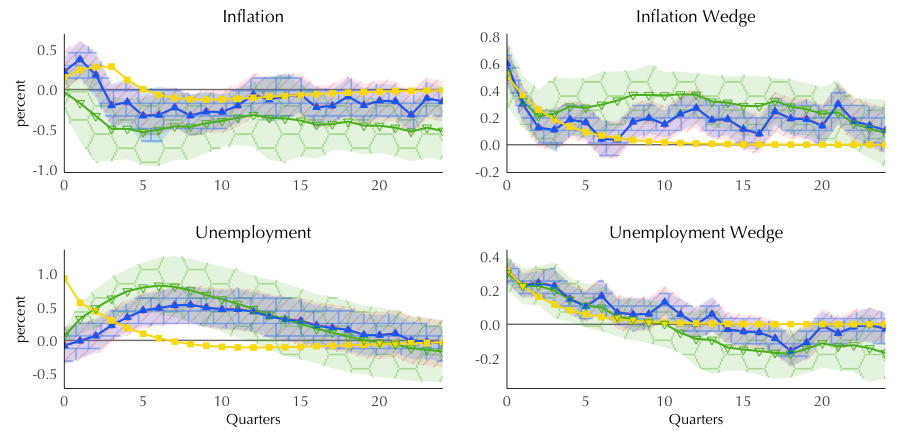}
        \label{fig:app_subj_base_ALL}
    \end{subfigure}
    \begin{subfigure}[b]{\textwidth}
        \centering
        \caption{VAR controls with 4 lags}
        \includegraphics[width=.9\textwidth]{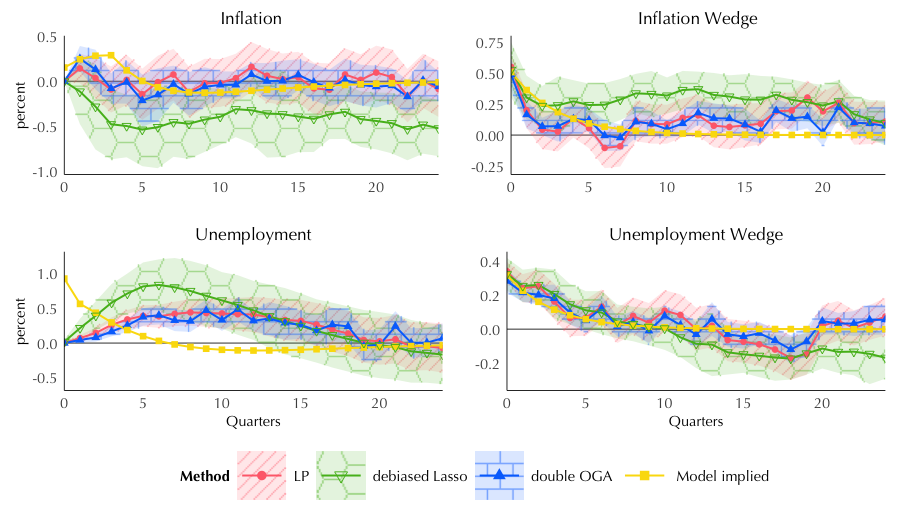}
        \label{fig:app_subj_var9_4lags_ALL}
    \end{subfigure}
    \end{center}
    \vspace*{-.4cm}
    {\footnotesize \textit{Notes.} This figure compares the baseline model with 4 lags (Panel (a)) and the VAR controls added model with 4 lags (Panel (b)) for inflation, inflation wedge, unemployment, and unemployment wedge. 
    Point estimates are shown as solid lines with circle markers for LP (red), inverted triangle markers for debiased LASSO (green), triangle markers for double OGA (blue), and square markers for model-implied (yellow). Shaded areas represent 90\% confidence intervals, with diagonal shading for LP, hexagonal shading for debiased LASSO, and brick-shaped shading for double OGA.}
\end{figure}

\begin{figure}[hbt!]
    \begin{center}
        \caption{VAR controls added models with longer lags, all methods}
    \begin{subfigure}[b]{\textwidth}
        \centering
        \caption{VAR controls with 8 lags}
        \includegraphics[width=.9\textwidth]{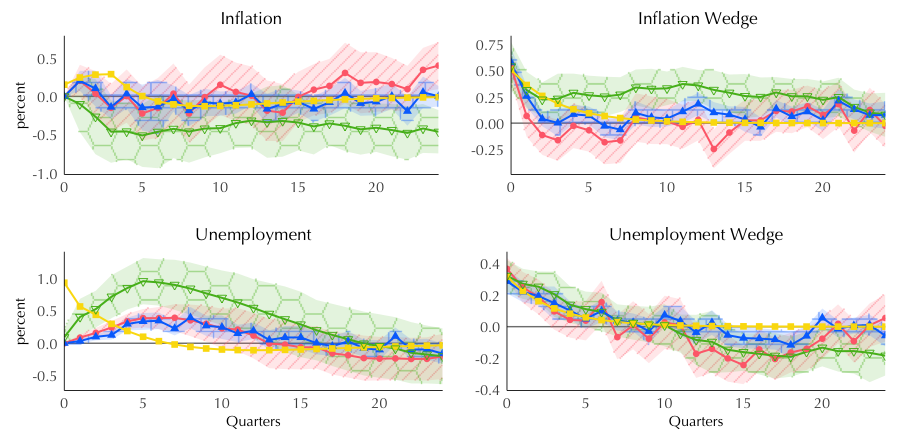}
        \label{fig:app_subj_var9_8lags_ALL}
    \end{subfigure}
    \begin{subfigure}[b]{\textwidth}
        \centering
        \caption{VAR controls with 10 lags}
        \includegraphics[width=.9\textwidth]{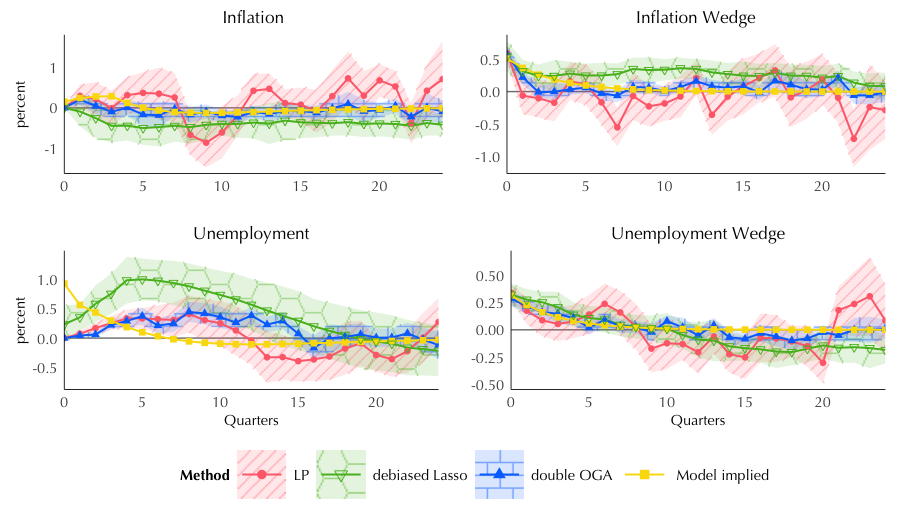}
        \label{fig:app_subj_var9_10lags_ALL}
    \end{subfigure}
    \end{center}
    \vspace*{-.4cm}
    {\footnotesize \textit{Notes.} This figure compares the VAR controls added model with 8 lags (Panel (a)) and the VAR controls added model with 10 lags (Panel (b)) for inflation, inflation wedge, unemployment, and unemployment wedge. 
    Point estimates are shown as solid lines with circle markers for LP (red), inverted triangle markers for debiased LASSO (green), triangle markers for double OGA (blue), and square markers for model-implied (yellow). Shaded areas represent 90\% confidence intervals, with diagonal shading for LP, hexagonal shading for debiased LASSO, and brick-shaped shading for double OGA.}
\end{figure}

\subsection{Further Results on Good and Bad States}\label{sec:Appendix:empiric_subj_goodnbad}
In this subsection, I present additional results omitted in the main text regarding the good and bad states analysis. Table \ref{tab:app_subj_goodnbad} shows the numerical estimates and standard errors for the results shown in Figure \ref{fig:app_subj_goodnbad}. Figures \ref{fig:app_subj_goodnbad_inflWedge} through \ref{fig:app_subj_goodnbad_unempWedge} illustrate the results with different dependent variables. 

For both belief wedges in Figure \ref{fig:app_subj_goodnbad_inflWedge} and \ref{fig:app_subj_goodnbad_unempWedge}, we observe positive responses immediately after the shock, which persist for a few periods, consistent with the model predictions shown in \ref{fig:app_subj_model}. In the case of unemployment, the results across various model specifications consistently exhibit a hump-shaped pattern, reinforcing the robustness of the findings across different scenarios.

\begin{table}[hbt!]
    \begin{center}
    \caption{Impulse response of inflation to a pessimism shock in good and bad states}\label{tab:app_subj_goodnbad}
    \begin{subtable}[t]{\textwidth} 
        \caption{Impulse response of inflation to a pessimism shock in good states}\vspace*{.2cm}
        \resizebox{\textwidth}{!}{%
        \begin{tabular}{lllllllllll}\toprule \midrule
        Model     & Horizons (quarters) & 0       & 1       & 2       & 3         & 4         & 5       & 6          \\\midrule
    \multirow{4}{*}{Baseline}    & LP                  & 0.102   & 0.288** & 0.051   & 0.061     & 0.023     & -0.194  & -0.103\\
                                 &                     & (0.176) & (0.175) & (0.180) & (0.180)   & (0.182)   & (0.184) & (0.187)\\
                                 & double OGA          & 0.077   & 0.248*  & 0.009   & 0.019     & 0.023     & -0.184  & -0.109\\
                                 &                     & (0.145) & (0.169) & (0.177) & (0.154)   & (0.158)   & (0.167) & (0.177)\\\cmidrule(lr){1-9}
    \multirow{4}{*}{VAR, 4 lags} & LP                  & 0       & 0.223*  & 0.137   & -0.544*** & -0.333    & -0.416* & -0.167\\
                                 &                     & (0)     & (0.155) & (0.206) & (0.236)   & (0.269)   & (0.282) & (0.287)\\
                                 & double OGA          & 0       & 0.102   & 0.050   & -0.621*** & -0.460*** & -0.443* & -0.594***\\
                                 &                     & (0)     & (0.187) & (0.265) & (0.142)   & (0.177)   & (0.305) & (0.215)\\\cmidrule(lr){1-9}
    \multirow{4}{*}{VAR, 8 lags} & LP                  & 0       & 0.298*  & -0.147  & -0.910*** & -0.354    & -0.187  & 0.462*\\
                                 &                     & (0)     & (0.198) & (0.255) & (0.307)   & (0.395)   & (0.397) & (0.360)\\
                                 & Double OGA          & 0       & 0.081   & -0.058  & -0.575*** & -0.488*** & -0.463* & -0.475***\\
                                 &                     & (0)     & (0.152) & (0.253) & (0.158)   & (0.171)   & (0.344) & (0.238)\\ \bottomrule
    \end{tabular}
    }
\end{subtable}%
\vspace*{.3cm}
\begin{subtable}[t]{\textwidth} 
    \caption{Impulse response of inflation to a pessimism shock in bad states}\vspace*{.2cm}
    \resizebox{\textwidth}{!}{%
    \begin{tabular}{lllllllll} \toprule \midrule
        & Horizons (quarters) & 0        & 1        & 2        & 3         & 4         & 5         & 6         \\ \midrule
\multirow{4}{*}{Baseline}    & LP        & 0.514*** & 0.579*** & 0.429**  & -0.588*** & -0.563*** & -0.569*** & -0.589*** \\
        &                     & (0.242)  & (0.240)  & (0.246)  & (0.245)   & (0.246)   & (0.249)   & (0.253)   \\
        & double OGA          & 0.487*** & 0.496*** & 0.368*** & -0.643*** & -0.528*** & -0.538**  & -0.555*** \\
        &                     & (0.229)  & (0.146)  & (0.158)  & (0.18)    & (0.137)   & (0.286)   & (0.257)   \\ \cmidrule(lr){1-9}
\multirow{4}{*}{VAR, 4 lags} & LP        & 0        & 0.132    & 0.051    & 0.153     & 0.173     & -0.063    & 0.022 \\
        &                     & (0)      & (0.111)  & (0.151)  & (0.172)   & (0.195)   & (0.205)   & (0.208)   \\
        & double OGA          & 0        & 0.209*** & 0.111    & 0.046     & 0.167     & -0.061    & 0.045     \\
        &                     & (0)      & (0.085)  & (0.111)  & (0.118)   & (0.146)   & (0.164)   & (0.174)   \\\cmidrule(lr){1-9}
\multirow{4}{*}{VAR, 8 lags} & LP        & 0     & 0.196*   & 0.318*   & 0.191     & 0.147     & -0.425*   & -0.788*** \\
        &                     & (0)      & (0.145)  & (0.198)  & (0.238)   & (0.306)   & (0.310)    & (0.281)   \\
        & Double OGA          & 0        & 0.156*** & 0.080     & 0.014     & 0.179*    & -0.138    & 0.013     \\
        &                     & (0)      & (0.055)  & (0.107)  & (0.105)   & (0.139)   & (0.148)   & (0.136)   \\ \bottomrule
\end{tabular}
    }
\end{subtable}%
\end{center}
{\footnotesize \textit{Notes.} This table reports the impulse response of inflation to a pessimism shock in both good and bad states across different model specifications. Panel (a) presents the results for good states, while panel (b) shows the results for bad states. Each row corresponds to a different model: the baseline model, VAR with 4 lags, and VAR with 8 lags, using both LP and the Double OGA. The impulse responses are reported for horizons ranging from 0 to 6 quarters, with standard errors in parentheses. Statistically significant results at the $5\%$ level are indicated by ***, $10\%$ by **, and $15\%$ by *.}
\end{table}

\begin{figure}[hbt!]
    \begin{center}
        \caption{Impulse responses of inflation wedge with good and bad states}
        \label{fig:app_subj_goodnbad_inflWedge}
        \includegraphics[width=.9\textwidth]{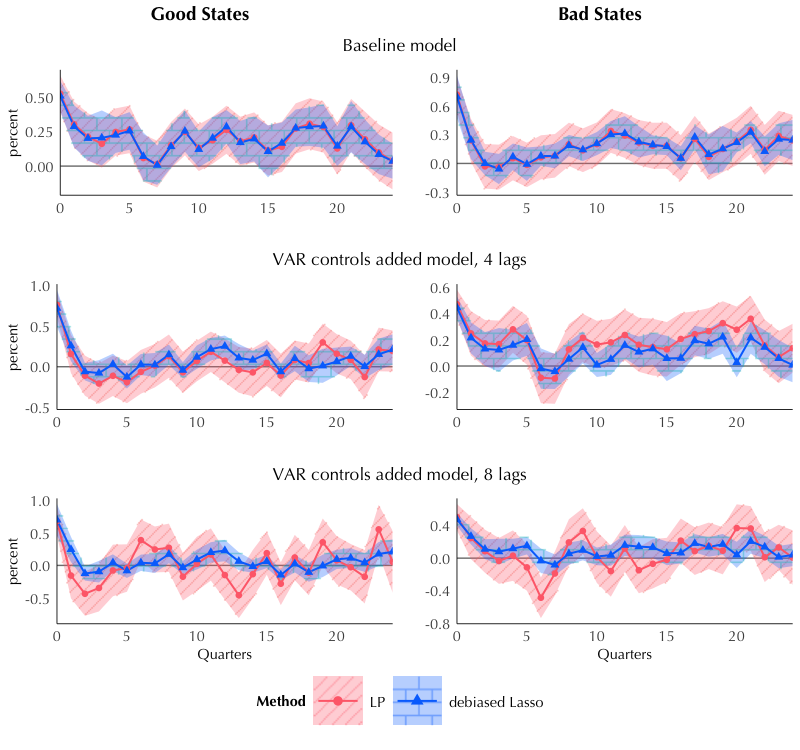}
    \end{center}
    \vspace*{-.4cm}
    {\footnotesize \textit{Notes.} This figure illustrates the impulse responses of inflation wedge to a belief shock in both good and bad states under various model specifications. The good states (left panels) and bad states (right panels) are identified by NBER recession indicators. The model specifications include a baseline model and models with added VAR controls (4 lags and 8 lags).
    Point estimates are shown as solid lines with circle markers for LP (red) and triangle markers for double OGA (blue). Shaded areas represent 90\% confidence intervals, with diagonal shading for LP and brick-shaped shading for double OGA.}
\end{figure}

\begin{figure}[hbt!]
    \begin{center}
        \caption{Impulse responses of unemployment with good and bad states}
        \label{fig:app_subj_goodnbad_unemp}
        \includegraphics[width=.9\textwidth]{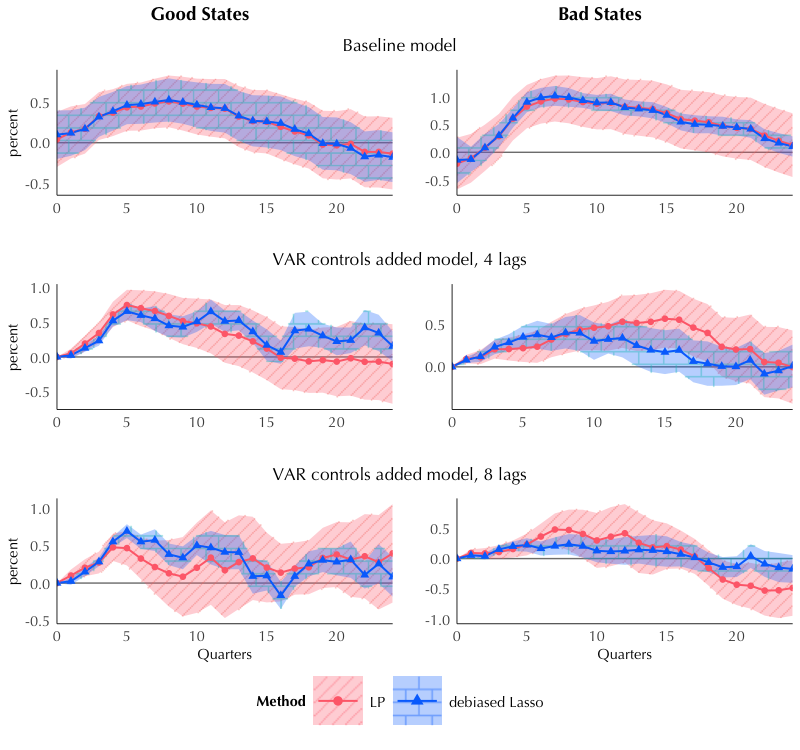}
    \end{center}
    \vspace*{-.4cm}
    {\footnotesize \textit{Notes.} This figure illustrates the impulse responses of unemployment to a belief shock in both good and bad states under various model specifications. The good states (left panels) and bad states (right panels) are identified by NBER recession indicators. The model specifications include a baseline model and models with added VAR controls (4 lags and 8 lags).
    Point estimates are shown as solid lines with circle markers for LP (red) and triangle markers for double OGA (blue). Shaded areas represent 90\% confidence intervals, with diagonal shading for LP and brick-shaped shading for double OGA.}
\end{figure}

\begin{figure}[hbt!]
    \begin{center}
        \caption{Impulse responses of unemployment wedge with good and bad states}
        \label{fig:app_subj_goodnbad_unempWedge}
        \includegraphics[width=.9\textwidth]{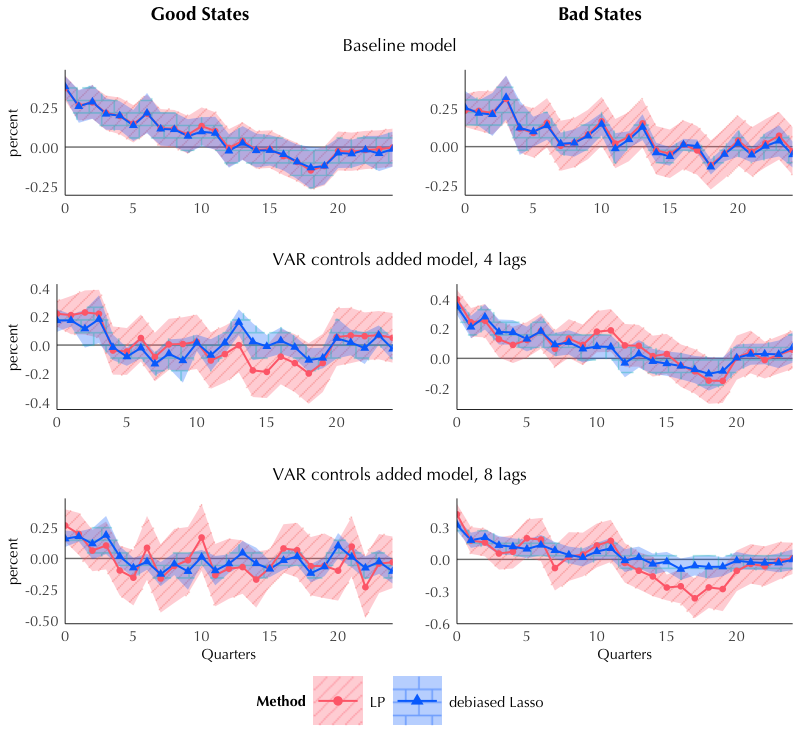}
    \end{center}
    \vspace*{-.4cm}
    {\footnotesize \textit{Notes.} This figure illustrates the impulse responses of unemployment wedge to a belief shock in both good and bad states under various model specifications. The good states (left panels) and bad states (right panels) are identified by NBER recession indicators. The model specifications include a baseline model and models with added VAR controls (4 lags and 8 lags).
    Point estimates are shown as solid lines with circle markers for LP (red) and triangle markers for double OGA (blue). Shaded areas represent 90\% confidence intervals, with diagonal shading for LP and brick-shaped shading for double OGA.}
\end{figure}

\end{document}